\newcommand{\BobbyNote}[1]{\sidenote{Bobby}{#1}}
\providecommand{\Appendix}{}
\renewcommand{\Appendix}[2][?]{%
	\refstepcounter{section}%
	\vspace{\parskip}%
	{\flushright\Large\bfseries\appendixname\ \thesection: #1}%
	\vspace{\baselineskip}%
}
\renewcommand{\appendix}{%
	\renewcommand{\section}{\secdef\Appendix\Appendix}%
	\renewcommand{\thesection}{\Alph{section}}%
	\setcounter{section}{0}%
}
\title{{\bf Multi-Armed Bandits in Metric Spaces}\footnote{The conference version~\cite{LipschitzMAB-stoc08} of this paper has appeared in ACM STOC 2008. This is the full version.}}
\author{
{\Large Robert Kleinberg}\thanks{Computer Science Department, Cornell University, Ithaca, NY 14853. Email:~{\tt rdk at cs.cornell.edu}. Supported in part by NSF awards CCF-0643934 and  CCF-0729102.}
\and
{\Large Aleksandrs Slivkins}\thanks{
Microsoft Research, Mountain View, CA 94043.
Email:~{\tt slivkins at microsoft.com}.
Parts of this work were done while the author was a postdoctoral research associate at Brown University.}
\and
{\Large Eli Upfal}\thanks{
Computer Science Department, Brown University, Providence, RI 02912. Email:~{\tt eli at cs.brown.edu}.
Supported in part by NSF awards CCR-0121154 and DMI-0600384, and ONR Award N000140610607.
}}
\date{November 2007\\Revised: April 2008, September 2008}
\begin{document}

\newcommand{\B}{\ensuremath{\mathcal{B}}}

\newcommand{\standardMAB}{standard Lipschitz MAB problem}
\newcommand{\quasimetric}{quasi-distance}
\newcommand{\noisyNN}{nearest-neighbor MAB problem}

\newcommand{\DBL}{\mathtt{DBL}} 	
\newcommand{\COV}{\mathtt{COV}}		
\newcommand{\MinDBL}{\mathtt{MinDBL}}	
\newcommand{\MinCOV}{\mathtt{MinCOV}}	
\newcommand{\RegretDim}{\mathtt{DIM}}	
\newcommand{\MaxMinCOV}{\mathtt{MaxMinCOV}}	
\maketitle

\begin{abstract}
In a multi-armed bandit problem, an online algorithm
chooses from a set of strategies in a sequence of
$n$ trials so as to maximize the total payoff of the
chosen strategies.
While the performance of bandit algorithms with
a small finite strategy set is quite well understood,
bandit problems with large strategy sets are still a
topic of very active
investigation, motivated by practical applications such as
online auctions and web advertisement.
The goal of such research is to identify broad and natural
classes of strategy sets and payoff functions which enable
the design of efficient solutions.

In this work we study a very general setting for the
multi-armed bandit problem in which the strategies form
a metric space, and the payoff function satisfies a Lipschitz
condition with respect to the metric. We refer to this
problem as the {\it Lipschitz MAB problem}. We present a
solution for the multi-armed problem in this setting.
That is, for every metric space $(L,X)$
we define an isometry invariant $\MaxMinCOV(X)$ which
bounds from below the performance of Lipschitz
MAB algorithms for $X$, and we present an algorithm which
comes arbitrarily close to meeting this bound. Furthermore, our
technique gives even better results for benign payoff functions.
\end{abstract}

\section{Introduction}
\label{sec:intro}

\newcommand{\willcite}[1][Cite]{{[\sc #1]}}

In a multi-armed bandit problem, an online algorithm
must choose from a set of strategies in a sequence of
$n$ trials so as to maximize the total payoff of the
chosen strategies.
These problems are the principal theoretical
tool for modeling the exploration/exploitation tradeoffs
inherent in  sequential
decision-making under uncertainty.  Studied
intensively for the last three
decades~\cite{Berry-book,CesaBL-book,Gittins-index},
bandit problems are having an increasingly visible
impact on computer science because of their diverse
applications including online auctions, adaptive
routing, and the theory of learning in games.
The performance of a multi-armed bandit algorithm is often evaluated in terms of its {\it regret}, defined as the gap between the expected payoff of the algorithm and that of an optimal strategy.
While the performance of bandit algorithms with
a small finite strategy set is quite well understood,
bandit problems with exponentially or infinitely
large strategy sets are still a topic of very active
investigation~\cite{Agrawal-bandits-95,bandits-exp3,AuerOS/07,Bobby-stoc04-journal-version,sundaram-bandits-92,Cope/06,DaniHK/07,DaniH/06,FlaxmanKM/05,KakadeKL/07,Bobby-nips04,Bobby-thesis,McMahanB/04}.

Absent any assumptions about the
strategies and their payoffs, bandit problems
with large strategy sets
allow for
no
non-trivial solutions --- any
multi-armed bandit algorithm
performs as badly, on some inputs, as random guessing.
But in most applications it is natural to assume
a structured class of payoff functions, which
often enables the design of efficient learning
algorithms~\cite{Bobby-thesis}.  In this
paper, we consider a broad and natural class of problems
in which the structure is induced by a metric
on the space of strategies.  While bandit problems have been studied
in a few specific metric spaces (such as a one-dimensional
interval)
~\cite{Agrawal-bandits-95,AuerOS/07,Cope/06,Bobby-nips04,yahoo-bandits07},
the case of general metric spaces has not been treated before,
despite being an extremely natural setting for bandit
problems.  As a motivating example,
consider the problem faced by a website choosing from
a database of thousands of banner ads to display to
users, with the aim of maximizing the
click-through rate of the ads displayed by matching ads
to users' characterizations and the web content that they
are currently watching.
Independently experimenting with each
advertisement is infeasible, or at least highly
inefficient, since the number of ads is too large.
Instead, the advertisements are
usually organized into a taxonomy based on metadata
(such as the category of product being advertised) which
allows a similarity measure to be defined.  The
website can then attempt to optimize its learning
algorithm by generalizing from experiments with
one ad to make inferences about the performance
of similar ads~\cite{yahoo-bandits07, yahoo-bandits-icml07}.
Abstractly, we have a bandit
problem of the following form: there is a strategy
set $X$, with an unknown payoff function
$\mu \,:\, X \rightarrow [0,1]$ satisfying a set
of predefined constraints of the form
$|\mu(u) - \mu(v)| \leq \delta(u,v)$ for some
$u,v \in X$ and $\delta(u,v) > 0$.  In each
period the algorithm chooses a point
$x \in X$ and observes an independent random sample from a
payoff distribution whose expectation is  $\mu(x)$.

A moment's thought reveals that this abstract problem
can be regarded as a bandit problem in a metric space.
Specifically, if $L(u,v)$ is defined to be the
infimum, over all finite sequences
$u = x_0, x_1, \ldots, x_k = v$ in $X$,
of the quantity $\sum_i \delta(x_i,x_{i+1})$,
then $L$ is a metric\footnote{More
precisely, it is a pseudometric  because
some pairs of distinct points $x,y \in X$ may satisfy
$L(x,y)=0$.} and the constraints $|\mu(u)-\mu(v)|
< \delta(u,v)$ may be summarized by stating that
$\mu$ is a Lipschitz function (of Lipschitz constant
$1$) on the metric space $(L,X)$.
We refer to this problem as the
\emph{Lipschitz MAB problem} on $(L,X)$, and
we refer to the ordered triple $(L,X,\mu)$ as
an \emph{instance} of the Lipschitz MAB problem.\footnote{
When the metric space $(L,X)$ is understood from
context, we may also refer to $\mu$ as an instance.}

\xhdr{Prior work.}
\label{sec:intro.priorwork}
While our work is the first to treat the Lipschitz MAB
problem in general metric spaces, special cases of the
problem are implicit in prior work on the continuum-armed
bandit problem~\cite{Agrawal-bandits-95,AuerOS/07,Cope/06,Bobby-nips04}
--- which corresponds to the space $[0,1]$ under the metric
$L_d(x,y) = |x-y|^{1/d}$, $d \geq 1$ --- and the experimental work
on ``bandits for taxonomies''~\cite{yahoo-bandits07},
which corresponds to the
case in which $(L,X)$ is a tree metric.
Before describing our results in greater detail, it is
helpful to put them
in context by recounting the nearly optimal bounds for the
one-dimensional continuum-armed bandit problem, a problem
first formulated by R.~Agrawal in 1995~\cite{Agrawal-bandits-95}
and recently solved (up to logarithmic factors)
by various authors~\cite{AuerOS/07,Cope/06,Bobby-nips04}.
In the following theorem
and throughout this paper, the \emph{regret} of a multi-armed
bandit algorithm $\A$ running on an instance $(L,X,\mu)$ is
defined to be the function $R_{\A}(t)$ which measures
the difference between its expected payoff at time $t$
and the quantity $t\,\sup_{x\in X} \mu(x)$.
The latter quantity is the expected payoff of always playing
a strategy $x \in \argmax \mu(x)$
if such strategy exists.

\begin{theorem}[\cite{AuerOS/07,Cope/06,Bobby-nips04}] \label{thm:intro.1d}
For any $d \geq 1$, consider the Lipschitz MAB
problem on $(L_d,[0,1])$. There is an algorithm $\mathcal{A}$ whose regret on any instance $\mu$ satisfies
	$ R_\A(t) = \tilde{O}(t^\gamma)$
for every $t$, where $\gamma = \frac{d+1}{d+2}$.
No such algorithm exists for any $\gamma < \frac{d+1}{d+2}$.
\end{theorem}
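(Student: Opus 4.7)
The plan is to prove the upper bound by a standard discretization argument and the lower bound by an information-theoretic ``needle-in-haystack'' construction, with the exponent $\gamma = (d+1)/(d+2)$ emerging in both cases from balancing the same two quantities.

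For the upper bound, I would pick an integer $K$ (to be optimized), partition $[0,1]$ into $K$ intervals of length $1/K$, and choose one representative point from each interval. On this finite set of $K$ arms, I would run a standard stochastic multi-armed bandit algorithm such as UCB1, whose regret against the best of $K$ arms after $t$ pulls is $\tilde{O}(\sqrt{Kt})$. The Lipschitz hypothesis with respect to $L_d(x,y) = |x-y|^{1/d}$ guarantees that within each interval of length $1/K$, the payoff function $\mu$ varies by at most $(1/K)^{1/d}$, so the best of the $K$ representatives has expected payoff within $K^{-1/d}$ of $\sup_{x\in[0,1]} \mu(x)$. This yields a discretization regret of at most $t\,K^{-1/d}$, for a total regret of $\tilde{O}(\sqrt{Kt} + t\,K^{-1/d})$. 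Setting these two terms equal gives $K = t^{d/(d+2)}$ and total regret $\tilde{O}(t^{(d+1)/(d+2)})$, as desired. This step is essentially routine.

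For the lower bound, I would use the standard KL-divergence / Le Cam-style argument adapted to metric spaces. With the same choice $K = t^{d/(d+2)}$ and $\epsilon = K^{-1/d} = t^{-1/(d+2)}$, I would construct $K+1$ instances on $[0,1]$: a ``flat'' instance with $\mu \equiv 1/2$, and for each $j = 1,\dots,K$ a ``bump'' instance $\mu_j$ that equals $1/2$ outside a small neighborhood of the $j$-th grid point and peaks at height $1/2 + \epsilon$ there, with the bump shape carefully made $1$-Lipschitz with respect to $L_d$ (e.g.\ a tent function of the form $\max(0,\epsilon - c\,|x-x_j|^{1/d})$). The payoff distribution at each point is Bernoulli with the corresponding mean. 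Then I would argue that any algorithm making $t$ queries chooses some bump $j$ in expectation at most $t/K$ times under the flat instance; by the standard KL argument, the total variation distance between the observation distributions under the flat instance and under $\mu_j$ is $O(\epsilon \sqrt{t/K})$, which is $O(1)$ by our choice of parameters. Averaging over a uniformly random $j$, the algorithm cannot identify the bump with constant probability, and therefore incurs regret $\Omega(\epsilon \, t) = \Omega(t^{(d+1)/(d+2)})$ on some $\mu_j$.

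The hard part is the lower bound, and within it the most delicate steps are (i) verifying that the bump functions are genuinely Lipschitz in the $L_d$ metric (which is why the exponent $1/d$ appears in the bump shape, and why the bumps can have width $1/K$ while reaching height $K^{-1/d}$), and (ii) turning the information-theoretic indistinguishability into the stated asymptotic statement uniformly in $\gamma < (d+1)/(d+2)$ — for which I would apply the standard reduction showing that any algorithm with regret $O(t^\gamma)$ for every instance would, after $t$ steps, yield enough information to distinguish the flat instance from a random bump instance, contradicting the KL bound whenever $\gamma < (d+1)/(d+2)$. The upper bound is essentially an exercise in combining known ingredients, but together they pinpoint $\gamma = (d+1)/(d+2)$ as the exact exponent.
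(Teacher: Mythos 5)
Your upper bound is exactly the argument the paper has in mind (and generalizes in Lemma~\ref{lm:naive-alg}): a uniform $K$-point mesh has per-step discretization error $K^{-1/d}$ under $L_d$, \textsc{ucb1} contributes $\tilde O(\sqrt{Kt})$, and balancing gives $K=t^{d/(d+2)}$ and regret $\tilde O(t^{(d+1)/(d+2)})$; the unknown horizon is handled by the standard doubling trick over phases of length $2^i$. Your single-scale KL computation for the lower bound is also the right core estimate --- it is essentially the per-level ``needle in haystack'' lemma (Lemma~\ref{lem:haystack}) that the paper proves via the machinery of~\cite{bandits-exp3}, and your parameter choices ($\epsilon=t^{-1/(d+2)}$, bump shape $\max(0,\epsilon-|x-x_j|^{1/d})$ to respect the $L_d$ Lipschitz condition) are correct.

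The gap is in step (ii), where you convert the single-scale estimate into the theorem's lower-bound statement. Your construction ties $K$ and $\epsilon$ to one fixed horizon $t$, so it produces, for each $t$, a hard instance $\mu_{j,t}$ that \emph{depends on} $t$. This shows the minimax regret at horizon $t$ is $\Omega(t^{(d+1)/(d+2)})$, but it does not rule out an algorithm satisfying $R_\A(t)=O_\mu(t^\gamma)$ for every fixed instance $\mu$ with an instance-dependent constant --- which is the reading the theorem intends (compare the definition of regret dimension and the subscript in $O_{\mathcal I}$ in Theorem~\ref{thm:intro.pmo}). Each of your instances $\mu_{j,t}$ is individually easy at sufficiently large horizons, so no contradiction arises from an instance-dependent constant. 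To close this you need a single payoff function containing nested bumps at all scales: a decreasing sequence of balls $B_1\supset B_2\supset\cdots$ with radii $r_i\to 0$, associated horizons $t_i=r_i^{-\Theta(1)}$, payoff $\mu=\sum_i f_{B_i}$, and then a Borel--Cantelli argument showing that, almost surely over the random choice of nesting, the algorithm suffers regret at least $c\,t_i^{(a+1)/(a+2)}$ (for $a$ slightly below $d$) at infinitely many of the times $t_i$. This multi-scale recursion plus Borel--Cantelli is precisely what the paper's Theorem~\ref{thm:mincov-lb} does in general metric spaces, and what the cited one-dimensional lower bound does for $(L_d,[0,1])$; your ``standard reduction'' as written does not supply it.
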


\OMIT{ 
For any $d \geq 1$,
there is an algorithm $\mathcal{A}$ for the Lipschitz MAB
problem on $(L_d,[0,1])$ whose regret on any instance
$\mu$ satisfies
\[
R_{\A}(t) = O \left( t^{\frac{d+1}{d+2}} \log^{\frac{1}{d+2}}(t) \right).
\]
For any $\gamma < \frac{d+1}{d+2}$ there does not exist
an algorithm $\mathcal{A}$ for the Lipschitz MAB problem
on $(L_d,[0,1])$ which satisfies $R_{\A}(t) = O(t^{\gamma})$
for every $t$ and every instance $\mu.$
} 

In fact, if the time horizon $t$ is known in advance,
the upper bound in the theorem can be achieved
by an extremely na\"{i}ve algorithm which simply
uses an optimal $k$-armed bandit algorithm (such as the
\textsc{ucb1} algorithm~\cite{bandits-ucb1}) to choose
strategies from the set
	$S = \{0, \tfrac1k,\, \tfrac2k,\,\ldots,1\}$,
for a suitable choice of the parameter $k$.
While the regret bound in Theorem~\ref{thm:intro.1d} is
essentially optimal for the Lipschitz MAB problem
in $(L_d,[0,1])$, it is strikingly odd that it is
achieved by such a simple algorithm.  In particular,
the algorithm approximates the strategy set by a
fixed mesh $S$ and does not refine
this mesh as it gains information about the location
of the optimal strategy.  Moreover, the metric
contains seemingly useful proximity information,
but the algorithm ignores this information after
choosing its initial mesh.  Is this really the best
algorithm?

A closer examination of the lower bound proof
raises further reasons for suspicion: it is
based on a contrived, highly singular payoff function
$\mu$ that alternates between being constant on some
distance scales and being very steep on other (much smaller)
distance scales, to create a multi-scale ``needle in haystack''
phenomenon which nearly obliterates the usefulness
of the proximity information contained in the
metric $L_d$.  Can we expect algorithms to do better
when the payoff function is more benign?  For the
Lipschitz MAB problem on $(L_1,[0,1])$, the question
was answered affirmatively in~\cite{Cope/06, AuerOS/07}
for some classes of instances,
with algorithms that are tuned to the specific classes.

\OMIT{ 
For the
Lipschitz MAB problem on $(L_1,[0,1])$, the question
was answered affirmatively by Cope~\cite{Cope/06} and
an even stronger affirmative answer was provided by
Auer \emph{et. al.}~\cite{AuerOS/07}.  For example,
a special case of the main result in~\cite{AuerOS/07}
shows that if the payoff function $\mu$ is twice
differentiable with finitely many maxima each having
a nonzero second derivative, then regret $R_{\A}(t) =
O \left( \sqrt{t \log t} \right)$ can be achieved by
modifying the na\"{i}ve algorithm described above
to sample uniformly at random from the interval
$\left[ j/k, (j+1)/k \right)$ instead
of deterministically playing $j/k$.
Our Theorem~\ref{thm:intro.zooming}, stated below,
reveals a similar phenomenon in general metric
spaces: it is possible to define algorithms whose
regret outperforms the per-metric optimal algorithm
when the input instance is sufficiently benign.
} 

\xhdr{Our results and techniques.}
\label{sec:intro.results}
In this paper we consider the Lipschitz MAB problem on arbitrary metric spaces.
We are concerned with the following two main questions motivated by the discussion above:
\begin{itemize}
\item[(i)] What is the best possible bound on regret for a given metric space?

\item[(ii)] Can one take advantage of benign payoff functions?
\end{itemize}
In this paper we give a complete solution to (i), by describing for every metric space $X$ a family of algorithms which come arbitrarily close to achieving the best possible regret bound for $X$. We also give a satisfactory answer to (ii); our solution is arbitrarily close to optimal in terms of the zooming dimension defined below.
In fact, our algorithm for (i) is an extension of the algorithmic technique used to solve (ii).

\OMIT{ 
Our main technical contribution is a new algorithm,
the \emph{zooming algorithm},
that combines the upper confidence bound
technique used in earlier bandit algorithms
such as \textsc{ucb1} with a novel \emph{adaptive
refinement} step that uses past history to zoom
in on regions near the apparent maxima of $\mu$
and to explore a denser mesh of strategies in
these regions.  This algorithm is a key ingredient
in our design of an optimal bandit algorithm for
every metric space $(L,X)$.  Moreover,
we show that the zooming algorithm can perform significantly
better on benign problem instances.  That is, for every instance
$(L,X,\mu)$ we define a parameter called the \emph{zooming
dimension} which is often significantly smaller than
$\MaxMinCOV(X)$, and we bound the algorithm's performance
in terms of the zooming dimension of the problem instance.
Since the zooming algorithm is self-tuning, it achieves
this bound without requiring prior knowledge of the
zooming dimension.
} 

Our main technical contribution is a new algorithm,
the \emph{zooming algorithm},
that combines the upper confidence bound
technique used in earlier bandit algorithms
such as \textsc{ucb1} with a novel \emph{adaptive
refinement} step that uses past history to zoom
in on regions near the apparent maxima of $\mu$
and to explore a denser mesh of strategies in
these regions.  This algorithm is a key ingredient
in our design of an optimal bandit algorithm for
every metric space $(L,X)$.  Moreover,
we show that the zooming algorithm can perform significantly
better on benign problem instances.  That is, for every instance
$(L,X,\mu)$ we define a parameter called the \emph{zooming
dimension}, and use it to bound the algorithm's performance
in a way that is often significantly stronger than the corresponding per-metric bound. Note that the zooming algorithm is \emph{self-tuning}, i.e. it achieves this bound without requiring prior knowledge of the zooming dimension.

To state our theorem on the per-metric optimal solution for (i),
we need to sketch a few definitions which arise naturally as one
tries to extend the lower bound from~\cite{Bobby-nips04} to general
metric spaces.  Let us say that a subset $Y$ in a metric space $X$
has covering dimension $d$ if it can
be covered by $O(\delta^{-d})$ sets of diameter $\delta$
for all $\delta>0$. A point $x\in X$ has local covering
dimension $d$ if it has an open neighborhood of covering dimension $d$.
The space $X$ has max-min-covering
dimension $d = \MaxMinCOV(X)$
if it has no subspace whose local covering dimension
is uniformly bounded below by a number greater than $d$.

\begin{theorem} \label{thm:intro.pmo}
Consider the Lipschitz MAB problem on a compact metric space $(L,X)$. Let
    $d = \MaxMinCOV(X)$.
If $\gamma > \tfrac{d+1}{d+2}$ then there exists a bandit algorithm $\A$ such that for every problem instance $\mathcal{I}$ it satisfies
	$R_{\A}(t) = O_{\mathcal{I}}(t^\gamma)$ for all $t$.
No such algorithm exists if $d>0$ and $\gamma < \tfrac{d+1}{d+2}$.
\end{theorem}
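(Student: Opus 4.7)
The plan is to handle the upper and lower bounds separately, both anchored on the max-min structure of $\MaxMinCOV(X)$, and both reducing to constructions that are natural once one has the zooming algorithm (for the upper bound) and the needle-in-haystack technique of~\cite{Bobby-nips04} (for the lower bound).

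For the upper bound, I would reduce to the zooming algorithm via a topological decomposition of $X$. The defining property of $d = \MaxMinCOV(X)$ says that for every $\epsilon>0$ and every nonempty subspace $Y \subseteq X$, some point $y \in Y$ has a (relatively) open neighborhood of covering dimension at most $d+\epsilon$. I would iterate this: start with $X_0 = X$, and at each stage pick such a neighborhood $U_\alpha$ inside the residual subspace $X_\alpha = X \setminus \bigcup_{\beta<\alpha} U_\beta$, then set $X_{\alpha+1} = X_\alpha \setminus U_\alpha$. A transfinite induction plus the fact that compact metric spaces are hereditarily Lindel\"of should show this peeling terminates after countably many steps, yielding a countable cover $\{U_i\}$ of $X$ by pieces of covering dimension at most $d+\epsilon$. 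The algorithm would then run a separate copy of the zooming algorithm on each $U_i$, scheduled via a ``guess-and-double'' wrapper that allocates a polynomially shrinking fraction of trials to pieces with larger index. Since the zooming algorithm on a set of covering dimension $d' \leq d+\epsilon$ has regret $\tilde O(t^{(d'+1)/(d'+2)})$ (as established earlier in the paper via the zooming dimension), and since only the piece $U_{i^*}$ containing the maximizer of $\mu$ contributes asymptotically, the total regret is $O_\I(t^\gamma)$ provided $\epsilon$ is chosen so that $(d+\epsilon+1)/(d+\epsilon+2) < \gamma$. The instance-dependent constant $O_\I$ absorbs the scheduling overhead corresponding to $i^*$.

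For the lower bound, I would generalize the one-dimensional needle-in-haystack construction to arbitrary metric spaces. Using the definition of $\MaxMinCOV$, for any $\epsilon>0$ there is a subspace $Y \subseteq X$ in which \emph{every} point has local covering dimension at least $d-\epsilon$; this translates (via a standard covering-to-packing duality) to a uniform packing lower bound of $\Omega(r^{-(d-\epsilon)})$ inside every sufficiently small ball of $Y$. That packing room is precisely what is needed to embed a recursively nested family of hard instances: at scale $r_k$, plant a bump of height $r_k$ at one of $\Omega(r_k^{-(d-\epsilon)})$ well-separated candidate locations, and within the chosen bump hide a sharper bump at scale $r_{k+1}$, and so on. A KL-divergence change-of-measure argument against the uniformly random choice of needle location, exactly as in~\cite{Bobby-nips04}, forces every algorithm to incur regret $\Omega(t^{(d-\epsilon+1)/(d-\epsilon+2)})$ on some instance in the family at some time $t$. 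Letting $\epsilon \to 0$ rules out $\gamma < (d+1)/(d+2)$.

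The hardest step is the upper bound's decomposition and scheduling: I must prove rigorously that the transfinite peeling yields only countably many pieces using only the $\MaxMinCOV$ inequality and compactness, and that the cross-piece schedule does not inflate regret beyond $t^\gamma$. The lower bound has its own delicate point, namely that the packing lower bound inside $Y$ must hold uniformly across scales, and the constants in the needle construction must be chosen so that the KL-divergence argument survives passage to the limit $\epsilon \to 0$. In both directions, the reason the theorem is stated for all $\gamma > (d+1)/(d+2)$ rather than for the endpoint itself is precisely that one must take $\epsilon \to 0$ at the end.
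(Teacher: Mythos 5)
Your lower bound follows the paper's own route almost exactly: pick $Y \subseteq X$ with $\MinCOV(Y)$ close to $d$, use a covering-to-packing lemma to guarantee $\Omega(r^{-b})$ disjoint balls inside every open subset of $Y$ at some scale $r$, nest needle-in-haystack bumps recursively, and finish with the change-of-measure argument of Auer et al.\ (the paper additionally randomizes the needle locations and applies Borel--Cantelli to get an almost-sure statement, but your weaker per-algorithm version suffices for the theorem as stated). Your transfinite peeling for the upper bound is also essentially the paper's construction of a transfinite $d$-fat decomposition viewed from the complement: your residuals $X_\alpha$ are the paper's nested closed sets $S_\lambda$ of ``$d$-fat points,'' and compactness plus second countability forces termination at a countable ordinal with empty residual, exactly as in the paper's Proposition~\ref{prop:fatness-dim}.

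The genuine gap is in how you combine the pieces. Running an independent zooming algorithm on each $U_i$ and scheduling them by a \emph{static} rule that allocates a ``polynomially shrinking fraction of trials to pieces with larger index'' cannot work: if the unique optimum lies in $U_{i^*}$ with $i^* \geq 2$ and every other piece has payoffs bounded away from $\mu^*$ by a constant, then the $\Theta(t)$ rounds your schedule permanently assigns to the earlier pieces incur regret $\Omega(t)$, which no instance-dependent constant can absorb; and no static allocation can simultaneously give every piece the $t - O(t^\gamma)$ rounds it would need if it were the one containing the optimum. Making the schedule adaptive turns the meta-level into a countably-armed bandit problem whose analysis is not obviously easier than the original. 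The paper sidesteps this by running a \emph{single} index-based algorithm on all of $X$ --- so exploitation automatically concentrates wherever the empirically best strategies are --- and only restricting \emph{activation} of new strategies through per-level quotas of the form $|\{u \in P_\ell : r_t(u) \geq \rho\}| \leq C_\rho\, \rho^{-d}$. The two ingredients your plan is missing are (i) the ``entombment'' argument: if no optimal strategy lies in $S_{\ell+1}$, then some open neighborhood $U \supseteq S_{\ell+1}$ becomes and stays covered, so $\COV(S_\ell \setminus U) \leq d^*$ guarantees the quota at level $\ell$ is never exhausted and an optimal strategy remains covered; and (ii) a mechanism (the depth oracle and the per-phase target ordinal $\lambda(i)$) by which the algorithm identifies the deepest level $S_\lambda$ containing an optimal strategy, which is unavoidable in the transfinite setting where the levels cannot simply be enumerated and tried in order. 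Without something playing the role of (i) and (ii), the decomposition alone does not yield an algorithm.
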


\OMIT{ 
For metric spaces which are highly homogeneous (in the sense
that any two \eps-balls are isometric to one another) the theorem follows easily from a refinement of the techniques introduced in~\cite{Bobby-nips04};
in particular, the upper bound can be achieved using a
generalization of the na\"{i}ve algorithm described earlier.
} 

In general $\MaxMinCOV(X)$ is bounded above by the covering dimension of $X$. For metric spaces which are highly homogeneous (in the sense
that any two \eps-balls are isometric to one another) the two dimensions are equal, and the upper bound in the theorem can be achieved using a  generalization of the na\"{i}ve algorithm described earlier.
The difficulty in Theorem~\ref{thm:intro.pmo} lies in dealing with inhomogeneities in the metric space.\footnote{To appreciate this issue, it is very instructive to consider a concrete example of a metric space $(L,X)$ where
	$\MaxMinCOV(X)$
is strictly less than the covering dimension, and for this specific example design a bandit algorithm whose regret bounds are better than those suggested by the covering dimension. This is further discussed in Section~\ref{sec:pmo}.}
It is important to treat the problem
at this level of generality, because some of the
most natural applications of the Lipschitz MAB problem, e.g.
the web advertising problem described earlier, are
based on highly inhomogeneous metric spaces.  (That is, in web taxonomies,
it is unreasonable to expect different categories at the same
level of a topic hierarchy to have the roughly the same number
of descendants.)

The algorithm in Theorem~\ref{thm:intro.pmo} combines the zooming
algorithm described earlier with a delicate transfinite construction
over closed subsets consisting of ``fat points'' whose local covering
dimension exceeds a given threshold $d$.  For the lower bound, we
craft a new dimensionality notion, the max-min-covering dimension
introduced above,
which captures the inhomogeneity of a metric space, and we connect this
notion with the transfinite construction that underlies the algorithm.

For ``benign'' input instances we provide a  better
performance guarantee for the zooming algorithm.
The lower bounds in Theorems~\ref{thm:intro.1d}
and~\ref{thm:intro.pmo} are based on contrived,
highly singular, ``needle in haystack'' instances
in which the set of near-optimal strategies is
astronomically larger than the set of precisely
optimal strategies. Accordingly, we quantify the tractability
of a problem instance in terms of the number of near-optimal strategies.
We define the \emph{zooming
dimension} of an instance $(L,X,\mu)$
as the smallest $d$ such that the following covering property holds:
for every $\delta>0$ we require only $O(\delta^{-d})$
sets of diameter $\delta/8$ to cover
the set of strategies whose payoff falls short of
the maximum by an amount between $\delta$ and $2 \delta$.

\begin{theorem} \label{thm:intro.zooming}
If $d$ is the zooming dimension of a Lipschitz MAB
instance then at any time $t$ the zooming
algorithm suffers regret
	$\tilde{O}(t^{\gamma})$, $\gamma = \tfrac{d+1}{d+2}$.
Moreover, this is the best possible exponent $\gamma$ as a function of $d$.
\end{theorem}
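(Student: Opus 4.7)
The plan is to prove the upper bound by analyzing a concrete zooming algorithm, and then to establish the matching lower bound by a needle-in-haystack construction tailored to $d$. For the upper bound I would define the algorithm to maintain a set of \emph{active} strategies, each equipped with a confidence radius $r_t(v) = \sqrt{c \log t / (1+n_t(v))}$ (for a suitable absolute $c$), and at each step (i) play the active $v$ maximizing the index $\mu_t(v) + 2r_t(v)$, then (ii) activate a new strategy at any point of $X$ not yet covered by a ball $B(v, r_t(v))$ about some active $v$. The first analytic step is the \textbf{clean event} $|\mu_t(v) - \mu(v)| \le r_t(v)$ for every $t$ and every active $v$ simultaneously, which holds with probability $1 - O(t^{-2})$ by Chernoff plus a union bound over active arms and rounds.

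On the clean event, the activation rule combined with the Lipschitz condition yields, for every active $v$, a bound $\Delta(v) = O(r_{t(v)}(v))$ at its activation time $t(v)$: if $v$ had been activated, some nearby point was uncovered, and Lipschitzness carries near-optimality from the optimum to $v$. A standard UCB-style argument then gives $n_t(v) = O(\log t / \Delta(v)^2)$ for every suboptimal active $v$. Separately, the covering rule forces the active strategies in any dyadic band $\Delta(v) \in [2^{-i}, 2^{-i+1})$ to be pairwise $\Omega(2^{-i})$-separated; since each such $v$ lies in the $2^{-i+1}$-suboptimal layer, the zooming-dimension definition (applied with $\delta = 2^{-i+1}$) bounds their count by $O(2^{id})$. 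Writing $R(t) = \sum_v \Delta(v)\, n_t(v)$ and summing over dyadic scales $i \le i_0$, with a tail contribution of $t \cdot 2^{-i_0}$ from uniformly near-optimal arms, optimizing $i_0$ produces the bound $\tilde O(t^{(d+1)/(d+2)})$, and by construction the algorithm never refers to $d$, so it is self-tuning.

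For the lower bound, I would extend the construction behind Theorem~\ref{thm:intro.1d} to $d$ dimensions: work inside $[0,1]^d$ under $L_\infty$, pick a $\delta$-packing $P$ of size $\Theta(\delta^{-d})$, and consider the family of payoff functions $\mu_v(x) = \tfrac{1}{2} + \delta \cdot \max(0,\, 1 - L(x,v)/\delta)$ indexed by $v \in P$. Each instance has zooming dimension $d$ because the $\delta'$-suboptimal layer is a single annulus around $v$ of outer radius $O(\delta')$, which is covered by $O(1)$ balls of radius $\delta'/8$ (with a constant depending on $d$). A standard KL-divergence argument, as in~\cite{bandits-exp3}, shows that distinguishing the correct $v$ among $\Theta(\delta^{-d})$ candidates with gap $\delta$ requires $\Omega(\delta^{-d-2})$ samples, during which $\Omega(t)$ plays incur instantaneous regret $\Omega(\delta)$. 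Balancing $t \asymp \delta^{-(d+2)}$ gives $\delta \asymp t^{-1/(d+2)}$ and total regret $\Omega(t^{(d+1)/(d+2)})$.

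The hard part is the packing step of the upper bound: one must verify that every active arm $v$ in the band $\Delta(v) \in [2^{-i}, 2^{-i+1})$ is itself a point in the suboptimality layer that the zooming-dimension definition quantifies over, \emph{and} that the pairwise separation among such centers is large enough to convert them into the centers of a $(\delta/8)$-packing witnessing the zooming-dimension bound — this is where the specific constant $\delta/8$ in the definition must be calibrated against the constants hidden in $r_t(v)$ and the activation rule. A secondary delicate point is choosing $c$ in the confidence radius so that the clean event holds and so that the UCB comparison forces the last-activation-time bound to translate into the $O(\log t / \Delta(v)^2)$ play-count bound without losing the self-tuning property.
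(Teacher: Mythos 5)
Your upper-bound argument is essentially the paper's own proof of Theorem~\ref{thm:zooming-dim}: the clean event via Chernoff bounds (Claim~\ref{cl:conf-rad}), the index comparison through a covering arm of a near-optimal strategy giving $\Delta(v)\le O(r_t(v))$ and hence $n_t(v)=O(\log t/\Delta^2(v))$ (Lemma~\ref{lm:bound-active}), the $\tfrac14\min(\Delta(u),\Delta(v))$ separation of active arms from the activation rule (Corollary~\ref{cor:sparsity}), the per-band count $|A(i,t)|\le C2^{id}$ from the $\delta/8$ covering in the definition, and the dyadic summation with tail $\rho t$ (Claim~\ref{cl:clean-phase}). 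The only cosmetic difference is that the paper restarts in doubling phases with confidence radius $\sqrt{8 i_{\mathrm{ph}}/(2+n_t(v))}$ rather than using a single $\sqrt{c\log t/(1+n_t(v))}$ run; the delicate points you flag (calibrating $\delta/8$ against the separation constant $1/4$, and choosing the confidence-radius constant) are exactly where the paper's constants come from.

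The lower bound, however, has a genuine gap. Your single-scale construction $\mu_v(x)=\tfrac12+\delta\max(0,1-L(x,v)/\delta)$ does \emph{not} produce instances of zooming dimension $d$, and your own justification betrays this: if every layer $X_{r'}$ is a single annulus coverable by $O(1)$ balls of diameter $r'/8$, that is the statement that the zooming dimension is $0$, not $d$. (In fact the layers are not all annuli: at the one scale $r'\approx\delta$ the layer is essentially the whole cube and needs $\Theta(\delta^{-d})$ balls, but at every other scale it needs $O(1)$; so each fixed instance has asymptotic zooming dimension $0$, with only the instance-dependent constant $C\approx\delta^{-d}$ blowing up.) Consequently your family only yields a fixed-horizon statement --- for each $t$ there is an instance, depending on $t$, that forces regret $\Omega(t^{(d+1)/(d+2)})$ at that one time --- and exhibits no single instance of zooming dimension $d$ that is hard at all (or infinitely many) times. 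This is precisely why the paper's lower bound (Theorem~\ref{thm:mincov-lb} and Lemma~\ref{lem:haystack}) uses a \emph{multi-scale} recursive needle-in-haystack: a random nested sequence of balls $B_1\supset B_2\supset\cdots$ with $\Theta(r_i^{-b})$ siblings at every level, summed bump functions $\mu=\sum f_{B_i}$, and a Borel--Cantelli argument showing the regret bound is violated at infinitely many scales almost surely. That construction has $\Theta(r^{-d})$-sized near-optimal layers at \emph{every} scale $r$, which is what makes its zooming dimension genuinely equal to $d$ and makes the ``best possible exponent as a function of $d$'' claim meaningful. To repair your argument you would either need to pass to this multi-scale construction or restate the lower bound in terms of the non-asymptotic $C$-zooming dimension with a scale-independent constant.
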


The zooming dimension can be significantly smaller than the max-min-covering dimension.\OMIT{\footnote{One can show that in this case the na\"{i}ve algorithm from Theorem~\ref{thm:intro.1d} performs poorly compared to the zooming algorithm.}}
Let us illustrate this point with two examples (where for simplicity the max-min-covering dimension is equal to the covering dimension).
\OMIT{ 
First, if $(L,X)$ is the Euclidean metric on a unit interval,
and $\mu$ is a twice-differentiable function with negative
second derivative at the optimal strategy $x^*$, then the zooming
dimension is only $\tfrac12$ whereas the covering dimension is $1$.
} 
For the first example, consider a metric space consisting of a high-dimensional part and a low-dimensional part. For concreteness, consider a rooted tree $T$ with two top-level branches $T'$ and $T''$ which are complete infinite $k$-ary trees, $k=2,10$. Assign edge weights in $T$ that are exponentially decreasing with distance to the root, and let $L$ be the resulting shortest-path metric on the leaf set $X$.\footnote{Here a \emph{leaf} is defined as an  infinite path away from the root.} If there is a unique optimal strategy that lies in the low-dimensional part $T'$ then the zooming dimension is bounded above by the covering dimension of $T'$, whereas the ``global'' covering dimension is that of $T''$.
In the second example, let $(L,X)$ be a homogeneous high-dimensional metric, e.g. the Euclidean metric on the unit $k$-cube, and the payoff function is
	$\mu(x) = 1-L(x,S)$
for some subset $S$. Then the zooming dimension is equal to the covering dimension of $S$, e.g. it is $0$ if $S$ is a finite point set.

\OMIT { 
...  $X=[0,1]$ with the standard metric $L(x,y) = |x-y|$,
... than the local covering dimension at the point $x^*$
where $\mu$ is maximized.
} 


\xhdr{Discussion.}
In stating the theorems above, we have been imprecise about specifying the model of computation.  In particular, we have ignored the thorny issue of how to provide an algorithm with an input containing a metric space which may have an infinite number of points. The simplest way to interpret our theorems is to ignore implementation details and interpret ``algorithm'' to mean an abstract decision rule, i.e. a (possibly randomized) function mapping a history of past observations $(x_i,r_i) \in X \times [0,1]$ to a strategy $x \in X$ which is played in the current period. All of our theorems are valid under this interpretation, but they can also be made into precise algorithmic results provided that the algorithm is given appropriate oracle access to the metric space.  In most cases, our algorithms require only a \emph{covering oracle} which takes a finite collection of open balls and either declares that they cover $X$ or outputs an uncovered point. We refer to this setting as the \standardMAB.  For example, the zooming algorithm uses only a covering oracle for $(L,X)$, and requires only one oracle query per round (with at most $t$ balls in round $t$). However, the per-metric optimal algorithm in Theorem~\ref{thm:intro.pmo}  uses more complicated oracles, and we defer the definition of these oracles to Section~\ref{sec:pmo}.

\OMIT{ 
the algorithm is very efficient, requiring only $O(t \log t)$
operations in total (including oracle queries) to choose
its first $t$ strategies.
\BobbyNote{Prove the $O(t \log t)$ bound in "body".}
} 

While our definitions and results so far have been tailored for the Lipschitz MAB problem on infinite metrics, some of them can be extended to the finite case as well. In particular, for the zooming algorithm we obtain sharp results (that are meaningful for both finite and infinite metrics) using a more precise, \emph{non-asymptotic} version of the zooming dimension. Extending the notions in Theorem~\ref{thm:intro.pmo} to the finite case is an open question.

\OMIT{
While our definitions and results so far have been tailored for the Lipschitz MAB problem on infinite metrics, they can be extended to the finite case as well. In particular, for the zooming algorithm we obtain sharp results (that are meaningful for both finite and infinite metrics) using a more precise, \emph{non-asymptotic} version of the zooming dimension. Extending the notions in Theorem~\ref{thm:intro.pmo} to the finite case is feasible but more complicated; we leave it to the full version.
} 


\xhdr{Extensions.}
We provide a number of extensions in which we elaborate on our analysis of the zooming algorithm.
First, we provide sharper bounds for several examples in which the reward from playing each strategy $u$ is $\mu(u)$ plus an independent \emph{noise} of a known and ``benign" shape. Second, we upgrade the zooming algorithm so that it satisfies the guarantee in Theorem~\ref{thm:intro.zooming} {\em and} enjoys a better guarantee if the maximal reward is exactly 1. Third, we apply this result to a version where
	$\mu(\cdot) = 1-L(\,\cdot\,,S)$
for some \emph{target set} $S$ which is not revealed to the algorithm. Fourth, we relax some assumptions in the analysis of the zooming algorithm, and use this generalization to analyze the version in which
	$\mu(\cdot) = 1-f(L(\,\cdot\,,S))$
for some known function $f$. Finally, we extend our analysis from reward distributions supported on $[0,1]$ to those with unbounded support and finite absolute third moment.

\OMIT{
Some of our initial motivation for this project came from the online advertizing scenario described in the introduction. We follow this motivation further in Appendix~\ref{sec:admatching} and consider a multi-round game such that in each round an adversary selects a webpage and the algorithm selects an ad which it places on this webpage. We assume that we have a Lipschitz condition on the product (webpages$\times$ads) space, and we give an algorithm whose regret dimension (as defined in Section~\ref{sec:pmo}) is upper-bounded in terms of (essentially) the covering dimension. Although the algorithm is based in the ``na\"{i}ve'' algorithm from Theorem~\ref{thm:intro.1d}, the adversarial aspect of the problem creates considerable technical challenges.  In future work we hope to pursue more refined guarantees in the style of Section~\ref{sec:adaptive-exploration}.
}

\OMIT{ 
Ideally, it would be desirable to have a matching
lower bound constituting a \emph{per-instance optimality}
guarantee for the zooming algorithm or some other
algorithm.  The goal, when stated in this form, is
plainly unachievable.  For any given instance
$(L,X,\mu)$, if $x^* \in X$ is a point where $\mu$
achieves its maximum, then the algorithm
which always plays strategy $x^*$ has zero
regret.  Nevertheless, one might hope for a subtler
characterization of per-instance optimality, e.g.
asserting that no algorithm can outperform $\A$ on
one instance $(L,X,\mu)$ without performing significantly
worse than $\A$ on highly similar instances $(L,X,\mu')$.
While we have been unable to prove such guarantees
for the zooming algorithm, the question of per-instance
optimality is an attractive topic for further investigation.
} 

\xhdr{Follow-up work.} For metric spaces whose max-min-covering dimension is exactly 0, this paper provides an upper bound
    $R(T) = O_\mathcal{I}(T^\gamma)$
for any $\gamma>\tfrac12$, but no matching lower bound. Characterizing the optimal regret for such metric spaces remained an open question. Following the publication of the conference version, this question has been settled in~\cite{LipschitzMABdichotomy-TR}, revealing the following dichotomy: for every metric space, the optimal regret of a Lipschitz MAB algorithm is either bounded above by any $f\in \omega(\log t)$, or bounded below by any $g\in o(\sqrt{T})$, depending on whether the completion of the metric space is compact and countable.

\subsection{Preliminaries}
Given a metric space, $B(x,r)$ denotes an open ball of radius $r$ around point $x$. Throughout the paper, he constants in the $O(\cdot)$ notation are absolute unless specified otherwise.

\begin{definition}\label{def:lipschitzMAB}
In the \emph{Lipschitz MAB problem} on $(L,X)$, there is a strategy set $X$, a metric space $(L,X)$ of diameter $\leq 1$, and a payoff function
	$\mu \,:\, X \rightarrow [0,1]$
such that the following \emph{Lipschitz condition} holds:
\begin{align}\label{eq:Lipschitz-condition}
	|\mu(x)-\mu(y)| \leq L(x,y)	\quad
		\text{for all $x,y\in X$}.
\end{align}

\noindent Call $L$ is the \emph{similarity function}. The metric space $(L,X)$ is revealed to an algorithm, whereas the payoff function $\mu$ is not. In each round the algorithm chooses a strategy $x \in X$ and observes an independent random sample from a
payoff distribution $\mathcal{D}(x)$ with support
	$\mathcal{S} \subset [0,1]$ and expectation $\mu(x)$.

The \emph{regret} of a bandit algorithm $\A$ running on a given problem instance is
	$R_{\A}(t) = W_\A(t) - t\mu^*$,
where $W_\A(t)$ is the expected payoff of \A\ at time $t$
and
	$\mu^* = \sup_{x\in X} \mu(x)$
is the \emph{maximal expected reward}.

The \emph{$C$-zooming dimension} of the problem instance $(L,X,\mu)$ is the smallest $d$ such that for every $r \in(0,1]$ the set
	$X_r = \{x\in X:\, \tfrac{r}{2} < \mu^* - \mu(x) \leq r  \}$
can be covered by $C\,r^{-d}$ sets of diameter at most $r/8$.
\end{definition}

\begin{definition}
Fix a metric space on set $X$. Let $N(r)$ be the smallest number of sets of diameter $r$ required to cover $X$. The \emph{covering dimension} of $X$ is
$$ \COV(X) =  \inf \{\,d:\; \exists c\; \forall r>0\quad
	N(r) \leq c r^{-d} \,\}.
$$
The \emph{$c$-covering dimension} of $X$ is defined as the infimum of all $d$ such that
	$N(r) \leq c r^{-d}$
for all $r>0$.
\end{definition}

\xhdr{Outline of the paper.} In Section~\ref{sec:adaptive-exploration} we prove Theorem~\ref{thm:intro.zooming}. In Section~\ref{sec:pmo} we discuss the per-metric optimality and prove Theorem~\ref{thm:intro.pmo}. Section~\ref{sec:gen-confRad} covers the extensions.


\newcommand{\Czoom}{C}
\newcommand{\Phase}{\ensuremath{\mathtt{ph}}}

\section{Adaptive exploration: the zooming algorithm}
\label{sec:adaptive-exploration}

In this section we introduce the {\it zooming algorithm} which uses adaptive exploration to take advantage of the ''benign" input instances, and prove the main guarantee (Theorem~\ref{thm:intro.zooming}).

Consider the \standardMAB\ on $(L,X)$. The zooming algorithm proceeds in phases $i=1,2,3,\ldots$ of $2^i$ rounds each. Let us consider a single phase $i_\Phase$ of the algorithm. For each strategy $v\in X$ and time $t$, let $n_t(v)$ be the number of times this strategy has been played in this phase before time $t$, and let $\mu_t(v)$ be the corresponding average reward. Define $\mu_t(v)=0$ if $n_t(v)=0$.  Note that at time $t$ both quantities are known to the algorithm. Define the \emph{confidence radius} of $v$ at time $t$ as
\begin{equation}\label{eq:confidence-radius}
	r_t(v) := \sqrt{8\, i_\Phase\,/\,(2+n_t(v))}.
\end{equation}

Let $\mu(v)$ be the expected reward of strategy $v$. Note that
	$E[\mu_t(v)] = \mu(v)$.
Using Chernoff Bounds, we can bound $|\mu_t(v) - \mu(v)|$ in terms of the confidence radius:

\begin{definition}\label{def:clean-phase}
A phase is called \emph{clean} if for each strategy $v\in X$ that has been played at least once during this phase and each time $t$ we have
	$ |\mu_t(v) - \mu(v)| \leq r_t(v)$.
\end{definition}

\begin{claim}\label{cl:conf-rad}
Phase $i_\Phase$ is clean with probability at least $1-4^{-i_\Phase}$.
\end{claim}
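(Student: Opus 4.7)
The plan is a standard concentration-and-union-bound argument, split into a per-pair Hoeffding estimate and a count of the relevant (strategy, play-count) pairs.

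For the concentration step I would first invoke a ``reward tape'' coupling to decouple the adaptive play from the reward randomness: for each $v \in X$ and each $k \ge 1$, pre-draw $Y_k(v) \sim \mathcal{D}(v)$ once and for all, and have the algorithm receive $Y_k(v)$ on its $k$-th play of $v$. This does not change the joint distribution of anything observed by the algorithm, but it makes $Z_n(v) := \tfrac{1}{n}\sum_{k=1}^n Y_k(v)$ a deterministic function of the tape alone, independent of the algorithm's decisions. Hoeffding's inequality applied to the $[0,1]$-valued i.i.d.\ samples $Y_1(v),\dots,Y_n(v)$ with mean $\mu(v)$ then gives, for every $n \ge 1$,
\[
	\Pr\!\big[\,|Z_n(v) - \mu(v)| > \sqrt{8 i_\Phase/(n+2)}\,\big]
		\;\le\; 2\exp\!\big(-16\, i_\Phase\, n/(n+2)\big)
		\;\le\; 2\exp(-16 i_\Phase/3),
\]
using $n/(n+2) \ge 1/3$ for $n \ge 1$.

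For the counting step, phase $i_\Phase$ lasts only $2^{i_\Phase}$ rounds, so at most $2^{i_\Phase}$ distinct strategies are ever played in it, and for each such strategy $n_t(v)$ takes at most $2^{i_\Phase}$ positive values. Since $\mu_t(v)$ and $r_t(v)$ change only at the moments $v$ is played, it suffices to verify the clean-phase inequality immediately after each play; this corresponds to the at most $4^{i_\Phase}$ pairs $(v,n)$ of the form $(V^{(s)}, K^{(s)})$, where $V^{(s)}$ is the strategy played at round $s$ and $K^{(s)}$ its running play count.

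Combining the two estimates via a union bound then yields
\[
	\Pr[\text{phase } i_\Phase \text{ is not clean}]
		\;\le\; 4^{i_\Phase}\cdot 2\exp(-16 i_\Phase/3)
		\;\le\; 4^{-i_\Phase},
\]
the final inequality being a routine numerical check ($\ln 2 + 2 i_\Phase \ln 4 \le 16 i_\Phase/3$ for all $i_\Phase \ge 1$). The only subtlety---and the main potential obstacle---is that the realized $(v,n)$ pairs are random and could in principle involve uncountably many strategies in an infinite metric space. The tape coupling resolves this: each event $F_{v,n} := \{|Z_n(v)-\mu(v)|>\sqrt{8i_\Phase/(n+2)}\}$ is defined independently of the algorithm's decisions, and the bad event is contained in the union of $F_{v,n}$ over the (at most $4^{i_\Phase}$) realized pairs $(V^{(s)}, K^{(s)})$, to which the above union bound applies cleanly.
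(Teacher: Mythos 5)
Your proof is correct and supplies precisely the standard Chernoff--Hoeffding-plus-union-bound argument that the paper leaves implicit (the claim is stated without proof, prefaced only by ``Using Chernoff Bounds''), and your constants check out: $4^{i_\Phase}\cdot 2e^{-16 i_\Phase/3}\le 4^{-i_\Phase}$ reduces to $\ln 2 + 2 i_\Phase\ln 4\le 16 i_\Phase/3$, which holds for all $i_\Phase\ge 1$. The only step deserving one more line is the union bound over the \emph{realized} pairs $(V^{(s)},K^{(s)})$: the tape coupling makes each $F_{v,n}$ algorithm-independent, but to sum probabilities over a random index set drawn from an uncountable strategy space you should also observe that a strategy is activated (hence first enters the realized set) before any of its own tape values are read, so the event $\{v\text{ is played during the phase}\}$ is independent of $v$'s tape and
$\Pr\bigl[\bigcup_{v,n}(\{v\text{ played}\}\cap F_{v,n})\bigr]\le \sum_{v}\Pr[v\text{ played}]\cdot\max_v\sum_{n\le 2^{i_\Phase}}\Pr[F_{v,n}]\le 4^{i_\Phase}\cdot 2e^{-16 i_\Phase/3}$.
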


Throughout the execution of the algorithm, a finite number of strategies are designated \emph{active}. Our algorithm only plays active strategies, among which it chooses a strategy $v$ with the maximal \emph{index}
\begin{equation}\label{eq:index}
	I_t (v) = \mu_t(v) + 2\, r_t(v).
\end{equation}
Say that strategy $v$ \emph{covers} strategy $u$ at time $t$ if
	$u\in B(v,\, r_t(v))$.
Say that a strategy $u$ is \emph{covered} at time $t$ if at this time it is covered by some active strategy $v$. Note that the \emph{covering oracle} (as defined in Section~\ref{sec:intro}) can return a strategy which is not covered if such strategy exists, or else inform the algorithm that all strategies are covered.
Now we are ready to state the algorithm:

\begin{algorithm}[Zooming Algorithm]\label{alg:nice-1d}
Each phase $i$ runs for $2^i$ rounds. In the beginning of the phase no strategies are active. In each round do the following:
\begin{OneLiners}
\item[1.] If some strategy is not covered, make it active.
\item[2.] Play an active strategy with the maximal index~\refeq{eq:index}; break ties arbitrarily.
\end{OneLiners}
\end{algorithm}

We formulate the main result of this section as follows:

\begin{theorem}\label{thm:zooming-dim}
Consider the \standardMAB. Let \A\ be Algorithm~\ref{alg:nice-1d}. Then
	$\forall\,C>0$
\begin{equation}\label{eq:thm-zooming-dim}
	R_\A(t) \leq O(C \log t)^{1/(2+d)}\,\times t^{1-1/(2+d)}\;\;
	\text{for all $t$},
\end{equation}
where $d$ is the $C$-zooming dimension of the problem instance.
\end{theorem}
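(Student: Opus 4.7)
My plan is to follow the standard UCB-style argument adapted to the adaptive-refinement setting: work inside a fixed phase $i=i_\Phase$ of length $2^i$, condition on the phase being clean (Claim~\ref{cl:conf-rad}), and then sum over phases at the end.

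\textbf{Gap vs.\ confidence radius.} I would first prove that, in a clean phase, whenever the algorithm plays strategy $u$ at time $t$, one has $\Delta(u) := \mu^*-\mu(u) \leq 3\,r_t(u)$. For any $\varepsilon>0$, pick $v^\varepsilon\in X$ with $\mu(v^\varepsilon)\geq \mu^*-\varepsilon$; by the activation invariant, some active $v$ covers it, so $L(v,v^\varepsilon)\leq r_t(v)$ and Lipschitzness gives $\mu(v)\geq \mu^*-\varepsilon-r_t(v)$. The clean-phase bound then yields $I_t(v)\geq \mu(v)+r_t(v)\geq \mu^*-\varepsilon$. Since $u$ maximizes the index and $I_t(u)\leq \mu(u)+3\,r_t(u)$, sending $\varepsilon\to 0$ gives the claim.

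\textbf{Two consequences.} Plugging the lemma into~\refeq{eq:confidence-radius} shows each played strategy $u$ is chosen at most $O(i/\Delta(u)^2)$ times in the phase. Second, an activation at time $t_u$ requires $L(u,v)>r_{t_u}(v)$ for every previously active $v$, and a short computation (comparing $r_{t_u}(v)$ with $r_s(v)$ at the last time $s$ that $v$ was played) converts this into the separation
\[
    L(u,v)\;\geq\;\tfrac{1}{c}\,\min(\Delta(u),\Delta(v))
\]
for an absolute constant $c$, valid for any two active played strategies in a clean phase.

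\textbf{Packing and summation.} For each dyadic scale $r=2^{-k}$, let $S_r$ be the set of active strategies with gap in $(r/2,r]$. The separation bound forces pairwise distances in $S_r$ to exceed $r/(2c)$; choosing constants so this exceeds $r/8$, no diameter-$r/8$ piece in a cover of $X_r$ can contain two elements of $S_r$, and the definition of the $C$-zooming dimension forces $|S_r|\leq Cr^{-d}$. The regret from $S_r$ in phase $i$ is therefore at most $Cr^{-d}\cdot O(i/r^2)\cdot r=O(Ci\,r^{-(d+1)})$. Summing geometrically over $r=2^{-k}$ up to a cutoff $2^{-K}$, and bounding the contribution of plays with gap below $2^{-K}$ by $t\cdot 2^{-K}$, an optimal choice of $K$ balances the two terms at $(O(Ci))^{1/(d+2)}\cdot 2^{i(d+1)/(d+2)}$. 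Unclean phases contribute $\sum_i 4^{-i}\cdot 2^i=O(1)$ in expectation by Claim~\ref{cl:conf-rad}, and summing the clean-phase bounds over $i\leq \lceil\log_2 t\rceil$ the final phase dominates, yielding~\refeq{eq:thm-zooming-dim}.

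\textbf{Main obstacle.} The delicate point will be the constant in the separation bound: $r_{t_u}(v)$ is slightly smaller than $r_s(v)$ at $v$'s last play, so one gets only $L(u,v)\gtrsim \Delta(v)/c$ for some $c$ that must end up small enough (roughly $c<8$, after the constant from the $\Delta\leq 3r_t$ lemma is folded in) for the ``diameter $r/8$'' packing condition in the zooming-dimension definition to apply cleanly. Once the constants are tracked, everything else is routine bookkeeping absorbed into $C$ and the $O(\cdot)$.
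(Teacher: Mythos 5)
Your proposal is correct and follows essentially the same route as the paper's own proof: the gap-versus-confidence-radius lemma (the paper's Lemma~\ref{lm:bound-active}, which extends the bound to $\Delta(v)\leq 4r_t(v)$ for non-played strategies exactly via the $r_{t}(v)\geq\sqrt{2/3}\,r_s(v)$ comparison you describe), the separation bound (Corollary~\ref{cor:sparsity}, with constant $\tfrac14$, which combined with $\Delta>r/2$ on each dyadic band gives exactly the $r/8$ packing threshold you were worried about), and the dyadic summation with an optimized cutoff (Claim~\ref{cl:clean-phase}). The constants do close up as you hoped, so no gap remains.
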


\begin{note}{Remark}
The zooming algorithm is \emph{not} parameterized by the $C$ in~\refeq{eq:thm-zooming-dim}, yet satisfies~\refeq{eq:thm-zooming-dim} for all $C>0$. For  sharper guarantees, $C$ can be tuned to the specific problem instance and specific time $t$.
\end{note}

Let us prove Theorem~\ref{thm:zooming-dim}.
Note that after step 1 in Algorithm~\ref{alg:nice-1d} all strategies are covered. (Indeed, if some strategy is activated in step 1 then it covers the entire metric.)
Let $\mu^* = \sup_{u\in X} \mu(u)$ be the maximal expected reward; note that we do not assume that the supremum is achieved by some strategy. Let
	$\Delta(v) = \mu^* - \mu(v)$.
Let us focus on a given phase $i_\Phase$ of the algorithm.

\begin{lemma} \label{lm:bound-active}
If phase $i_\Phase$ is clean then we have
	$\Delta(v) \leq 4\, r_t(v)$
for any time $t$ and any strategy $v$. It follows that
	$n_t(v) \leq O(i_\Phase)\, \Delta^{-2}(v)$.
\end{lemma}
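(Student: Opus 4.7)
The plan is to reduce to the case where $v$ has been played at least once in phase $i_\Phase$, since otherwise $n_t(v)=0$, $r_t(v)=2\sqrt{i_\Phase}\ge 2$, and the bound $\Delta(v)\le 1\le 4\,r_t(v)$ is automatic. In the non-trivial case, let $s\le t$ be the last round in which $v$ was played. Since Step~2 at round $s$ selects an active strategy of maximum index, we have $I_s(v)\ge I_s(u)$ for every strategy $u$ that is active at time $s$.

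The key step is to exhibit an active strategy $u^*$ at time $s$ whose index is essentially $\mu^*$. Fix $\epsilon>0$ and choose $x_\epsilon\in X$ with $\mu(x_\epsilon)>\mu^*-\epsilon$. After Step~1 every strategy is covered, so there is an active $u^*$ with $L(u^*,x_\epsilon)\le r_s(u^*)$; by the Lipschitz condition, $\mu(u^*)\ge\mu^*-\epsilon-r_s(u^*)$. If $u^*$ has been played at least once during the phase, then the clean phase hypothesis gives $\mu_s(u^*)\ge\mu(u^*)-r_s(u^*)$, so $I_s(u^*)=\mu_s(u^*)+2\,r_s(u^*)\ge\mu^*-\epsilon$. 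Otherwise $n_s(u^*)=0$, hence $I_s(u^*)=4\sqrt{i_\Phase}\ge 4\ge\mu^*$ directly. In either case $I_s(v)\ge I_s(u^*)\ge\mu^*-\epsilon$, and applying the clean phase bound to $v$ itself (which was played at round $s$) yields $\mu(v)+3\,r_s(v)\ge\mu_s(v)+2\,r_s(v)=I_s(v)\ge\mu^*-\epsilon$. Taking $\epsilon\to 0$ gives $\Delta(v)\le 3\,r_s(v)$.

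To pass from $r_s(v)$ to $r_t(v)$, observe that $v$ was not played after round $s$, so $n_t(v)=n_s(v)+1$; consequently $r_s(v)/r_t(v)=\sqrt{(3+n_s(v))/(2+n_s(v))}\le\sqrt{3/2}$, whence $\Delta(v)\le 3\sqrt{3/2}\,r_t(v)<4\,r_t(v)$. Substituting the definition of the confidence radius yields $\Delta(v)^2\le 128\,i_\Phase/(2+n_t(v))$, which rearranges to $n_t(v)\le O(i_\Phase)\,\Delta(v)^{-2}$. The one real subtlety is the bifurcation in the argument for the covering strategy $u^*$: the clean phase assumption only constrains strategies that have been played, so the played and unplayed cases need two different elementary bounds, both of which happen to yield $I_s(u^*)\ge\mu^*-\epsilon$. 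A minor secondary point is that the constant is loose by exactly the factor $\sqrt{3/2}$ picked up when converting $r_s(v)$ back to $r_t(v)$, and this is what forces the constant ``$4$'' in the statement rather than ``$3$''.
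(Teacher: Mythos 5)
Your proof is correct and follows essentially the same route as the paper's: show the index of the played strategy at its last play time $s$ is at least $\mu^*-\epsilon$ via a covering strategy near a near-optimal point, deduce $\Delta(v)\le 3\,r_s(v)$ from the clean-phase bound, and convert to $r_t(v)$ using the ratio $\sqrt{(3+n_s(v))/(2+n_s(v))}\le\sqrt{3/2}$. Your explicit handling of the case where the covering strategy has never been played (so the clean-phase bound does not apply and one falls back on $I_s(u^*)=4\sqrt{i_\Phase}\ge\mu^*$) is a point the paper's proof passes over silently, but it is a refinement of the same argument rather than a different one.
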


\begin{proof}


Suppose strategy $v$ is played at time $t$. First we claim that
	$I_t(v)\geq \mu^*$.
Indeed, fix $\eps>0$. By definition of $\mu^*$ there exists a strategy $v^*$ such that $\Delta(v^*) < \eps$. Let $v_t$ be an active strategy that covers $v^*$. By the algorithm specification
	$I_t(v) \geq I_t(v_t)$.
Since $v$ is clean at time $t$, by definition of index we have
	$I_t(v_t) \geq \mu(v_t) + r_t(v_t)$.
By the Lipschitz property we have
	$\mu(v_t) \geq \mu(v^*) - L(v_t, v^*)$.
Since $v_t$ covers $v^*$, we have
	$L(v_t, v^*) \leq r_t(v_t)$
Putting all these inequalities together, we have
	$I_t(v) \geq \mu(v^*) \geq \mu^* - \eps$.
Since this inequality holds for an arbitrary $\eps>0$, we in fact have $I_t(v)\geq \mu^*$. Claim proved.

Furthermore, note that by the definitions of ``clean phase'' and ``index''
we have
	$\mu^* \leq I_t(v) \leq \mu(v) + 3\, r_t(v)$
and therefore
	$\Delta(v) \leq 3\, r_t(v)$.

Now suppose strategy $v$ is not played at time $t$. If it has never been played before time $t$ in this phase, then $r_t(v) > 1$ and thus  the lemma is trivial. Else, let $s$ be the last time strategy $v$ has been played before time $t$. Then by definition of the confidence radius
	$r_t(v) = r_{s+1}(v) \geq \sqrt{2/3}\, r_s(v) \geq \tfrac14\, \Delta(v) $.
\end{proof}


\begin{corollary} \label{cor:sparsity}
In a clean phase, for any active strategies $u,v$ we have
	$L(u,v) > \tfrac14 \min(\Delta(u), \Delta(v))$.
\end{corollary}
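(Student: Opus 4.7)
My plan is to derive this directly from the activation rule of Algorithm~\ref{alg:nice-1d} combined with Lemma~\ref{lm:bound-active}. The underlying idea is that activation produces metric separation: whenever a new strategy is activated in Step~1, no previously active strategy covered it, which gives a lower bound on the distance in terms of the covering radius at the moment of activation. Since the confidence radius of an active strategy is tied to its suboptimality gap via Lemma~\ref{lm:bound-active}, this separation translates into a bound in terms of $\Delta(u)$ and $\Delta(v)$.

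Concretely, I would first note that Step~1 activates at most one strategy per round, so the activation rounds of $u$ and $v$ are distinct. Without loss of generality, assume $u$ was activated strictly before $v$, and let $s$ be the round in which $v$ is activated. At the start of round $s$, the strategy $u$ is already active. If $u$ covered $v$ at time $s$, i.e. $v \in B(u, r_s(u))$, then Step~1 would not have activated $v$. Therefore $L(u,v) \geq r_s(u)$. Applying Lemma~\ref{lm:bound-active} to $u$ at time $s$ in this clean phase gives $\Delta(u) \leq 4\,r_s(u)$, hence $r_s(u) \geq \Delta(u)/4$. Chaining the two inequalities yields
\[
    L(u,v) \;\geq\; r_s(u) \;\geq\; \tfrac{1}{4}\,\Delta(u) \;\geq\; \tfrac{1}{4}\min(\Delta(u), \Delta(v)).
\]
The strict inequality in the statement comes from slack inside Lemma~\ref{lm:bound-active}: its proof actually shows $\Delta(u) \leq 3\,r_{s'}(u)$ at the last play time $s'\le s$ of $u$, and $r_s(u)\geq \sqrt{2/3}\,r_{s'}(u)$, giving $r_s(u) \geq \Delta(u)\sqrt{2/27} > \Delta(u)/4$.

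I do not anticipate a real obstacle. The only mildly subtle points are (i) that Lemma~\ref{lm:bound-active} must be applied to $u$ at a time $s$ when $u$ need not be the strategy played in round $s$ — which is fine, since Lemma~\ref{lm:bound-active} is stated uniformly for all strategies and all times — and (ii) that one must use the \emph{open} ball definition of ``covers'' correctly, which only gives a weak inequality $L(u,v)\geq r_s(u)$ and then relies on the slack above to produce the strict inequality.
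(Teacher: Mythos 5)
Your proof is correct and follows essentially the same route as the paper: assume $u$ was activated first, observe that $v$ was uncovered by $u$ at its activation time $s$ so $L(u,v)\geq r_s(u)$, and apply Lemma~\ref{lm:bound-active} to get $r_s(u)\geq \tfrac14\Delta(u)$. Your extra care about the open-ball definition (weak vs.\ strict inequality) and recovering strictness from the slack $\sqrt{2/27}>\tfrac14$ in Lemma~\ref{lm:bound-active} is a small refinement the paper glosses over, but the argument is the same.
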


\begin{proof}
Assume $u$ has been activated before $v$. Let $s$ be the time when $v$ has been activated. Then by the algorithm specification we have
	$L(u,v)> r_s(u)$.
By Lemma~\ref{lm:bound-active}
	$r_s(u)\geq \tfrac14 \Delta(u)$.
\end{proof}

Let $d$ be the the $C$-zooming dimension. For a given time $t$ in the current phase, let $S(t)$ be the set of all strategies that are active at time $t$, and let
	$$A(i,t) = \{ v\in S(t):\; 2^i \leq \Delta^{-1}(v) < 2^{i+1} \}.$$
We claim that	
	$|A(i,t)| \leq \Czoom\, 2^{id}$.
Indeed,  set $A(i,t)$ can be covered by
	$\Czoom\, 2^{id}$
sets of diameter at most $2^{-i}/8$; by Corollary~\ref{cor:sparsity} each of these sets contains at most one strategy from $A(i,t)$.

\begin{claim}\label{cl:clean-phase}
In a clean phase $i_\Phase$, for each time $t$ we have
\begin{align}\label{eq:clean-phase}
 \textstyle{\sum_{v\in S(t)}} \Delta(v)\, n_t(v)
    \leq O(\Czoom\,i_\Phase)^{1-\gamma}\, t^\gamma,
\end{align}
where $\gamma = \tfrac{d+1}{d+2}$ and $d$ is the $C$-zooming dimension.
\end{claim}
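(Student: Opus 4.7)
The plan is to partition the active strategies $S(t)$ according to dyadic scales of $\Delta(v)$, then combine two different upper bounds: the per-strategy bound from Lemma~\ref{lm:bound-active} for strategies with $\Delta(v)$ large, and a trivial total-plays bound for strategies with $\Delta(v)$ small. Finally, optimize the cutoff between the two regimes.

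First, for each integer $i \geq 0$, I would use the already-established estimate $|A(i,t)| \leq \Czoom\, 2^{id}$ together with Lemma~\ref{lm:bound-active}. Since $v \in A(i,t)$ satisfies $\Delta(v) \leq 2^{-i}$, the lemma gives $n_t(v) \leq O(i_\Phase)\,\Delta^{-2}(v) \leq O(i_\Phase)\, 2^{2(i+1)}$, and therefore $\Delta(v)\, n_t(v) \leq O(i_\Phase)\, \Delta^{-1}(v) \leq O(i_\Phase)\, 2^{i+1}$. Summing over $v \in A(i,t)$ yields
\[
\sum_{v \in A(i,t)} \Delta(v)\, n_t(v) \;\leq\; O(\Czoom\, i_\Phase)\, 2^{i(d+1)}.
\]

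Next, I would pick a threshold $\rho \in (0,1]$ and split $S(t)$ into the ``large-gap'' strategies (those with $\Delta(v) \geq \rho$) and the ``small-gap'' strategies. The large-gap portion is covered by the scales $i$ with $2^i \leq 1/\rho$, so summing the previous inequality over $i = 0, 1, \ldots, \lceil \log_2(1/\rho) \rceil$ gives a geometric series dominated by its last term, producing a bound of order $O(\Czoom\, i_\Phase)\, \rho^{-(d+1)}$. For the small-gap portion, I would simply use $\sum_{v \in S(t)} n_t(v) \leq t$ to conclude $\sum_{v:\, \Delta(v) < \rho} \Delta(v)\, n_t(v) \leq \rho\, t$.

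Combining the two contributions gives
\[
\sum_{v \in S(t)} \Delta(v)\, n_t(v) \;\leq\; O(\Czoom\, i_\Phase)\, \rho^{-(d+1)} + \rho\, t.
\]
Now I would balance the two terms by choosing $\rho = (\Czoom\, i_\Phase / t)^{1/(d+2)}$, which yields a bound of order $(\Czoom\, i_\Phase)^{1/(d+2)}\, t^{(d+1)/(d+2)} = O(\Czoom\, i_\Phase)^{1-\gamma}\, t^\gamma$, as required. The only real subtlety is to make sure the chosen $\rho$ lies in $(0,1]$; when $\rho$ forced by optimization exceeds $1$ (i.e., when $t$ is very small compared to $\Czoom\, i_\Phase$), the bound~\refeq{eq:clean-phase} is trivial because the left-hand side is at most $t$, so no separate argument is needed. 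This last detail, together with bookkeeping on the implicit constants in the dyadic sum, is the only place where care is needed; the core of the argument is a routine two-scale decomposition.
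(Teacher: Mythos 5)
Your proof is correct and follows essentially the same route as the paper's: the same dyadic decomposition into the sets $A(i,t)$, the same use of Lemma~\ref{lm:bound-active} together with the packing bound $|A(i,t)| \leq \Czoom\, 2^{id}$, and the same choice of threshold $\rho$ balancing $\rho t$ against $O(\Czoom\, i_\Phase)\rho^{-(d+1)}$. Your remark about the degenerate case $\rho > 1$ is a small extra piece of bookkeeping the paper leaves implicit.
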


\begin{proof}
Fix the time horizon $t$. For a subset $S \subset X$ of strategies, let
	$R_{S} = \sum_{v\in S} \Delta(v)\, n_t(v)$.
Let us choose $\rho\in (0,1)$ such that
$$ \rho t = (\tfrac{1}{\rho})^{d+1} (\Czoom\, i_\Phase)
	=t^{\gamma}\, (\Czoom\,i_\Phase)^{1-\gamma}.
$$

Define $B$ as the set of all strategies $v\in S(t)$ such that $\Delta(v)\leq \rho$.
Recall that by Lemma~\ref{lm:bound-active} for each $v\in A(i,t)$ we have
	$ n_t(v)  \leq O(i_\Phase)\, \Delta^{-2}(v)$.
Then
\begin{align*}
 R_{A(i,t)}
	&\leq O(i_\Phase) \, \textstyle{\sum_{v\in A(i,t)}}\, \Delta^{-1}(v) \\
        & \leq O(2^i\,i_\Phase) \, |A(i,t)|  \\
	& \leq O(\Czoom \,i_\Phase)\, 2^{i(d+1)}
	\\
\sum_{v\in S(t)} \Delta(v)\, n_t(v)
     &\leq
  R_B + \sum_{i< \log(1/\rho)} R_{A(i,t)} \\
	& \leq \rho t + O(\Czoom \, i_\Phase)\, (\tfrac{1}{\rho})^{d+1} \\
	& \leq O \left( \, t^\gamma\, (\Czoom\, i_\Phase)^{1-\gamma} \right).
\qedhere
\end{align*}
\end{proof}

The left-hand side of~\refeq{eq:clean-phase} is essentially the contribution of the current phase to the overall regret. It remains to sum these contributions over all past phases.

\begin{proofof}{Theorem~\ref{thm:zooming-dim}}
Let $i_\Phase$ be the current phase, let $t$ be the time spend in this phase, and let $T$ be the total time since the beginning of phase $1$. Let $R_\Phase(i_\Phase, t)$ be the left-hand side of~\refeq{eq:clean-phase}. Combining Claim~\ref{cl:conf-rad} and Claim~\ref{cl:clean-phase}, we have
\begin{align*}
E[R_\Phase(i_\Phase, t)]
    & < O(\Czoom\,i_\Phase)^{1-\gamma}\, t^\gamma,\\
R_\A(T)
    &= E\left[
    R_\Phase(i_\Phase, t) +
    \sum_{i=1}^{i_\Phase-1} R_\Phase(i, 2^i)
    \right] \\
    & < O(\Czoom\,\log T)^{1-\gamma}\, T^\gamma. \qedhere
\end{align*}
\end{proofof}

\section{Attaining the optimal  per-metric performance}
\label{sec:pmo}

In this section we ask, ``What is the best possible algorithm for the
Lipschitz MAB problem on a given metric space?''  We consider the
\emph{per-metric performance}, which we define as the
worst-case performance of a given algorithm over all possible problem
instances on a given metric. As everywhere else in this
paper, we focus on minimizing the exponent $\gamma$ such that $R_\A(t)
\leq t^\gamma$ for all sufficiently large $t$.  Motivated by the shape of the guarantees in Theorem~\ref{thm:intro.1d}, let us define the \emph{regret dimension} of an algorithm as follows.

\begin{definition}
Consider the Lipschitz MAB problem on a given metric space. For algorithm \A\ and problem instance $\mathcal{I}$ let
$$\RegretDim_{\mathcal{I}}(\A) = \inf_{d\geq 0} \{ \exists t_0\; \forall t\geq t_0\;\;
	R_\A(t) \leq t^{1-1/(d+2)}  \}.$$
The \emph{regret dimension} of \A\ is
	$\RegretDim(\A) = \sup_{\mathcal{I}}\, \RegretDim_{\mathcal{I}}(\A)$,
where the supremum is taken over all problem instances $\mathcal{I}$ on the given metric space.
\end{definition}

Then Theorem~\ref{thm:intro.1d} states that for the Lipschitz MAB problem on $(L_d, [0,1])$, the regret dimension of the ``na\"ive algorithm" is at most $d$. In fact, it is easy to extend the ``na\"ive algorithm" to arbitrary metric spaces. Such algorithm is parameterized by the covering dimension $d$ of the metric space. It divides time into phases of exponentially increasing length, chooses a $\delta$-net during each phase,\footnote{It is easy to see that the cardinality of this $\delta$-net is $K=O(\delta^{-d})$.} and runs a $K$-armed bandit algorithm such as $\textsc{ucb1}$ on the elements of the $\delta$-net. The parameter $\delta$ is tuned optimally given $d$ and the phase length $T$; the optimal value turns out to be $\delta = T^{-1/(d+2)}$. Using the technique from~\cite{Bobby-nips04} it is easy to prove that the regret dimension of this algorithm is at most $d$.

\begin{lemma}\label{lm:naive-alg}
Consider the Lipschitz MAB problem on a metric space $(L,X)$ of covering dimension $d$. Let \A\ be the na\"ive algorithm that uses $\textsc{ucb1}$ in each phase. Then $\RegretDim(\A)\leq d$.
\end{lemma}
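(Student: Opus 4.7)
The plan is to follow the informal recipe given just before the lemma: in each phase of length $T$ discretize $X$ by a $\delta$-net $S_\delta$, run \textsc{ucb1} on $S_\delta$, optimize $\delta$, and sum over doubling phases. Fix any $d' > d = \COV(L,X)$; by the definition of covering dimension there is a constant $c = c(d')$ with $N(\delta) \leq c\,\delta^{-d'}$ for all $\delta > 0$, so the net has cardinality $K \leq c\,\delta^{-d'}$. The Lipschitz condition guarantees that every $x \in X$ lies within distance $\delta$ of some $x' \in S_\delta$, hence $\mu(x') \geq \mu(x) - \delta$; in particular the best strategy in the net has expected payoff at least $\mu^* - \delta$, so the discretization contributes at most $\delta T$ to the per-phase regret.

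For the regret suffered inside the net I would invoke the standard distribution-free guarantee for \textsc{ucb1} of Auer--Cesa-Bianchi--Fischer~\cite{bandits-ucb1}, namely $O(\sqrt{KT\log T})$ over $K$ arms in $T$ rounds. Combining the two contributions, the per-phase regret is at most
\begin{equation*}
    R(T) \;\leq\; \delta\, T \;+\; O\!\left(\sqrt{c\,\delta^{-d'}\,T \log T}\,\right).
\end{equation*}
Balancing the two terms gives the prescribed tuning $\delta = T^{-1/(d'+2)}$ (up to logarithmic factors) and yields $R(T) = O_{d'}\!\left(T^{\,1-1/(d'+2)}\,\log^{1/(d'+2)} T\right)$. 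Since the phases have geometrically increasing lengths, summing over all phases through time $t$ inflates the estimate only by a constant factor, so $R_\A(t) = O_{d'}\!\left(t^{\,1-1/(d'+2)}\,\log^{1/(d'+2)} t\right)$.

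To finish, fix any $d^* > d$ and choose $d' \in (d, d^*)$: for all sufficiently large $t$ the polylogarithmic factor is absorbed by the gap between the exponents $1 - 1/(d'+2)$ and $1 - 1/(d^*+2)$, so $R_\A(t) \leq t^{\,1-1/(d^*+2)}$. Taking the infimum over $d^* > d$ gives $\RegretDim_{\mathcal{I}}(\A) \leq d$ uniformly over problem instances, hence $\RegretDim(\A) \leq d$. The one mild subtlety I anticipate is that the covering bound $N(\delta) \leq c\,\delta^{-d'}$ is only guaranteed for $d' > d$ (with $c(d')$ possibly blowing up as $d' \downarrow d$), not at the exact value $d$; working with a $d'$ strictly between $d$ and the target $d^*$ keeps $c$ finite while still driving the exponent arbitrarily close to $d$. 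Otherwise the argument is a routine black-box combination of \textsc{ucb1} with a metric discretization, with no serious obstacle.
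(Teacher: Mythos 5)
Your proposal is correct and follows essentially the same route as the paper's proof: a $\delta$-net of cardinality $O(\delta^{-d})$ per phase, the $O(\sqrt{Kt\log t})$ guarantee for \textsc{ucb1}, a $\delta t$ discretization term, the tuning $\delta = T^{-1/(d+2)}$, and summation over geometrically growing phases. Your handling of the infimum in the definition of covering dimension (working with $d'$ strictly between $d$ and the target exponent so the covering constant stays finite) is in fact slightly more careful than the paper's, which simply assumes $d$ is a $c$-covering dimension; otherwise the two arguments coincide.
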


\begin{proof}
Let \A\ be the na\"ive algorithm. For concreteness, assume each phase $i$ lasts $2^i$ rounds. By definition of the covering dimension, it suffices to assume that $d$ is a $c$-covering dimension, for some constant $c>0$. By definition of the regret dimension, it suffices to prove that
	$R_\A(t) \leq \tilde{O}(t^\gamma)$
for all $t$, where $\gamma = \tfrac{d+1}{d+2}$. In order to prove \emph{that}, it suffices to show that for each phase $i$ we have
	$R_{(\A,\,i)}(2^i) \leq \tilde{O}(2^{i\gamma}) $,
where $R_{(\A,\,i)}(t)$ is the expected regret accumulated in the first $t$ rounds of phase $i$.

Let us focus on some phase $i$. In this phase the algorithm chooses a $\delta$-net, call it $S$. We claim that $|S|\leq c\,\delta^{-d}$. Indeed, for any $\delta'<\delta$ the metric space can be covered by $c\,\delta^{-d}$ sets of diameter at most $\delta'$, each of which can contain only one point from $S$. Claim proved. The algorithm proceeds to run $\textsc{ucb1}$ on the elements of $S$. By~\cite{bandits-ucb1} the expected regret of $\textsc{ucb1}$ on $K$ arms in $t$ rounds is at most $O(\sqrt{K\, t \log t})$. Since the maximal $\mu$ on $S$ is at most $\delta$ off of the maximal $\mu$ on $X$, we have
$$ R_{(\A,\,i)}(t) \leq O(\sqrt{|S|\, t \log t}) + \delta t
	\leq \tilde{O}(\sqrt{\delta^{-d}\, t} + \delta t).
$$
Plugging in $t=2^i$ and $\delta = t^{-1/(d+2)}$, we obtain
	$R_{(\A,\,i)}(2^i) \leq \tilde{O}(2^{i\gamma}) $
as claimed.
\end{proof}

Thus we ask: {\bf\em is it possible to achieve a better regret dimension}, perhaps using a more sophisticated algorithm? We show that this is indeed the case. Moreover, we provide an algorithm such that for any given metric space its regret dimension is arbitrarily close to optimal.

The rest of this section is organized as follows.  In Section~\ref{sec:PMO-LB} we develop a lower bound on regret dimension. In Section~\ref{sec:fatness-beyond} we will show that for some metric spaces, there exist algorithms whose regret dimension is smaller than the covering dimension.  We develop these ideas further in
Section~\ref{sec:fatness-PMO} and provide an algorithm whose regret dimension is arbitrarily close to optimal.

\OMIT{
One can prove a matching lower bound: $\RegretDim(\A)\geq d$. In fact, this bound applies to any algorithm which uses a generic $k$-armed bandit algorithm on an evenly spaced sample of the strategy space, in the sense made precise in Section~\ref{sec:LB-naive}. In Section~\ref{sec:LB-naive} we prove a matching lower bound for the na\"ive algorithm.

\subsection{Lower bound for the na\"ive-type algorithms}
\label{sec:LB-naive}

We consider a more general version of the na\"ive algorithm described above, abstracting away the crucial idea of using a generic $k$-armed bandit algorithm on an evenly spaced sample of the strategy space.

\begin{definition}
A bandit algorithm is called \emph{uniformly exploring} if it proceeds in phases such that in the beginning of each phase $i$ it chooses $\delta_i>0$, a $\delta_i$-net $\mathcal{N}_i$ of the strategy space, and a $k$-armed bandit algorithm \A, $k=|\mathcal{N}_i|$
\end{definition}

We prove that the covering dimension is the best regret dimension achievable by a na\"ive algorithm.

\begin{theorem}\label{thm:naive-alg}
Consider the na\"ive algorithm for the Lipschitz MAB problem on a metric space of covering dimension $d$. The regret dimension of this algorithm is exactly $d$.
\end{theorem}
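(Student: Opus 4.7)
The upper bound $\RegretDim(\A) \leq d$ is already established in Lemma~\ref{lm:naive-alg}, so the task reduces to the matching lower bound $\RegretDim(\A) \geq d$. By the definition of $\RegretDim$, it suffices to exhibit a single problem instance $\mathcal{I}$ on which, for every $d' < d$, the inequality $R_\A(t) \leq t^{1-1/(d'+2)}$ fails for infinitely many $t$. My plan is to exploit the structural rigidity of the naïve algorithm: within each phase $i$ it commits in advance to a fixed $\delta_i$-net $\mathcal{N}_i$ of size $K_i = \Theta(\delta_i^{-d})$ and runs \textsc{ucb1} on it, so the adversary can independently defeat both components.

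The heart of the argument is a two-way tradeoff, each direction of which admits a standard lower bound. First, the \emph{mesh error}: in any metric space of covering dimension $d$ there is a $\delta_i$-packing of size $\Omega(\delta_i^{-d})$ (since a maximal packing is also a covering), so whenever $K_i \ll \delta_i^{-d}$ a pigeonhole argument produces packing points at distance $\Omega(\delta_i)$ from $\mathcal{N}_i$. Placing a narrow Lipschitz bump of height $\Theta(\delta_i)$ at such a point forces $\Omega(\delta_i \cdot T_i)$ regret during a phase of length $T_i$, because every strategy available to \A\ has expected reward bounded away from the optimum by $\Omega(\delta_i)$. Second, the \emph{bandit error}: the classical $\Omega(\sqrt{K T})$ minimax lower bound for $K$-armed stochastic bandits applies to \textsc{ucb1} as a black box, so an adversarial instance whose gaps within $\mathcal{N}_i$ are tuned to $\Theta(1/\sqrt{K_i T_i})$ yields $\Omega(\sqrt{K_i T_i})$ regret. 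Since $\max(\delta_i T_i,\sqrt{\delta_i^{-d} T_i}) = \Omega(T_i^{(d+1)/(d+2)})$ for every choice of $\delta_i$, the per-phase regret is $\Omega(T_i^{(d+1)/(d+2)})$ regardless of how the algorithm chose $\delta_i$.

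The main obstacle is that \A\ may adapt $\delta_i$ and $\mathcal{N}_i$ using data from earlier phases, while the adversary must commit to a single instance $(L,X,\mu)$ at the outset. I would bridge this gap by a Yao-style minimax step: fix a target horizon $t$, take a $\delta$-packing $P$ at scale $\delta \asymp t^{-1/(d+2)}$, and consider the uniform distribution over instances $\{\mu_p: p \in P\}$, each carrying a bump at $p$. The expected regret over $p$ is $\Omega(t^{(d+1)/(d+2)})$, so some $p$ achieves this bound for this $t$. To upgrade ``some $t$'' to ``infinitely many $t$ under one fixed $\mathcal{I}$'' — the key technical hurdle — I would use a nested/fractal construction that embeds bumps at a geometric sequence of scales $\delta_k \asymp 2^{-k/(d+2)}$ into mutually disjoint balls, with heights chosen compatibly with the Lipschitz condition, so that at the end of each phase $k$ the currently-relevant scale is poorly approximated by any net of size $\leq T^{d/(d+2)}$ the algorithm can afford. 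This yields $R_\A(t_k) = \tilde{\Omega}(t_k^{(d+1)/(d+2)})$ along an infinite subsequence of horizons, giving $\RegretDim_{\mathcal{I}}(\A) \geq d$ and closing the gap.
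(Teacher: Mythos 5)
A preliminary but important point: in the source, this theorem sits inside a commented-out block whose ``proof'' is literally the placeholder \emph{add proof!}, so the paper contains no argument to compare yours against; I can only assess your proposal on its merits. The upper bound is indeed Lemma~\ref{lm:naive-alg}, and your high-level dichotomy (mesh error versus bandit error, with $\max(\delta T,\sqrt{\delta^{-d}T})\geq T^{(d+1)/(d+2)}$) is the right starting point. But two of your supporting claims are shaky and the decisive step is missing. First, ``covering dimension $d$ implies a $\delta$-packing of size $\Omega(\delta^{-d})$'' is false at an arbitrary fixed scale $\delta$: $\COV$ is defined via an upper bound on covering numbers, and for $b<d$ it only guarantees \emph{some} arbitrarily small scales $r$ with packing number $\geq r^{-b}$ (this is exactly Lemma~\ref{lem:packing}). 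Since the na\"ive algorithm's scales $\delta_i=2^{-i/(d+2)}$ form a geometric sequence of bounded ratio this is repairable, but only along an infinite subsequence of phases and only with exponent $b<d$. Second, the mesh-error horn can vanish entirely: a maximal $\delta$-separated net can have covering radius far smaller than $\delta$ (in a locally sparse space it can be all of $X$), so there need not exist any point at distance $\Omega(\delta_i)$ from $\mathcal{N}_i$, and hence no room for a height-$\Theta(\delta_i)$ bump invisible to the net. The burden therefore falls essentially entirely on the bandit horn.

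The genuine gap is your last paragraph. A nested/fractal construction needs, inside each ball of the previous level, on the order of $\delta^{-b}$ net points at the next relevant scale --- a packing-richness condition on \emph{small balls}, which is a $\MinCOV$-type hypothesis, not a $\COV$ hypothesis. Section~\ref{sec:fatness-beyond} exhibits spaces with $\MaxMinCOV(X)<\COV(X)$ in which the balls arising in such a recursion are far too thin to continue it; on those spaces the nested needle construction of Theorem~\ref{thm:mincov-lb} certifies only the exponent $\MaxMinCOV(X)$, and no algorithm-independent (Yao-style) argument can certify $d=\COV(X)$, because Theorem~\ref{thm:PMO} supplies algorithms with regret dimension arbitrarily close to $\MaxMinCOV(X)$. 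So any correct proof must exploit the na\"ive algorithm's specific rigidity: its nets are data-independent, and \textsc{ucb1} provably spends $\Omega(\log T/\Delta^2)$ pulls on every arm with gap $\Delta$ (a per-instance lower bound for this particular index policy, which is the right tool here rather than the black-box $\Omega(\sqrt{KT})$ minimax bound, since the latter only produces an instance per horizon, not one instance bad at infinitely many horizons). Concretely, the near-optimal sets $\{\,x:\mu(x)\geq\mu^*-m_i\,\}$ must be nested as \emph{sublevel sets} but should be allowed to be geometrically disjoint and spread across the space, each one flat to within $m_i$ and containing a $\delta_i$-separated set of size $\delta_i^{-b}$ drawn from wherever the phase-$i$ packing happens to live; one must then check that needles planted for phase $j$ do not collapse the gap structure seen by \textsc{ucb1} in phase $i\neq j$. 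You gesture at ``mutually disjoint balls'' but then commit to a nested construction with a single geometric sequence of scales; as written, that construction fails on exactly the inhomogeneous metric spaces this paper cares most about, so the proof is not closed.
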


\begin{proofof}{\ref{thm:naive-alg}}
\bf{add proof!}
\end{proofof}
} 

\subsection{Lower bound on regret dimension}
\label{sec:PMO-LB}

Let us develop a lower bound on regret dimension of any algorithm on a given metric space. This bound is equal to the covering dimension for highly homogeneous metric spaces (such as those in which all balls of a given radius are isometric to each other), but in general it can be much smaller.

It is known~\cite{bandits-exp3} that a worst-case instance
of the $K$-armed bandit problem consists of $K-1$ strategies
with identical payoff distributions, and one which is slightly
better.  We refer to this as a ``needle-in-haystack''
instance.  The known constructions of lower bounds for Lipschitz MAB
problems rely on creating a \emph{multi-scale} needle-in-haystack
instance in which there are $K$ disjoint open sets, and $K-1$
of them consist of strategies with identical payoff distributions,
but in the remaining open set there are strategies whose payoff
is slightly better.  Moreover, this special open set contains
$K' \gg K$ disjoint subsets, only one of which contains
strategies superior to the others, and so on down through infinitely
many levels of recursion.  To ensure that this construction
can be continued indefinitely, one needs to assume a covering
property which ensures that \emph{each} of the open sets arising
in the construction has sufficiently many disjoint subsets to
continue to the next level of recursion.
\begin{definition} \label{def:mincov}
For a metric space $(L,X)$, we say that $d$ is the
\emph{min-covering dimension} of $X$,
$d = \MinCOV(X)$, if $d$ is the infimum of $\COV(U)$ over all non-empty
open subsets $U \subseteq X$.
The \emph{max-min-covering dimension} of $X$ is defined by
\[ \MaxMinCOV(X) = \sup \{ \MinCOV(Y) \,:\, Y \subseteq X \}.
\]
\end{definition}
The infimum over open $U \subseteq X$ in the
definition of min-covering dimension ensures that every
open set which may arise in the needle-in-haystack
construction described above will contain $\Omega(\delta^{\varepsilon-d})$
disjoint $\delta$-balls
for some sufficiently small $\delta, \varepsilon$.
Constructing lower bounds for Lipschitz MAB algorithms
in a metric space $X$ only requires  that $X$ should have \emph{subsets}
with large min-covering dimension, which explains the
supremum over subsets in the definition of
max-min-covering dimension.

We will use the following simple packing lemma.\footnote{This is a folklore result; we provide the proof for convenience.}

\begin{lemma} \label{lem:packing}
If $Y$ is a metric space of covering dimension
$d$, then for any $b<d$ and $r_0>0$, there exists
$r \in (0,r_0)$ such that $Y$ contains a collection
of at least $r^{-b}$ disjoint open balls of radius
$r$.
\end{lemma}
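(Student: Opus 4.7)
The plan is to argue by contradiction via the standard packing--covering duality. Fix $b < d = \COV(Y)$ and $r_0 > 0$, and assume for contradiction that for every $r \in (0, r_0)$ the space $Y$ contains strictly fewer than $r^{-b}$ pairwise disjoint open balls of radius $r$; I will derive from this an upper bound on the covering numbers of $Y$ that forces $\COV(Y) \leq b$, a contradiction.

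For each such $r$, let $\{B(x_i, r)\}_{i=1}^{m}$ be a \emph{maximal} collection of pairwise disjoint open balls of radius $r$ in $Y$; by the standing assumption, $m < r^{-b}$. Maximality forces that for every $y \in Y$ the ball $B(y,r)$ must intersect some $B(x_i,r)$, so $y \in B(x_i, 2r)$. Hence $\{B(x_i, 2r)\}_{i=1}^m$ covers $Y$, and since each such ball has diameter at most $4r$,
\[
 N(4r) \;\leq\; m \;<\; r^{-b} \;=\; 4^{b}\,(4r)^{-b}.
\]
Writing $s = 4r$, this yields $N(s) < 4^{b}\, s^{-b}$ for every $s \in (0, 4r_0)$.

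To finish, I would upgrade this small-$s$ bound to one valid on the whole range relevant to the definition of $\COV$. Since $N(\cdot)$ is non-increasing in its argument and $Y$ inherits the standing diameter bound of $1$, $N(s)$ is uniformly bounded above on $[4r_0, 1]$ by the limiting value $4^{b}(4r_0)^{-b}$; absorbing this bound into a larger multiplicative constant $C = C(b, r_0)$ gives $N(s) \leq C\, s^{-b}$ across the full range. This certifies that $b$ lies in the set whose infimum defines $\COV(Y)$, so $\COV(Y) \leq b < d$, contradicting $d = \COV(Y)$.

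There is no real obstacle here: the argument is a two-line consequence of the basic fact that a maximal $r$-packing is automatically a $2r$-covering. The only piece that requires a bit of care is the translation of the small-$s$ estimate into the uniform-in-$s$ form required by the definition of $\COV$, which is why I separated it out in the last paragraph. The degenerate case $b \leq 0$ is trivial since $r^{-b} \leq 1$ (any single point, viewed as an open ball, already satisfies the packing bound), so one may assume $b > 0$ throughout.
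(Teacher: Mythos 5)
Your proof is correct and rests on exactly the same idea as the paper's: a maximal collection of disjoint $r$-balls yields a covering by balls of radius $2r$, so a large covering number forces a large packing. The paper argues directly (choosing one $r<r_0$ where more than $r^{-b}$ sets of diameter $2r$ are needed) while you argue the contrapositive and then patch the bound up to all scales, but these are the same argument in different clothing.
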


\begin{proof}
Let $r < r_0$ be a positive number such that every
covering of $Y$ requires more than $r^{-b}$ balls of
radius $2r$.  Such an $r$ exists, because the
covering dimension of $Y$ is strictly greater
than $b$.  Now let $\mathcal{P} = \{B_1,B_2,\ldots,B_M\}$ be any
maximal collection of disjoint $r$-balls.  For
every $y \in Y$ there must exist some ball $B_i
\; (1 \leq i \leq M)$ whose center is within
distance $2r$ of $y$, as otherwise $B(y,r)$
would be disjoint from every element of $\mathcal{P}$
contradicting the maximality of that collection.
If we enlarge each ball $B_i$ to a ball $B_i^+$
of radius $2r$, then every $y \in Y$ is contained
in one of the balls $\{B_i^+ \,|\, 1 \leq i \leq M\}$,
i.e. they form a covering of $Y$.  Hence
$M \geq r^{-b}$ as desired.
\end{proof}

\begin{theorem} \label{thm:mincov-lb}
If $X$ is a metric space and $d$ is the max-min-covering
dimension of $X$ then
$\RegretDim(\A) \geq d$ for every bandit algorithm $\A$.
\end{theorem}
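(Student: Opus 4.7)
My plan is to prove, for every $b < d$ and every algorithm $\A$, the existence of a single Lipschitz MAB instance $\mathcal{I}$ on $X$ for which $R_\A(t) > t^{(b+1)/(b+2)}$ at infinitely many times $t$, giving $\RegretDim_{\mathcal{I}}(\A) \geq b$ and hence $\RegretDim(\A) \geq d$. The instance will be a ``Russian doll'' of needle-in-haystack problems extracted, by Yao's principle, from a randomized ensemble. Since $d = \MaxMinCOV(X)$, I first fix $Y \subseteq X$ with $\MinCOV(Y) > b$; then every non-empty relatively open $U \subseteq Y$ has $\COV(U) > b$, and Lemma~\ref{lem:packing} supplies, at arbitrarily small scales, disjoint packings of at least $r^{-b}$ open $r$-balls inside $U$.

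I would build the ensemble by recursing on scales $r_0 > r_1 > \cdots \downarrow 0$. Given a ball $B_{k-1}$ meeting $Y$, apply Lemma~\ref{lem:packing} to the non-empty open set $B_{k-1}\cap Y$ to obtain some $r_k$ and $K_k \geq r_k^{-b}$ pairwise disjoint open $r_k$-balls $B_{k,1},\dots,B_{k,K_k}$ lying in the concentric half-radius sub-ball of $B_{k-1}$. Pick the ``needle'' $j_k$ uniformly from $[K_k]$ and set $B_k := B_{k,j_k}$. Define
\[
    \mu(x) \;=\; \tfrac{1}{2} \,+\, \textstyle{\sum_{k\geq 1}}\, \psi_k(x),\qquad
    \psi_k(x) \;:=\; \bigl(\eps_k - L(x,B_k)\bigr)_+\cdot \mathbf{1}[x \in B_{k-1}],
\]
with $\eps_k := r_k$. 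Because $B_{k+1}\subset B_k$ sits inside the flat top of $\psi_k$, at any point at most one bump is in its sloped region, so $\mu$ is $1$-Lipschitz; choosing the $\eps_k$ to sum to at most $\tfrac12$ keeps $\mu \in [0,1]$. Two sibling instances that differ only at scale $k$ have payoff distributions differing by at most $O(\eps_k)$ on $B_{k-1}$ and agreeing elsewhere, so a per-pull KL-divergence bound of $O(\eps_k^2)$ applies. The $K$-armed needle-in-haystack argument of~\cite{bandits-exp3}, run at horizon $t_k := K_k/\eps_k^2 \approx r_k^{-(b+2)}$, then forces expected regret $\Omega(K_k/\eps_k) = \Omega\!\left(r_k^{-(b+1)}\right) = \Omega\!\left(t_k^{(b+1)/(b+2)}\right)$ against a uniformly random needle, and Yao's principle extracts a deterministic $\mathcal{I}$ realizing this bound at infinitely many $k$. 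Taking $b \uparrow d$ closes the proof.

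The main obstacle will be decoupling scales in the KL argument, because at scale $k$ the algorithm faces payoff perturbations from bumps at all finer scales $k' > k$. I would neutralize this by making the scale recursion very aggressive (e.g.\ $r_{k+1} \leq r_k^{10}$), so that the total variation perturbation of sibling instances due to deeper bumps over the first $t_k$ rounds is $o(1)$ and the standard KL lower bound survives up to constants. A secondary bookkeeping issue---verifying that the recursion continues inside an arbitrary $B_{k-1}$---is handled by noting that $B_{k-1}\cap Y$ is open in $Y$ and hence inherits covering dimension $> b$, making Lemma~\ref{lem:packing} applicable at every step.
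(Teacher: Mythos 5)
Your overall strategy --- fix $Y$ with $\MinCOV(Y)>b$, recursively pack disjoint balls via Lemma~\ref{lem:packing}, place a nested sequence of Lipschitz bumps along a uniformly random chain of needles, and run the KL-divergence needle-in-haystack argument of~\cite{bandits-exp3} at each scale --- is exactly the paper's. One of your worries is, however, a non-issue: the cross-scale coupling does not require $r_{k+1}\le r_k^{10}$. The paper compares each alternative $\mu^k$ (which carries its \emph{entire} infinite tail of deeper bumps nested inside $B^k$) against a reference measure $\mu^0$ having no needle at level $i$ or deeper; since the bump heights decay geometrically ($r_{j+1}<r_j/4$), the total perturbation $\mu^k-\mu^0$ is still supported on $B^k$ and bounded by $r_i$, so the single-scale KL bound applies verbatim with no decoupling argument at all.

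The genuine gap is in the last step, ``Yao's principle extracts a deterministic $\mathcal{I}$ realizing this bound at infinitely many $k$.'' An expected-regret lower bound at each horizon $t_k$ separately yields, for each $k$, \emph{some} instance that is hard at time $t_k$, but not a single instance hard at infinitely many of them; for that you need a uniform constant-probability (or better, summable-failure-probability) statement, followed by reverse Fatou or Borel--Cantelli over $k$. At your exactly-critical horizon $t_k=K_k/\eps_k^2$ the KL bound gives only $\mathbf{E}[N_*]=O\bigl(t_k/K_k + t_k\,\eps_k\sqrt{t_k/K_k}\bigr)=O(t_k)$ with an unspecified constant, so Markov yields $\Pr[N_*\ge t_k/2]=O(1)$, which may exceed $1$ and proves nothing. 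The paper avoids this by inserting an extra parameter $a<b$ and running the argument at the sub-critical horizon $t_i=r_i^{-a-2}$ (while packing $r_i^{-b}$ balls), which makes the failure probability $O(r_i^{(b-a)/2})$ summable and lets Borel--Cantelli produce a single random instance that is almost surely hard for \emph{every} algorithm --- a conclusion stronger than the per-algorithm hardness you aim for. Your version is repairable (shrink $t_k$ by a sufficiently small constant factor to force $\Pr[N_*\ge t_k/2]\le 1/2$, or introduce the $a<b$ gap), but as written the extraction of a single instance does not go through.
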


\OMIT{
In fact for every $\gamma < \frac{d+1}{d+2}$, there is
a distribution on instances $\mu$ such that for every
bandit algorithm $\A$ and constant $C < \infty$, it
holds that $\Pr(\forall t \; R_{\A}(t) < C t^{\gamma}) = 0,$
where the probability is over the random choice of $\mu$.
} 

\begin{proof}
Without loss of generality let us assume that $d>0$. Given
    $\gamma < \tfrac{d+1}{d+2},$ let $a < b < c < d$
be such that $\gamma < \tfrac{a+1}{a+2}$. Let $Y$ be a subset of $X$ such that $\MinCOV(Y) \geq c$. Using Lemma~\ref{lem:packing} we recursively construct an infinite sequence of sets $\mathcal{P}_0, \mathcal{P}_1, \ldots$ each consisting of finitely many disjoint open balls in $X$, centered at points of $Y$.  Let $\mathcal{P}_0 = \{X\}$ consist of a single ball that contains all of $X$. If $i>0$, for every  ball $B \in \mathcal{P}_{i-1}$, let $r$ denote the radius of $B$ and choose a number $r_i(B)\in (0, r/4)$ such that $B$ contains $n_i(B) = \lceil r_i(B)^{-b} \rceil$ disjoint balls of radius $r_i(B)$ centered at points of $Y$.  Such a collection of disjoint balls exists, by Lemma~\ref{lem:packing}. Let $\mathcal{P}_i(B)$ denote this collection of disjoint balls and let
    $\mathcal{P}_i = \bigcup_{B \in \mathcal{P}_{i-1}} \mathcal{P}_i(B).$
Now sample a random sequence of balls $B_1, B_2, \ldots$
by picking $B_1 \in \mathcal{P}_1$ uniformly at random,
and for $i>1$ picking $B_i \in \mathcal{P}_i(B_{i-1})$
uniformly at random.

Given a ball $B = B(x_*, r_*)$, let $f_B(x)$ be a Lipschitz
function on $X$ defined by
\begin{align}\label{eq:fB}
f_B(x) =
\begin{cases}
\min \{r_*-L(x,x_*),r_*/2\} & \mbox{if $x \in B$} \\
0 & \mbox{otherwise}
\end{cases}.
\end{align}
Let $f_i = f_{B_i}$ for $i\geq 1$. Define $f_0$ by setting $f_0(x)=1/3$ for all $x \in X$.
The reader may verify that the sum $\mu = \sum_{i=0}^{\infty} f_i$
is a Lipschitz function.  Define the payoff distribution
for $x \in X$ to be a Bernoulli random variable with
expectation $\mu(x)$.  We have thus specified a
randomized construction of an instance $(L,X,\mu)$.

We claim that for any algorithm $\A$ and any constant
$C$,
\begin{align}\label{eq:stronger-result}
    \Pr_{\mu,\,\A}(\forall\, t \; R_{\A}(t) < C t^{\gamma}) = 0.
\end{align}
The proof of this claim is based on a ``needle
in haystack'' lemma (Lemma~\ref{lem:haystack} below)
which states that for all $i$,
conditional on the sequence $B_1,\ldots,B_{i-1}$, with
probability at least $1-O((r_i(B_i))^{(b-a)/2})$, no more than half of the
first $t_i(B_i) = r_i(B_i)^{-a-2}$ strategies picked by $\A$
lie inside $B_i$.  The proof of the lemma is deferred
to the end of this section.

Any strategy $x \not\in B_i$
satisfies $\mu(x) < \mu(x^*) - r_i/2$, so we may conclude
that
\begin{align}
\Pr_\mu\left( R_{\A}(t_i(B_i))
        <\tfrac{1}{4}\, r_i(B_i)^{-a-1} \,|\, B_1,\ldots,B_{i-1} \right)
\leq
    O \left( (r_i(B_i))^{(b-a)/2} \right).
 \label{eq:borel-cantelli}
\end{align}
Denoting $r_i(B_i)$ and $t_i(B_i)$ by $r_i$ and $t_i$, respectively,
we have
$\tfrac14\,r_i^{-a-1} = \tfrac14\,t_i^{(a+1)/(a+2)} > C t_i^{\gamma}$
for all sufficiently large $i$.
As $i$ runs through the positive integers, the terms on the
right side of (\ref{eq:borel-cantelli})
are dominated by a geometric progression because
$r_i(B_i) \leq 4^{-i}.$  By the Borel-Cantelli Lemma,
almost surely there are
only finitely many $i$ such that  the events on the
left side of (\ref{eq:borel-cantelli}) occur.
Thus~\refeq{eq:stronger-result} follows.
\end{proof}

\begin{remark*}
To prove Theorem~\ref{thm:mincov-lb} it suffices to show that for every given algorithm there exists a ``hard'' problem instance. In fact we proved a stronger result~\refeq{eq:stronger-result}: essentially, we construct a probability distribution over problem instances which is hard, almost surely, for every given algorithm. This seems to be the best possible bound since, obviously, a single problem instance cannot be hard for every algorithm.
\end{remark*}


In rest of this subsection we prove the ``needle in haystack'' lemma used in the proof of Theorem~\ref{thm:mincov-lb}.

\begin{lemma} \label{lem:haystack}
Consider the randomized construction of an instance $(L,X,\mu)$ in the proof of Theorem~\ref{thm:mincov-lb}. Fix a bandit algorithm \A. Then for all $i$,
conditional on the sequence $B_1,\ldots,B_{i-1}$, with
probability at least $1-O((r_i(B_i))^{(b-a)/2})$, no more than half of the
first $r_i(B_i)^{-a-2}$ strategies picked by $\A$
lie inside $B_i$.
\end{lemma}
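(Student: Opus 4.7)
The plan is to use a classical information-theoretic bandit lower bound, adapted to the multi-scale nested construction here. Throughout, condition on $B_1,\ldots,B_{i-1}$; write $n_i := n_i(B_{i-1})$, $r_i := r_i(B_{i-1})$, and let $\tilde B_1,\ldots,\tilde B_{n_i}$ enumerate $\mathcal{P}_i(B_{i-1})$, so that $B_i$ is uniform on $\{\tilde B_1,\ldots,\tilde B_{n_i}\}$. Set $t := r_i^{-a-2}$, let $\mathbb{P}_j$ denote the law of the trajectory $(x_1,r_1,\ldots,x_t,r_t)$ conditional on $B_i = \tilde B_j$ (marginalizing over $B_{i+1},B_{i+2},\ldots$ and rewards), and let $\mathbb{P}_0$ denote the ``null'' law obtained by zeroing out $f_{B_i},f_{B_{i+1}},\ldots$, so that the payoff mean becomes $\mu_0 := f_0 + \sum_{k<i} f_{B_k}$, independent of $B_i$ and its descendants. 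The geometric decay $r_{k+1}(\cdot) \le r_k(\cdot)/4$ forces the total extra bump $f_{B_i}+f_{B_{i+1}}+\cdots$ inside $\tilde B_j$ to be bounded by some $\Delta \le r_i$ uniformly, while outside $\tilde B_j$ the payoff means under $\mathbb{P}_0$ and $\mathbb{P}_j$ coincide. Letting $N_j$ count rounds spent in $\tilde B_j$ within the first $t$ rounds and $A_j := \{N_j > t/2\}$, the target bound is $\mathbb{E}_j \mathbb{P}_j(A_j) = O(r_i^{(b-a)/2})$, since this quantity is precisely $\Pr(N_{B_i} > t/2 \mid B_1,\ldots,B_{i-1})$.

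First I handle the null side: disjointness of the $\tilde B_j$ gives $\sum_j N_j \le t$ deterministically, hence $\mathbb{E}_j \mathbb{E}_{\mathbb{P}_0}[N_j] \le t/n_i$, and Markov yields $\mathbb{E}_j \mathbb{P}_0(A_j) \le 2/n_i \le 2 r_i^b$. For the information-theoretic side, I first fix a realization $\vec B$ of $(B_{i+1},B_{i+2},\ldots)$ and let $\mathbb{P}_j^{\vec B}$ be the corresponding conditional law; the standard chain-rule bound for bandit trajectories, combined with $\mathrm{KL}(\mathrm{Ber}(p)\,\|\,\mathrm{Ber}(q)) = O((p-q)^2)$ for $p,q$ bounded away from $\{0,1\}$ (which holds here since $\mu$ stays in roughly $[1/3,1/2]$), gives $\mathrm{KL}(\mathbb{P}_0 \,\|\, \mathbb{P}_j^{\vec B}) \le O(r_i^2)\,\mathbb{E}_{\mathbb{P}_0}[N_j]$. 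Convexity of $\mathrm{KL}$ in its second argument then integrates this back to $\mathrm{KL}(\mathbb{P}_0 \,\|\, \mathbb{P}_j) \le O(r_i^2)\,\mathbb{E}_{\mathbb{P}_0}[N_j]$. Pinsker's inequality gives $\mathbb{P}_j(A_j) \le \mathbb{P}_0(A_j) + \sqrt{\mathrm{KL}(\mathbb{P}_0\,\|\,\mathbb{P}_j)/2}$, and Cauchy--Schwarz with $\mathbb{E}_j \sqrt{\mathbb{E}_{\mathbb{P}_0}[N_j]} \le \sqrt{t/n_i}$ produces
\[
\mathbb{E}_j \mathbb{P}_j(A_j) \;\le\; 2r_i^b + O\!\left(r_i\sqrt{t/n_i}\right) \;=\; O\!\left(r_i^{(b-a)/2}\right),
\]
after substituting $t = r_i^{-a-2}$ and $n_i \ge r_i^{-b}$ and noting $r_i^b \le r_i^{(b-a)/2}$ since $0 < r_i < 1$.

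The main obstacle is that under $\mathbb{P}_j$ the reward at a strategy $x \in \tilde B_j$ is not a pure Bernoulli but a mixture over the random descendants $B_{i+1},B_{i+2},\ldots$, so the per-arm KL chain rule does not apply verbatim. The fix is exactly the ``condition, then integrate via convexity'' step above; the geometric shrinkage of radii is essential since it forces the per-query mean gap to remain $O(r_i)$ uniformly across all realizations of $\vec B$, preventing any dependence on the hidden descendants from leaking into the final estimate.
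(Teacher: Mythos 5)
Your proof is correct and follows essentially the same route as the paper's: the paper also introduces a reference measure $\mu^0$ with the level-$i$ bump removed, invokes the KL/Pinsker-type trajectory bound of Auer et al.\ (its Lemma~\ref{lem:technical}) to get the $O\left(tr\sqrt{t/n}\right)$ transfer cost, averages over the $n$ candidate balls, and finishes with Markov's inequality. The only difference is cosmetic: the paper handles the hidden descendants by fixing arbitrary representative sequences $B^k_i \supseteq B^k_{i+1} \supseteq \cdots$ and noting the bound is uniform over that choice, whereas you marginalize and invoke convexity of KL in its second argument --- your treatment of that point is, if anything, slightly more careful.
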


Let us introduce some notation needed to prove the lemma.  Let us fix an arbitrary
Lipschitz MAB algorithm $\A$.  We will assume that $\A$ is
deterministic; the corresponding result for randomized
algorithms follows by conditioning on the algorithm's
random bits (so that its behavior, conditional
on these bits, is deterministic), invoking the lemma
for deterministic algorithms, and then removing the
conditioning by averaging over the distribution of
random bits.
Note that since
our construction uses only $\{0,1\}$-valued payoffs, and
the algorithm $\A$ is deterministic, the entire history
of play in the first $t$ rounds can be summarized by a
binary vector $\sigma \in \{0,1\}^t$, consisting of the
payoffs observed by $\A$ in the first $t$ rounds.
Thus a payoff
function $\mu$ determines a probability distribution
$P_{\mu}$ on the set $\{0,1\}^t$, i.e. the
distribution on $t$-step histories realized when using
algorithm $\A$ on instance $\mu$.

Let $B$ be any ball in the set $\mathcal{P}_{i-1}$,
let
    $n = n_i(B)$, $r = r_i(B)$, and $t=t_i(B) = r_i(B)^{-a-2}$.
Let $B^1,B^2,\ldots,B^n$ be
an enumeration of the balls in $\mathcal{P}_i(B)$.
Choose an arbitrary sequence of balls
$B_1 \supseteq B_2 \supseteq \ldots \supseteq B_{i-1} = B$
such that $B_1 \in \mathcal{P}_1$ and for all $j > 0$
$B_j \in \mathcal{P}(B_{j-1}).$  Similarly, for
$k=1,2,\ldots,n$, choose an arbitrary sequence
of balls $B^k = B^k_{i} \supseteq B^k_{i+1} \supseteq \ldots$
such that $B^k_{j} \in \mathcal{P}(B^k_{j-1})$ for all
$j \geq i.$  Define functions $f_j \; (1 \leq j \leq i-1)$
and $f^k_j \; (j \geq i)$ using the balls $B_j, B^k_j$,
as in the proof of Theorem~\ref{thm:mincov-lb}. Specifically, use definition~\refeq{eq:fB} and set
$f_j = f_{B_j}$ and $f^k_j = f_{B^k_j}$.
Let $\mu^0 = \sum_{j=0}^{i-1} f_j$ and
\begin{align*}
\mu^k &= \mu^0 + \sum_{j=i}^{\infty} f^k_j \quad
\mbox{(for $1 \leq k \leq n$)}.
\end{align*}
Note that the instances $\mu^k \; (1 \leq k \leq n)$ are
equiprobable under our distribution on input instances
$\mu$.  The instance $\mu^0$ is not one that could be
randomly sampled by our construction, but it is useful
as a ``reference measure'' in the following proof.
Note that the functions $\mu^k$ have the following
properties, by construction.
\begin{OneLiners}
\item[(a)] \label{fact:balanced}
$1/3 \leq \mu^k(x) \leq 2/3$ for all $x \in X$.
\item[(b)] \label{fact:close}
$0 \leq \mu^k(x) - \mu^0(x) \leq r$
for all $x \in X$.
\item[(c)] \label{fact:match}
If $x \in X \setminus B^k,$ then $\mu^k(x)=\mu^0(x)$.
\item[(d)] \label{fact:regret}
If $x \in X \setminus B^k,$ then
there exists some point $x^k \in B^k$ such that
$\mu^k(x^k) - \mu^k(x) \geq r/2.$
\end{OneLiners}

Each of the payoff functions $\mu^k \; (0 \leq k \leq n)$
gives rise to a probability distribution $P_{\mu^k}$ on
$\{0,1\}^{t}$ as described in the preceding section.
We will use the shorthand notation $P_k$ instead of
$P_{\mu^k}$.  We will also use $\mathbf{E}_k$ to denote
the expectation of a random variable under distribution
$P_k$.  Finally, we let $N_k$ denote the random variable
defined on $\{0,1\}^t$ that counts the number of rounds
$s \; (1 \leq s \leq t)$ in which algorithm $\A$ chooses
a strategy in $B^k$ given the history $\sigma$.

The following lemma is analogous to Lemma A.1 of~\cite{bandits-exp3},
and its proof is identical to the proof of that lemma.
\begin{lemma} \label{lem:technical}
Let $f \,:\, \{0,1\}^t \rightarrow [0,M]$ be any function defined
on reward sequences $\sigma$.  Then for any $k$,
\[
\mathbf{E}_k[f(\sigma)] \leq
\mathbf{E}_0[f(\sigma)] + \tfrac{M}{2}
\sqrt{- \ln(1-4 r^2) \mathbf{E}_0[N_i] }.
\]
\end{lemma}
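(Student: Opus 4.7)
The plan is to follow the standard information-theoretic argument that appears in Lemma A.1 of~\cite{bandits-exp3}. The starting point is Pinsker's inequality: since $f$ takes values in $[0,M]$, for any two distributions $P_0, P_k$ on $\{0,1\}^t$ we have $\mathbf{E}_k[f]-\mathbf{E}_0[f]\le M\cdot\|P_k-P_0\|_{TV}\le \tfrac{M}{2}\sqrt{2\,\KL(P_0\|P_k)}$, so it suffices to bound $\KL(P_0\|P_k)$ by $-\tfrac12\ln(1-4r^2)\,\mathbf{E}_0[N_i]$ (modulo a minor adjustment of constants) in order to obtain the stated inequality.

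To bound the KL divergence I would exploit the sequential structure of the reward process. Since $\A$ is assumed deterministic, the strategy $x_s$ played in round $s$ is a fixed function of the prior reward history $\sigma_{1:s-1}\in\{0,1\}^{s-1}$, and under either $\mu^0$ or $\mu^k$ the reward $\sigma_s$ is an independent $\{0,1\}$ coin with mean $\mu^0(x_s)$ or $\mu^k(x_s)$ respectively. The chain rule for KL divergence then gives
\begin{equation*}
\KL(P_0\|P_k)\;=\;\sum_{s=1}^{t}\mathbf{E}_0\!\left[\KL\!\bigl(\mathrm{Ber}(\mu^0(x_s))\,\big\|\,\mathrm{Ber}(\mu^k(x_s))\bigr)\right].
\end{equation*}
Property~(c) of the construction says $\mu^0(x)=\mu^k(x)$ whenever $x\notin B^k$, so the summand vanishes in any round $s$ where the algorithm plays outside $B^k$; the surviving rounds are exactly those counted by $N_k$ (I am treating the ``$N_i$'' in the statement as a typo for $N_k$, consistent with the preceding definition).

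For the rounds in which $x_s\in B^k$, properties~(a) and~(b) give $\mu^0(x_s),\mu^k(x_s)\in[1/3,2/3]$ with $|\mu^0(x_s)-\mu^k(x_s)|\le r$. A direct calculation using the explicit form $\KL(\mathrm{Ber}(p)\|\mathrm{Ber}(q))=p\ln(p/q)+(1-p)\ln((1-p)/(1-q))$ together with these bounds yields a per-round contribution of at most $-\tfrac12\ln(1-4r^2)$, exactly the inequality used in~\cite{bandits-exp3}. Summing and taking expectation under $P_0$ gives $\KL(P_0\|P_k)\le -\tfrac12\ln(1-4r^2)\,\mathbf{E}_0[N_k]$, and plugging into the Pinsker bound above completes the proof.

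The only delicate step is the Bernoulli KL inequality in the previous paragraph; the exponential smallness of $\ln(1-4r^2)\sim 4r^2$ for small $r$ is what makes the bound tight, and getting the constants to match requires careful use of the range $[1/3,2/3]$ (rather than an arbitrary interval) to control the ratios $p/q$ and $(1-p)/(1-q)$. Everything else is a mechanical application of the chain rule and Pinsker's inequality, and the structure of our needle-in-haystack instance ensures that no contribution to the KL divergence comes from rounds played outside $B^k$.
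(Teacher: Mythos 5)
Your proposal is correct and follows essentially the same route as the paper, which gives no independent proof but states that the argument is identical to that of Lemma A.1 of the cited multi-armed bandit lower-bound paper --- namely, exactly the Pinsker-plus-chain-rule computation you describe, with properties (a)--(c) killing the Kullback--Leibler contribution of rounds played outside $B^k$. The one caveat you already flag is real but harmless: for Bernoulli parameters in $[1/3,2/3]$ shifted by at most $r$ the per-round divergence matches $-\tfrac12\ln(1-4r^2)$ only up to a constant factor (it equals it exactly only in the symmetric case around $1/2$), and this slack is absorbed by the $O(\cdot)$ in every downstream use of the lemma.
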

Applying Lemma~\ref{lem:technical} with $f = N_k$
and $M = t$, and averaging over $k$, we may apply
exactly the same reasoning as in the proof of
Theorem A.2 of~\cite{bandits-exp3} to derive the
bound
\begin{equation} \label{eq:lousy}
\frac1n \sum_{k=1}^n \mathbf{E}_k(N_k)
\leq
\frac{t}{n} + O \left( tr \sqrt{ \frac{t}{n} } \right).
\end{equation}
Recalling that the actual ball $B_k$ sampled when randomly
constructing $\mu$ in the proof of Theorem~\ref{thm:mincov-lb}
is a uniform random sample from $B^1,B^2,\ldots,B^n$,
we may write $N_*$ to denote the random variable which
counts the number of rounds in which the algorithm plays
a strategy in $B_k$ and the bound (\ref{eq:lousy})
implies
\[
\mathbf{E}(N_*) = O \left( \frac{t}{n} + tr \sqrt{ \frac{t}{n} } \right)
\]
Recalling that $t=r^{-a-2}$ and $n=r^{-b}$, we see
that the $O(tr \sqrt{t/n})$ term
is the dominant term on the right side, and that it is
bounded by $O(t r^{(b-a)/2}).$
An application of Markov's inequality now yields:
$$ \Pr(N_* \geq t/2) = O(r^{(b-a)/2}),$$
completing the proof of Lemma~\ref{lem:haystack}.

\subsection{Beyond the covering dimension}
\label{sec:fatness-beyond}
Thus far, we have seen that every metric space $X$ has a bandit
algorithm $\A$ such that $\RegretDim(\A) = \COV(X)$ (the na\"ive
algorithm), and we have seen (via the needle-in-haystack construction,
Theorem~\ref{thm:mincov-lb})
that $X$ can never have a bandit algorithm satisfying
$\RegretDim(\A) < \MaxMinCOV(X)$.
When $\COV(X) \neq \MaxMinCOV(X)$, which of these two bounds
is correct, or can they both be wrong?

To gain intuition,
we will consider two concrete examples. Consider an infinite rooted tree
where for each level $i\in\N$ most nodes have out-degree $2$, whereas
the remaining nodes (called \emph{fat nodes}) have out-degree $x>2$ so
that the total number of nodes is $4^i$. In our first example, there
is exactly one fat node on every level and the fat nodes form a path
(called the \emph{fat leaf}). In our second example, there are exactly
$2^i$ fat nodes on every level $i$ and the fat nodes form a binary
tree (called the \emph{fat subtree}).
In both examples, we assign a \emph{weight} of $2^{-id}$ (for
some constant $d>0$) to each level-$i$ node; this weight encodes
the diameter of the set of points contained in the corresponding
subtree.  An infinite rooted
tree induces a metric space $(L,X)$ where $X$ is the set of all
infinite paths from the root, and for $u,v \in X$ we define
$L(u,v)$ to be the weight of the least common ancestor of paths
$u$ and $v$.
In both examples, the covering dimension is $2d$, whereas the
max-min-covering dimension is only $d$ because the ``fat subset''
(i.e. the fat leaf or fat subtree) has covering dimension at most
$d$, and every point outside the fat subset has an open neighborhood
of covering dimension $d$.

\OMIT{ 
In the next few paragraphs, we sketch
some algorithms for dealing with certain metric spaces that have fat
subsets, as a means of building intuition leading up to the
(rather complicated) optimal algorithm for general metric spaces.
The gory details are omitted until we reach the description of
the algorithm for general metric spaces.
} 

In both of the metrics described above,
the zooming algorithm (Algorithm~\ref{alg:nice-1d})
performs poorly when the optimum $x^*$ is located
inside the fat subset $S$, because it is too burdensome to keep
covering\footnote{Recall that a strategy $u$ is called
  \emph{covered} at time $t$ if for some active strategy $v$ we have
  $L(u,v) \leq r_t(v)$.}
the profusion of strategies located near $x^*$ as the
ball containing $x^*$ shrinks.  An improved algorithm,
achieving regret exponent $d$,  modifies the zooming algorithm
by imposing \emph{quotas} on the number of
active strategies that lie outside $S$.  At any
given time, some strategies outside $S$ may not be
covered; however, it is guaranteed that there exists
an optimal strategy which eventually becomes covered
and remains covered forever afterward.
Intuitively, if some optimal strategy lies in $S$ then
imposing a quota on active strategies outside $S$ does not hurt.
If no optimal strategy lies in $S$ then all of $S$
gets covered eventually and stays covered thereafter, in which case
the uncovered part of the strategy set has low covering dimension and
(starting after the time when $S$ becomes permanently covered)
no quota is ever exceeded.

This use of quotas extends to the following general setting which
abstracts the idea of ``fat subsets'':
\begin{definition}\label{def:fatness-dim}
Fix a metric space $(L,X)$. A closed subset $S\subset X$ is \emph{$d$-fat} if
	$\COV(S)\leq d$
and for any open superset $U$ of $S$ we have
	$\COV(X\setminus U) \leq d$.
More generally, a \emph{$d$-fat decomposition} of depth $k$ is a decreasing sequence
	$X = S_0 \supset \ldots \supset S_k \supset S_{k+1} = \emptyset$
of closed subsets such that
	$\COV(S_k)\leq d$ and $\COV(S_i\setminus U) \leq d$
whenever $i \in [k]$ and $U$ is an open superset of $S_{i+1}$.
\end{definition}

\begin{example}
Let $(L,X)$ be the metric space in either of the two
``tree with a fat subset'' examples. Then the corresponding
``fat subset'' $S$ is $d$-fat. For an example of a fat
decomposition of depth $k=2$, consider the product
metric $(L^*, X\times X)$ defined by
	$$L^*( (x_1, x_2),(y_1, y_2)) = L(x_1,y_1) + L(x_2, y_2),$$
with a fat decomposition given by
	$S_1 = (S\times X) \cup (X\times S)$
and $S_2 = S\times S$.
\end{example}

When $X$ is a metric space with a $d^*$-fat decomposition
$\mathcal{D}$,
the algorithm described earlier can be modified to achieve
regret $O \left(t^{\gamma} \right)$
for any $\gamma > 1-1/(d^*+2)$, by instituting a separate
quota for each subset $S_i.$  The algorithm requires access
to a \emph{$\mathcal{D}$-covering oracle} which for a given
$i$ and a given finite set of open balls (given by the
centers and the radii) either reports that the balls
cover $S_i$, or returns some strategy in $S_i$ which is
not covered by the balls.  No further knowledge of
$\mathcal{D}$ or the metric space is required.

\begin{theorem}\label{thm:fatness}
Consider the Lipschitz MAB problem on a fixed compact metric space with a $d^*$-fat decomposition $\mathcal{D}$. Then for any $d>d^*$ there is an algorithm $\A_\mathcal{D}$ such that
	$\RegretDim(\A_\mathcal{D}) \leq d$.
\end{theorem}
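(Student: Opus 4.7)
The plan is to construct $\A_\mathcal{D}$ as a modification of the zooming algorithm (Algorithm~\ref{alg:nice-1d}) in which the activation step is replaced by a prioritized, quota-governed call to the $\mathcal{D}$-covering oracle. Concretely, in each round of a phase of length $2^{i_\Phase}$, the algorithm walks the layers of $\mathcal{D}$ from the innermost $S_k$ outward; for each $j$, it asks the oracle whether the currently active balls cover $S_j$; if not, and if the number of currently active strategies in the layer $S_j\setminus S_{j+1}$ is strictly less than a quota $Q_j(i_\Phase) = C_j\, 2^{i_\Phase\, d/(d+2)}$, it activates an uncovered point of $S_j$. The constants $C_j$ will be chosen below. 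The rest of the algorithm (confidence radii, indices, phase structure) is identical to Section~\ref{sec:adaptive-exploration}.

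The analysis splits into two pieces. First, I condition on a clean phase (Claim~\ref{cl:conf-rad}) and re-examine Lemma~\ref{lm:bound-active}: the key fact $I_t(v)\geq \mu^*$ used there required only that the near-optimal witness $v^*$ be covered by some active strategy at time $t$. Hence I reduce the regret analysis to establishing the following covering invariant: for every sufficiently large phase and every $\eps>0$, there is a strategy $v^*$ with $\Delta(v^*)<\eps$ that is covered by an active strategy throughout the phase. Once this invariant holds, Claim~\ref{cl:clean-phase} generalises: the active set at time $t$ decomposes as $\bigcup_j A_j(t)$ with $A_j(t)\subseteq S_j\setminus S_{j+1}$, each $A_j(t)$ is $\tfrac14\min(\Delta(u),\Delta(v))$-separated by Corollary~\ref{cor:sparsity}, and either its cardinality is capped by the quota $Q_j(i_\Phase)\asymp 2^{i_\Phase\, d/(d+2)}$ or, outside the quota regime, by the covering-number bound $O(r^{-d^*})$ coming from $\COV(S_j\setminus U)\leq d^*$ for a suitable open $U\supset S_{j+1}$. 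Plugging either bound into the geometric summation of the proof of Claim~\ref{cl:clean-phase} with exponent $d$ in place of the $C$-zooming dimension yields a per-phase regret of $\tilde O(2^{i_\Phase(d+1)/(d+2)})$, and summing over phases gives $\RegretDim(\A_\mathcal{D})\leq d$.

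The covering invariant is established by downward induction on $j$, and this is where the main obstacle lies. If $\mu^*$ is attained in $S_k$, then $\COV(S_k)\leq d^*<d$, so Corollary~\ref{cor:sparsity} combined with a packing argument shows that the active subset of $S_k$ at confidence scale $r$ has size $O(r^{-d^*})$, which is below the quota $Q_k(i_\Phase)$; hence the oracle successfully activates, at every scale, an uncovered strategy in $S_k$ arbitrarily close to the supremum, and the invariant holds on level $k$. Otherwise, let $j\leq k$ be maximal with $\mu^*$ attained in $S_j\setminus S_{j+1}$. For any strategy $v\in S_{j+1}$ the suboptimality gap $\Delta(v)$ is bounded below by a positive constant $\Delta^*$, so by Lemma~\ref{lm:bound-active} every $v\in S_{j+1}$ that is ever activated satisfies $n_t(v)=O(i_\Phase/\Delta^{*2})$ and thus $r_t(v)\to 0$ geometrically along the phases. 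Combined with compactness of $X$ and the hypothesis $\COV(S_{j+1}\setminus U)\leq d^*$ for every open $U\supset S_{j+2}$, a finite-subcover argument shows that from some phase onwards, $S_{j+1}$ is permanently covered by the active balls; past that phase the quota on level $j+1$ is never relevant, and the inductive analysis at level $j$ proceeds as in the first bullet with $S_k$ replaced by the relevant part of $S_j\setminus S_{j+1}$.

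The hardest step will be the inductive covering claim, because it must handle the simultaneous shrinking of confidence radii across layers and the interaction between the quotas and the adaptive oracle queries. The key technical lever will be the ``$U$ can be any open superset of $S_{i+1}$'' clause in Definition~\ref{def:fatness-dim}: it lets me pass from eventual pointwise covering of $S_{i+1}$ to uniform covering of a slightly enlarged open neighbourhood, and only outside that neighbourhood do I need to invoke the low-dimensional bound $\COV(S_i\setminus U)\leq d^*$ when packing the active strategies. Choosing the constants $C_j$ large enough that $\sum_j Q_j(i_\Phase)$ dominates the worst-case packing number appearing in this inductive argument closes the analysis, absorbs the unclean-phase contributions via Claim~\ref{cl:conf-rad}, and yields the claimed $\tilde O(T^{(d+1)/(d+2)})$ regret bound and therefore $\RegretDim(\A_\mathcal{D})\leq d$.
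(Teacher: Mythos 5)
Your high-level architecture --- a zooming algorithm with per-layer quotas, a covering invariant for a near-optimal strategy, and an ``entombment'' of the deeper layer $S_{j+1}$ --- is the same as the paper's. But the sketch of the covering invariant, which you correctly identify as the hardest step, contains errors that would make the argument fail as written.

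First, the monotonicity is backwards. From $\Delta(v)\geq \Delta^*$ and Lemma~\ref{lm:bound-active} you get $n_t(v) \leq O(i_\Phase)\,\Delta^{*-2}$, which bounds $r_t(v)=\sqrt{8 i_\Phase/(2+n_t(v))}$ from \emph{below} by $\Omega(\Delta^*)$; the radius does not go to $0$. This is not a cosmetic slip: the entire mechanism by which the neighbourhood of $S_{j+1}$ gets permanently covered is that suboptimal strategies are played rarely, so their confidence radii \emph{stay large} and their balls keep covering an $\eps$-neighbourhood $U=B(S_{j+1},\eps/2)$. If the radii shrank, the covering you need would be destroyed, not established. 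Relatedly, your invocation of Lemma~\ref{lm:bound-active} here is circular: the absolute bound $\Delta(v)\leq 4r_t(v)$ is proved under the assumption that a near-optimal strategy is covered, which is exactly the invariant you are trying to establish. The paper avoids this by proving a relative version (Claim~\ref{lm:fatness-confRad}(a)): for any two \emph{active} $u,v$, $\Delta(v)-\Delta(u)\leq 4r_t(v)$, which needs no covering hypothesis.

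Second, ``from some phase onwards, $S_{j+1}$ is permanently covered'' cannot work with the phase structure you adopt: each phase begins with no active strategies, so nothing persists across phase boundaries, and within a single fresh phase your algorithm has no mechanism preventing the layer-$j$ quota from being consumed by activations inside the high-dimensional region near $S_{j+1}$ before the algorithm has learned that region is suboptimal. The missing ingredient is the paper's initial activation of an $\eps$-net $\mathcal{N}$ of roughly $\tfrac12 T^{d/(d+2)}$ points at the start of each phase. This net does double duty: it supplies an active anchor $v^*$ with $\Delta(v^*)\leq\eps$ so that the relative bound forces every net point $v$ near $S_{j+1}$ (which has $\Delta(v)>7\eps$) to satisfy $r_t(v)\geq\tfrac32\eps$ throughout the phase, hence $U$ is covered by $\mathcal{N}$ for the whole phase; and consequently no point of $U$ is ever activated into the pool, so the quota-consuming active set lies in $S_j\setminus U$, whose covering dimension is at most $d^*<d$ by the fat-decomposition property. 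Without the initial net, both the anchor and the quota accounting collapse. You should add this initialization step and rework the induction around the relative gap bound.
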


\begin{note}{Remarks.}

{\bf (1)} We can relax the compactness assumption in Theorem~\ref{thm:fatness}: instead, we can assume that the \emph{completion} of the metric space is compact and re-define the sets in the $d$-fat decomposition as subsets of the completion (possibly disjoint with the strategy set). This corresponds to the ``fat leaf'' which lies outside the strategy set. Such extension requires some minor modifications.

{\bf (2)} The per-metric guarantee expressed by Theorem~\ref{thm:fatness} can be complemented with sharper \emph{per-instance} guarantees. First, for every problem instance $\mathcal{I}$ the per-instance regret dimension
	$\RegretDim_{\mathcal{I}}(\A)$
is upper-bounded by the zooming dimension of $\mathcal{I}$. Second, if for some $c>0$ the $c$-covering dimension of $X$ is finite then for some $\gamma<1$ and all $t$ we have
	$R_\A(t) \leq O(c\, t^{\gamma})$.
However, as this extension is tangential to our main storyline, we focus on analyzing the regret dimension.
\end{note}

\newcommand{\PhaseAlg}{\ensuremath{\A_{\mathtt{ph}}}}

\xhdr{The algorithm.}
Our algorithm proceeds in phases $i=1,2,3,\,\ldots\;$ of $2^i$ rounds each. In a given phase, we run a fresh instance of the following \emph{phase algorithm}
	$\PhaseAlg(T,d,\mathcal{D})$
parameterized by the phase length $T = 2^i$, \emph{target dimension} $d>d^*$ and the $\mathcal{D}$-covering oracle. The phase algorithm is a version of a single phase of the zooming algorithm (Algorithm~\ref{alg:nice-1d}) with very different rules for activating strategies. As in Algorithm~\ref{alg:nice-1d}, the confidence radius and the index are defined by~\refeq{eq:confidence-radius} and~\refeq{eq:index}, respectively.  At the start 
of each round some strategies are activated, and then 
an active strategy with the maximal index is played.

Let us specify the activation rules. Let $k$ be the depth of the decomposition $\mathcal{D}$, and denote
	$\mathcal{D} = \{S_i\}_{i=0}^{k+1}$.
Initially the algorithm  constructs $2^{-j}$-nets $\mathcal{N}_j$, $j\in\N$,
using the covering oracle.  It
finds the largest $j$ such that
	$\mathcal{N} = \mathcal{N}_j$
contains at most $\tfrac12\; T^{d/(d+2)}$ points, and activates all strategies in $\mathcal{N}$. The rest of the active strategies are partitioned into $k+1$ pools $P_i \subset S_i$ such that at each time $t$ each pool $P_i$ satisfies the following \emph{quota} (that we denote $Q_i$):
\begin{equation}\label{eq:fatness-quotas}
|\{ u\in P_i:\, r_t(u) \geq \rho \}| \leq C_\rho\; \rho^{-d}\quad
\end{equation}
where $\rho = T^{- 1/(d+2)}$ and $C_\rho = (64k\,\log \tfrac{1}{\rho})^{-1}$.
In the beginning of each round the following activation routine is performed. If there exists a set $S_i$ such that some strategy in $S_i$ is not covered and \emph{there is room under the corresponding quota $Q_i$}, pick one such strategy, activate it, and add it to the corresponding pool $P_i$. Since for a given strategy $u$ the confidence radius $r_t(u)$ is non-increasing in $t$, the constraint~\refeq{eq:fatness-quotas} is never violated. Repeat until there are no such sets $S_i$ left.  This completes the description of the algorithm.

\xhdr{Analysis.}
As 
was the case in Section~\ref{sec:adaptive-exploration},
the analysis of the unbounded-time-horizon algorithm reduces to
proving a lemma about the regret of each phase algorithm.

\newcommand{\TMin}{\ensuremath{t_{\text{min}}}}

\begin{lemma}\label{lm:fatness-phase}
Fix a problem instance in the setting of Theorem~\ref{thm:fatness}. Let
	$\PhaseAlg(T) = \PhaseAlg(T, d,\mathcal{D})$.
Then
\begin{equation}\label{eq:fatness-phase}
(\exists\,  \TMin<\infty)\; (\forall T\geq \TMin)\quad
	R_{\PhaseAlg(T)} (T) \leq T^{1-1/(d+2)}.
\end{equation}
\end{lemma}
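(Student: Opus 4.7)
To handle the quota-based activation discipline of $\PhaseAlg(T,d,\mathcal{D})$, I will mimic the clean-phase analysis of Section~\ref{sec:adaptive-exploration}, replacing the property ``every strategy is covered'' by the weaker guarantee that at every time \emph{some near-optimal} strategy is covered. Set $\rho = T^{-1/(d+2)}$ and $i_\Phase = \lceil \log_2 T\rceil$, and define a \emph{clean phase} exactly as in Definition~\ref{def:clean-phase}. By the Chernoff argument in the proof of Claim~\ref{cl:conf-rad} together with a union bound over the at most $T$ ever-active strategies, the phase is clean with probability $\geq 1-T^{-3}$. Since per-round regret is at most $1$, the non-clean event contributes $o(1)$ to expected regret, so it suffices to bound regret deterministically on the clean event.

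The crux is a \emph{near-coverage claim}: for $T\geq\TMin$ and every time $t$ in the phase there is an active strategy $v^{*}$ with $I_t(v^{*})\geq \mu^{*}-3\rho$. By compactness (and upper semicontinuity of $\mu$), fix $x^{*}\in X$ with $\mu(x^{*})=\mu^{*}$, and let $j^{*}\in\{0,\ldots,k\}$ be the unique level such that $x^{*}\in S_{j^{*}}\setminus S_{j^{*}+1}$. Since $S_{j^{*}+1}$ is closed and misses $x^{*}$, the number $\delta^{*}:=L(x^{*},S_{j^{*}+1})$ is strictly positive, so $\delta^{*}>2\rho$ for $T\geq\TMin$; then $U^{*}:=\{x:L(x,S_{j^{*}+1})<\rho\}$ is an open superset of $S_{j^{*}+1}$ with $L(x^{*},U^{*})>\rho$. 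By Definition~\ref{def:fatness-dim}, $\COV(S_{j^{*}}\setminus U^{*})\leq d^{*}<d$, and because $\rho^{d-d^{*}}\log(1/\rho)\to 0$, the set $S_{j^{*}}\setminus U^{*}$ admits (for $T\geq\TMin$) an $(\rho/2)$-cover by at most $C_\rho\rho^{-d}$ balls, strictly within the quota $Q_{j^{*}}$ of~\eqref{eq:fatness-quotas}. The factor $1/k$ in $C_\rho$ is what lets me invoke the same bound simultaneously at each level $i\leq j^{*}$, where analogous open supersets $U^{(i+1)}\supset S_{i+1}$ are chosen. The activation rule now forces $x^{*}$ to be covered: otherwise some point $y$ of $S_{j^{*}}\setminus U^{*}$ near $x^{*}$ is uncovered, yet quota $Q_{j^{*}}$ has strictly more slots than the required cover, so the algorithm would have activated $y$ (or another uncovered point of $S_{j^{*}}$), contradicting the rule. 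A covering active $v^{*}$ satisfies $L(v^{*},x^{*})\leq r_t(v^{*})$, and on the clean event $I_t(v^{*})=\mu_t(v^{*})+2r_t(v^{*})\geq \mu(v^{*})+r_t(v^{*})\geq \mu(x^{*})=\mu^{*}$ by Lipschitz, which proves the claim.

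With near-coverage in hand, the regret computation is a direct adaptation of Lemma~\ref{lm:bound-active}, Corollary~\ref{cor:sparsity}, and Claim~\ref{cl:clean-phase}. For every played $v$, the maximal-index rule combined with $I_t(v^{*})\geq \mu^{*}-3\rho$ gives $\Delta(v)\leq 3r_t(v)+3\rho$, hence $n_t(v)\leq O(i_\Phase)/\max(\Delta(v),\rho)^{2}$. I partition played strategies by their suboptimality gap: those with $\Delta(v)=O(\rho)$ contribute at most $O(\rho T)=O(T^{1-1/(d+2)})$, while at each scale $2^{-i}>\rho$ the sets $A(i,t)=\{v:2^{i}\leq\Delta^{-1}(v)<2^{i+1}\}$ satisfy $|A(i,t)|\leq O(2^{id})$ via the sparsity reasoning of Corollary~\ref{cor:sparsity} together with the pool-level quota discipline. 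The telescoping sum of Claim~\ref{cl:clean-phase} then bounds the total clean-phase regret by $O((C\,i_\Phase)^{1/(d+2)}T^{(d+1)/(d+2)})$; all polylogarithmic and constant factors are absorbed into the slack $T^{1/(d^{*}+2)-1/(d+2)}$ (valid because $d>d^{*}$) for $T\geq\TMin$, yielding the promised bound $R_{\PhaseAlg(T)}(T)\leq T^{1-1/(d+2)}$.

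The hard part will be the near-coverage claim. Specifically, I must verify that the quotas at \emph{all} levels $i\leq j^{*}$ can be simultaneously honoured by the set of activations demanded by the covering arguments, and I must correctly account for the role of the initial net $\mathcal{N}$, whose strategies sit outside the pool system yet determine which points are reported uncovered by the $\mathcal{D}$-covering oracle in the first place. The precise choice $C_\rho=(64k\log(1/\rho))^{-1}$ is engineered exactly so that the strict inequality $\COV(S_i\setminus U^{(i+1)})\leq d^{*}<d$ dominates it for $T$ large enough; once this bookkeeping is done, the remainder of the proof is a routine extension of the Section~\ref{sec:adaptive-exploration} analysis.
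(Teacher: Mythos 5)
Your proposal reproduces the outer architecture of the paper's proof (clean runs via Chernoff bounds, a coverage claim for the optimum, and the quota-weighted version of the Claim~\ref{cl:clean-phase} sum), but it leaves a genuine gap exactly where you flag ``the hard part,'' and that part is not bookkeeping --- it is the central idea. The missing step is the \emph{entombment} argument (Claim~\ref{lm:fatness-entombment} in the paper): one must exhibit a \emph{fixed}, $T$-independent open neighborhood $U\supseteq S_{\ell+1}$ that remains covered at every time $t\leq T$ of a clean run. The paper gets this by choosing $\eps$ from the \emph{payoff} gap ($\Delta(w)>8\eps$ on the compact set $S_{\ell+1}$, using continuity of $\mu$), setting $U=B(S_{\ell+1},\eps/2)$, and showing via the index comparison $\Delta(v)-\Delta(v^*)\leq 4r_t(v)$ that every initial-net point $v$ near $S_{\ell+1}$ has $\Delta(v)>7\eps$ and therefore retains confidence radius $r_t(v)\geq\tfrac32\eps$ forever, so the net $\mathcal{N}$ permanently covers $U$. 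Without this, you cannot conclude that the pool $P_{\ell}$ avoids a neighborhood of $S_{\ell+1}$: the activation rule is free to place any uncovered point of $S_{\ell}$ --- including points of $S_{\ell+1}$, whose covering dimension may be as large as $\COV(X)$ --- into $P_{\ell}$, and then the $\rho$-separation bound $|A_t|\leq(1/\rho)^{d'}$ collapses and the quota can be exhausted before $u^*$ is ever covered.

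A second, related defect is your choice $U^{*}=\{x:L(x,S_{j^*+1})<\rho\}$ with $\rho=T^{-1/(d+2)}$. Since $U^{*}$ shrinks as $T$ grows, the bound $\COV(S_{j^*}\setminus U^{*})\leq d^{*}$ holds for each fixed $T$ but with a covering constant that depends on $U^{*}$, hence on $T$; the asymptotic $\rho^{d-d^{*}}\log(1/\rho)\to 0$ only beats a \emph{uniform} constant, so your claim that $S_{j^*}\setminus U^{*}$ admits a $(\rho/2)$-cover of size at most $C_\rho\rho^{-d}$ is unjustified. The fix is the paper's: take $U$ to be a fixed neighborhood determined by the instance (via the payoff gap on $S_{\ell+1}$, not the metric distance $L(x^*,S_{j^*+1})$, since the payoff gap is what drives the entombment inequality), so that $\COV(S_{\ell}\setminus U)\leq d^{*}$ with a single constant and $(1/\rho)^{d'}\leq\tfrac12 C_\rho\rho^{-d}$ for all sufficiently large $T$.
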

Note that the lemma bounds the regret of $\PhaseAlg(T)$ for time $T\geq \TMin$ only.
Proving Theorem~\ref{thm:fatness} is now straightforward:

\begin{proofof}{Theorem~\ref{thm:fatness}}
Let $\PhaseAlg(T)$ be the phase algorithm from Lemma~\ref{lm:fatness-phase}.
Recall that in each phase $i$ in the overall algorithm \A\ we simply run a fresh instance of algorithm $\PhaseAlg(2^i)$ for $2^i$ steps.

Let $t_0$ be the $\TMin$ from~\refeq{eq:fatness-phase} rounded up to the nearest end-of-phase time. Let $i_0$ be the phase starting at time $t_0+1$. Note that
	$R_\A(t_0) \leq t_0$.
Let $R_i$ be the regret accumulated by \A\ during phase $i$. Let $\gamma = \tfrac{d+1}{d+2}$.
Then for any time
	$t\geq t_0^{1/\gamma}$
in phase $i$ we have
$ R_\A(t) \leq t_0 + \sum_{j=i_0}^{i} R_j
	\leq t_0 + \sum_{j=i_0}^{i}  (2^j)^\gamma
	\leq O(t^\gamma).
$
\end{proofof}

In the remainder of this section we prove Lemma~\ref{lm:fatness-phase}. Let us fix a problem instance of the  Lipschitz MAB problem on a compact metric space $(L,X)$ with a depth-$k$ $d^*$-fat decomposition
	$\mathcal{D} = \{S_i\}_{i=0}^{k+1}$.
Fix $d>d^*$ and let
	$\PhaseAlg(T) = \PhaseAlg(T, d,\mathcal{D})$
be the phase algorithm. Let $\mu$ be the expected reward function and let
	$\mu^* = \sup_{u\in X} \mu(u)$
be the optimal reward. Let $\Delta(u) = \mu^* - \mu(u)$.

By definition of the Lipschitz MAB problem, $\mu$ is a continuous function on the metric space $(L,X)$. Therefore the supremum $\mu^*$ is achieved by some strategy (call such strategies \emph{optimal}). Say that a run of algorithm $\PhaseAlg(T)$ is \emph{well-covered} if at every time $t\leq T$ some optimal strategy is covered.

Say that a run of algorithm $\PhaseAlg(T)$ is \emph{clean} if the property in Claim~\ref{cl:conf-rad} holds for all times $t\leq T$. Note that a given run is clean with probability at least $1-T^{-2}$. The following lemma adapts the technique from Lemma~\ref{lm:bound-active} to the present setting:

\begin{claim}\label{lm:fatness-confRad}
Consider a clean run of algorithm $\PhaseAlg(T)$.
\begin{OneLiners}
\item[(a)] If strategies $u,v$ are active at time $t\leq T$ then
	$\Delta(v) - \Delta(u) \leq 4 r_t(v)$.
\item[(b)] if the run is well-covered and strategy $v$ is active at time $t\leq T$ then
	$\Delta(v) \leq 4 r_t(v)$.
\end{OneLiners}
\end{claim}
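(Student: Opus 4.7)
The plan is to mirror the proof of Lemma~\ref{lm:bound-active}, with the role of the ``covered near-optimal strategy'' played by a direct index comparison in part (a) and by the well-covered hypothesis in part (b). Throughout I would use the clean-run sandwich $\mu(w)+r_t(w)\le I_t(w)\le\mu(w)+3\,r_t(w)$, which holds for every active $w$ and every $t\le T$ by the same calculation as in Lemma~\ref{lm:bound-active}.

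For part (a), I would first dispose of the trivial case $\mu(u)\le\mu(v)$ and assume $\mu(u)>\mu(v)$. If $v$ is played at time $t$, the max-index rule gives $I_t(v)\ge I_t(u)$, so $\mu(v)+3\,r_t(v)\ge\mu(u)+r_t(u)\ge\mu(u)$ and therefore $\Delta(v)-\Delta(u)\le 3\,r_t(v)$. If $v$ is not played at time $t$ and has never been played in the phase, then $r_t(v)=2\sqrt{i_\Phase}>1\ge\Delta(v)-\Delta(u)$ and the bound is trivial. Otherwise, let $s<t$ be the last time $v$ was played and run the same index comparison at time $s$, using as the comparison partner either $u$ itself (if $u$ was already active at $s$) or a suitable proxy that was active at $s$. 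Since $r_t(v)=r_{s+1}(v)\ge\sqrt{2/3}\,r_s(v)$, transferring the bound from time $s$ to time $t$ turns the constant $3$ into at most $3\sqrt{3/2}\le 4$.

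For part (b), I would follow the proof of Lemma~\ref{lm:bound-active} essentially verbatim, substituting the well-covered hypothesis for the covering argument for a near-optimum. Because $\mu$ is continuous on a compact metric space the supremum $\mu^*$ is attained at some optimal $v^*$, and well-coverage produces an active strategy $v_t$ with $L(v_t,v^*)\le r_t(v_t)$. The Lipschitz property together with the clean-run bound then yields $I_t(v_t)\ge\mu(v_t)+r_t(v_t)\ge\mu(v^*)-L(v_t,v^*)+r_t(v_t)\ge\mu^*$. Hence for any $v$ played at time $t$, $I_t(v)\ge I_t(v_t)\ge\mu^*$, and combining with $I_t(v)\le\mu(v)+3\,r_t(v)$ gives $\Delta(v)\le 3\,r_t(v)$. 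The not-played case is handled exactly as in Lemma~\ref{lm:bound-active} via the last-play-time trick, producing the factor $4$.

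The main subtlety is in part (a): it can happen that $v$'s last play time $s$ precedes $u$'s activation time $s_u$, so $u$ itself was not yet active at $s$ and the direct index comparison is unavailable. Here I would exploit that the initial net $\mathcal{N}$ is activated at the start of the phase and remains active throughout, picking a point $u'\in\mathcal{N}$ close to $u$ as a surrogate and bounding $|\mu(u)-\mu(u')|$ by the resolution of $\mathcal{N}$; the separation $L(u,v)>r_{s_u}(v)=r_t(v)$ forced by $u$'s activation criterion ensures that this surrogate loss can be absorbed into the final $4\,r_t(v)$ bound. This is the only place where the new activation rule (with quotas and an initial net) forces a genuine departure from the proof of Lemma~\ref{lm:bound-active}; the rest of the argument is a routine adaptation.
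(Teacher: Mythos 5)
Your part (b) and the main case of part (a) are exactly the argument the paper intends: the paper gives no explicit proof of this claim (it only says it ``adapts the technique from Lemma~\ref{lm:bound-active}''), and the analogous statement with constant $5$ inside the proof of Theorem~\ref{thm:PMO} is proved by precisely your three-inequality summation at the last time $s$ at which $v$ was played, followed by the transfer $r_s(v)\le\sqrt{3/2}\,r_t(v)$. Your constant accounting ($3\sqrt{3/2}\le 4$ for the index $\mu_t+2r_t$) is correct, and substituting the well-covered hypothesis for the universal-coverage step of Lemma~\ref{lm:bound-active} in part (b) is the right move.

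The one place where your argument does not go through is the patch for the case, which you correctly flag, where $u$ is activated only after $v$'s last play time $s$. Replacing $u$ by a net point $u'\in\mathcal{N}$ costs an additive $|\mu(u)-\mu(u')|\le 2^{-j}$, and $2^{-j}$ is \emph{not} absorbable into $4\,r_t(v)$: by construction $\mathcal{N}$ is the coarsest net with at most $\tfrac12 T^{d/(d+2)}$ points, so its resolution is of order $T^{-1/(d+2)}$, whereas $r_t(v)$ can be as small as $\Theta(\sqrt{\log T/T})$ for a heavily played strategy. The separation $L(u,v)>r_{s_u}(v)=r_t(v)$ that you invoke is a \emph{lower} bound on the distance and points the wrong way; it cannot control the surrogate loss. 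What this route actually yields is $\Delta(v)-\Delta(u)\le 4\,r_t(v)+2^{-j}$, not the stated bound. Note, however, that the paper never needs the general case: its only application of part (a) is in Claim~\ref{lm:fatness-entombment}, where both strategies lie in the initial net $\mathcal{N}$ and hence are active from the first round, so the direct index comparison at time $s$ is always available. The honest resolution is either to restrict part (a) to pairs that are both active at $v$'s last play time (which suffices for the paper), or to state the bound with the extra additive term; claiming the clean $4\,r_t(v)$ bound for arbitrary active $u,v$ via the net surrogate is a gap.
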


The quotas~\refeq{eq:fatness-quotas} are chosen so that the regret computation in Claim~\ref{cl:clean-phase} works out for a clean and well-covered run of algorithm $\PhaseAlg(T)$.

\begin{claim}\label{lm:fatness-regret}
	$R_\A (T) \leq T^{1-1/(d+2)}$
for any clean well-covered run of algorithm $\A = \PhaseAlg(T)$.
\end{claim}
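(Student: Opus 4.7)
The plan is to mirror the regret calculation of Claim~\ref{cl:clean-phase}, with the per-pool quotas~\refeq{eq:fatness-quotas} and the initial net $\mathcal{N}$ playing the role that the covering-dimension bound on $|A(i,t)|$ played there. I first invoke Claim~\ref{lm:fatness-confRad}(b): in a clean, well-covered run, every strategy $v$ active at any time $t\leq T$ satisfies $\Delta(v)\leq 4\,r_t(v)$. Inverting the confidence-radius formula~\refeq{eq:confidence-radius} (with $i_\Phase=\log_2 T$) yields $n_T(v)\leq O(\log T)/\Delta(v)^2$ for all such $v$, so that
\[
R_\A(T) \;=\; \sum_{v\in A(T)} \Delta(v)\,n_T(v) \;\leq\; \sum_{v\in A(T)} \frac{O(\log T)}{\Delta(v)},
\]
where $A(T)$ denotes the set of all strategies activated during the run.

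Next I set $\rho = T^{-1/(d+2)}$ and partition $A(T)$ by $\Delta$-scale. The ``low-gap'' set $B=\{v:\Delta(v)\leq 8\rho\}$ contributes at most $8\rho T = 8\,T^{(d+1)/(d+2)}$, via the trivial bound $\sum_v n_T(v)\leq T$. The rest I group into dyadic scales $A_j=\{v:2^{-j-1}<\Delta(v)\leq 2^{-j}\}$ for $j\leq J:=\lceil\log_2(1/(8\rho))\rceil$. Every $v\in A_j$ satisfies $r_T(v)\geq \Delta(v)/4 > 2^{-j-3}\geq\rho$, so the per-pool quota~\refeq{eq:fatness-quotas} applies directly, giving
\[
|A_j| \;\leq\; |\mathcal{N}| + \sum_{i=0}^{k} |P_i\cap A_j| \;\leq\; \tfrac12\,\rho^{-d} + (k+1)\,C_\rho\,\rho^{-d}.
\]
The contribution of scale $j$ to regret is at most $|A_j|\cdot 2^{j+1}\cdot O(\log T)$, and summing the geometric series in $j$ (with $2^{J+2}=O(1/\rho)$) produces a large-gap contribution of order $O(\log T)\cdot\bigl(|\mathcal{N}|+(k+1)\,C_\rho\,\rho^{-d}\bigr)\cdot\rho^{-1}$.

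The crux is the final constant accounting. The choice $C_\rho=(64k\log(1/\rho))^{-1}$ is calibrated precisely so that, since $\log T=(d+2)\log(1/\rho)$, the $\log T$ from UCB-style plays cancels against one factor of $\log(1/\rho)$ in $C_\rho^{-1}$: the pool contribution collapses to $O(\rho^{-d-1})=O(T^{(d+1)/(d+2)})$, without any residual logarithm. The net contribution $|\mathcal{N}|\cdot O(\log T)/\rho$ retains one logarithm, but the slack $d>d^*$ (equivalently, the $T\geq\TMin$ condition that governs Lemma~\ref{lm:fatness-phase}) absorbs it into the exponent. The main obstacle is executing this accounting so that the contributions from the net, the $k+1$ pools, and the $\Theta(\log(1/\rho))$ dyadic scales all combine to a total strictly below $T^{1-1/(d+2)}$; the awkward constants $\tfrac12$, $64k$, and $(k+1)$ in the algorithm's definition exist exactly to make this calculation go through.
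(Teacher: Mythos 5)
Your proposal follows the same route as the paper's (very brief) proof sketch: bound each active strategy's contribution by $\Delta(v)\,n_T(v)\leq O(\log T)/\Delta(v)$ via Claim~\ref{lm:fatness-confRad}(b), split off a low-gap bucket contributing $O(\rho T)$ with $\rho=T^{-1/(d+2)}$, organize the rest into dyadic scales, and control the population of each scale by the quotas~\refeq{eq:fatness-quotas} (the paper bins by confidence radius rather than by gap, but the two are interchangeable here because $\Delta(v)\leq 4r_t(v)$). You also correctly identify the role of $C_\rho$: the $\log(1/\rho)$ in its denominator is what cancels the $O(\log T)$ coming from the play counts, so the pooled strategies contribute $O(\rho^{-(d+1)})$ with controllable constants.

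The one step I would push back on is your treatment of the net $\mathcal{N}$. You assert that its residual contribution $|\mathcal{N}|\cdot O(\log T)/\rho$ is ``absorbed into the exponent'' by the slack $d>d^*$ or by the condition $T\geq\TMin$. Neither mechanism works: the net size $\tfrac12\,T^{d/(d+2)}$ is calibrated to the target dimension $d$, not to $d^*$, so the slack $d>d^*$ never enters this term; and a multiplicative $\log T$ sitting on top of $T^{(d+1)/(d+2)}$ cannot be removed by taking $T$ large. Even a sharper accounting that also uses the global constraint $\sum_v n_T(v)\leq T$ still leaves a $\sqrt{\log T}$ factor from the net. To be fair, the paper's own proof is only a sketch that ends with ``apply the quotas'' and never addresses the net's contribution either; the discrepancy is a polylogarithmic factor that is harmless for Theorem~\ref{thm:fatness}, since the regret dimension is insensitive to polylogs. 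But the honest fix is either to state the claim with polylogarithmic slack or to shrink $|\mathcal{N}|$ by a logarithmic factor, not to attribute the cancellation to $d>d^*$.
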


\begin{proof}[ Sketch]
Let  $A_t(\delta)$ be the set of all strategies $u\in X$ such that $u$ is active at time $t\leq T$ and
	$\delta \leq r_t(u) <2\delta$.
Note that for any such strategy we have
	$n_t(u) \leq O(\log T)\, \delta^{-2}$
and
	$\Delta(u) \leq 4 r_t(u) < 8 \delta$.
Write
\begin{align*}
R^*(T) & := \textstyle{\sum_{u\in X}} \Delta(u)\, n_T(u)
	 \leq \rho T + \textstyle{
		\sum_{i=0}^{\cel{\log 1/\rho}}
		\sum_{u\in A_T(2^{-i})} \Delta(u)\, n_T(u)},
\end{align*}
where $\rho = T^{- 1/(d+2)}$ and apply the quotas~\refeq{eq:fatness-quotas}.
\end{proof}

Let $S_{\ell}$ be the smallest set in $\mathcal{D}$ which contains some optimal strategy. Then there is an optimal
strategy contained in $S_{\ell} \setminus S_{\ell+1}$; let $u^*$ be one such strategy. The following claim essentially shows that the irrelevant high-dimensional subset $S_{\ell+1}$ is eventually pruned away.

\begin{claim}\label{lm:fatness-entombment}
There exists an open set $U$ containing $S_{\ell+1}$ such that $u^*\not\in U$ and $U$ is always covered throughout the first $T$ steps of any clean run of algorithm $\PhaseAlg(T)$, provided that $T$ is sufficiently large.
\end{claim}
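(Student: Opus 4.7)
My plan is a reverse induction on $i$ from $k+1$ down to $\ell+1$: I construct open sets $V_i \supseteq S_i$ with $u^* \notin V_i$ that are covered at every time $t \leq T$ in any clean run of $\PhaseAlg(T)$, provided $T$ exceeds an instance-dependent threshold $T_i$; the desired $U$ is then $V_{\ell+1}$. If $\ell = k$, take $U := \emptyset$ and we are done, so assume $\ell < k$; then $S_{\ell+1}$ is a nonempty closed subset of the compact $X$ not containing $u^*$, so $L(u^*, S_{\ell+1}) > 0$. The base case is $V_{k+1} := \emptyset$, vacuously covered.

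\textbf{Packing argument at the inductive step.} Fix $\ell < i \leq k$ and assume $V_{i+1}$ is as desired. Two observations drive the step. First, since $V_{i+1}$ is covered at every $t \leq T$ (at $t=1$ trivially via the initial net $\mathcal{N}$, whose members have $r_1 = 2\sqrt{\log T}$ exceeding the diameter, and thereafter by the inductive hypothesis), no point of $V_{i+1}$ is ever uncovered, so the activation routine never places a point of $V_{i+1}$ into $P_i$; combined with $P_i \subseteq S_i$ from the activation rule, this yields $P_i \subseteq S_i \setminus V_{i+1}$ throughout the run. Second, because $V_{i+1}$ is an open superset of $S_{i+1}$, the $d^*$-fat decomposition gives $\COV(S_i \setminus V_{i+1}) \leq d^* < d$; fixing $d' \in (d^*, d)$, the $\rho$-packing number of $S_i \setminus V_{i+1}$ is $O(\rho^{-d'})$. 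Now at any time $t$, any two strategies $u, v \in P_i$ with $r_t(u), r_t(v) \geq \rho$ must satisfy $L(u,v) > \rho$: the later-activated of the two was uncovered at its activation time, and confidence radii are non-increasing in $t$. Hence $\{v \in P_i : r_t(v) \geq \rho\}$ is a $\rho$-packing in $S_i \setminus V_{i+1}$, of size $O(\rho^{-d'})$. Setting $\rho := T^{-1/(d+2)}$ and comparing with the quota $Q_i = C_\rho \rho^{-d} = \rho^{-d}/(64k\log(1/\rho))$, for $T$ sufficiently large we obtain $O(\rho^{-d'}) < C_\rho \rho^{-d}$ (since $d' < d$ dominates the $\log$ slack in $C_\rho$). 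Thus $Q_i$ has strict slack at every $t$, so the activation routine always has room to activate an uncovered point of $S_i$ and terminates only when $S_i$ has no uncovered point. Conclusion: $S_i$ is covered at every $t \leq T$.

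\textbf{Enlarging to an open neighborhood.} At each $t \leq T$ the covered set $W_t$ is open (a finite union of open confidence-radius balls around active strategies) and contains $S_i$. By compactness of $S_i$ there exists $\eta_t > 0$ with $\{x : L(x, S_i) < \eta_t\} \subseteq W_t$. Let $\eta := \min_{t \leq T} \eta_t$, further reduced if necessary to lie strictly below $L(u^*, S_i)$ (positive because $i > \ell$ forces $u^* \notin S_i$); as the minimum of finitely many positives, $\eta > 0$. Define $V_i := \{x : L(x, S_i) < \eta\}$: it is open, contains $S_i$, excludes $u^*$, and lies inside $W_t$ for every $t \leq T$. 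This completes the inductive step, and setting $U := V_{\ell+1}$ at $i = \ell+1$ proves the claim.

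\textbf{Main obstacle.} The technical heart is the packing estimate, which relies on two pieces: the locality $P_i \subseteq S_i \setminus V_{i+1}$, which crucially needs $V_{i+1}$ to be covered \emph{from $t=1$ onwards} (explaining the ``always'' in the inductive hypothesis), and the asymptotic gap $\rho^{-d'} \ll C_\rho \rho^{-d}$, where the strict inequality $d > d^*$ and the logarithmic slack built into $C_\rho$ do the essential work in driving the threshold $T_i$. The passage from ``$S_i$ covered'' to ``an open neighborhood of $S_i$ covered'' is a routine compactness argument, though it permits $V_i$, and hence $U$, to depend on $T$ and on the run.
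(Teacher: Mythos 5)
Your overall strategy is genuinely different from the paper's, and it contains a real gap. The paper does not touch the quotas at all in proving this claim: since $S_{\ell+1}$ is compact and $\mu$ is continuous with its maximum on $S_{\ell+1}$ strictly below $\mu^*$, there is an instance-dependent $\varepsilon>0$ with $\Delta(w)>8\varepsilon$ on $S_{\ell+1}$; the paper sets $U=B(S_{\ell+1},\varepsilon/2)$ once and for all, and shows that in a clean run the \emph{initially activated} net points $v\in\mathcal{N}$ within distance $\varepsilon$ of $S_{\ell+1}$ satisfy $\Delta(v)>7\varepsilon$, while some net point $v^*$ near $u^*$ has $\Delta(v^*)\leq\varepsilon$; Claim~\ref{lm:fatness-confRad}(a) then forces $r_t(v)\geq\tfrac32\,\varepsilon$ for all $t\leq T$, so these net points permanently cover $U$. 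The essential feature is that $U$ depends only on the instance, not on $T$ or on the run.

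Your construction instead produces $V_i$ (hence $U$) depending on $T$ and on the particular run, via $\eta=\min_{t\leq T}\eta_t$, and this is not a cosmetic defect: it breaks the argument in two places. Internally, at the inductive step you invoke $\COV(S_i\setminus V_{i+1})\leq d^*$ to get a packing bound $O(\rho^{-d'})$, but the constant hidden in that $O(\cdot)$ depends on the set $S_i\setminus V_{i+1}$, hence on $V_{i+1}$, hence on $T$; as $T$ grows, your $V_{i+1}$ may shrink toward $S_{i+1}$, and nothing in Definition~\ref{def:fatness-dim} bounds the covering constants uniformly over open supersets of $S_{i+1}$. So the comparison ``$O(\rho^{-d'})<C_\rho\,\rho^{-d}$ for $T$ sufficiently large'' is circular: the threshold $T_i$ depends on $V_{i+1}$, which depends on $T$. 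Externally, the claim is applied in the proof of Lemma~\ref{lm:fatness-phase} precisely by fixing $U$ first and then letting $T$ grow with the covering constant of $S_\ell\setminus U$ held fixed; a $U$ varying with $T$ and with the run cannot play that role. What is missing is a $T$- and run-independent margin of coverage, and the payoff gap $\Delta>8\varepsilon$ on the compact set $S_{\ell+1}$ is exactly the instance-dependent quantity that supplies it --- at which point the entire reverse induction over the decomposition, and the quota/packing machinery, become unnecessary for this particular claim.
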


\begin{proof}
$S_{\ell+1}$ is a compact set since it is a closed subset of a compact metric space. Since function $\mu$ is continuous, it assumes a maximum value on $S_{\ell+1}$. By construction, this maximum value is strictly less than $\mu^*$. So there exists $\eps>0$ such that $\Delta(w)> 8 \eps$ for any $w\in S_{\ell+1}$. Define
	$U = B(S_{\ell+1}, \eps/2)$.
Note that $u^* \not\in U$ since
	$ 8\eps < \Delta(w) \leq L(u^*,w)$
for any $w\in S_{\ell+1}$.

Recall that in the beginning of algorithm $\A(T)$ all strategies in some $2^{-j}$-net $\mathcal{N}$ are activated. Suppose $T$ is large enough so that $2^{-j} \leq \eps$.

Consider a clean run of algorithm $\PhaseAlg(T)$.  We claim that $U$ is covered at any given time $t\leq T$. Indeed, fix $u\in U$. By definition of $U$ there exists a strategy
	$w\in S_{\ell+1}$ such that $L(u,w) < \eps/2$.
By definition of $\mathcal{N}$ there exist
	$v,v^*\in \mathcal{N}$
such that $L(v,w) \leq \eps$ and $L(u^*,v^*) \leq \eps$.
Note that:
\begin{OneLiners}
\item[(a)] $\Delta(v^*) = \mu(u^*) - \mu(v^*) \leq  L(u^*, v^*) \leq \eps$.
\item[(b)] Since $L(v,w) \leq \eps$ and $\Delta(w) > 8\eps$, we have $\Delta(v) > 7\eps$.
\item[(c)] By Claim~\ref{lm:fatness-confRad} we have
	$\Delta(v) - \Delta(v^*) \leq 4 r_t(v^*)$.
\end{OneLiners}
Combining (a-c), it follows that $r_t(v) \geq \tfrac32\,\eps \geq L(u,v)$, so $v$ covers $u$. Claim proved.
\end{proof}

\begin{proofof}{Lemma~\ref{lm:fatness-phase}}
By Claim~\ref{lm:fatness-regret} it suffices to show that if $T$ is sufficiently large then any clean run of algorithm $\PhaseAlg(T)$ is well-covered.
(Runs that are not clean contribute only $O(1/T)$ to the expected
regret of $\PhaseAlg(T)$, because the probability that a run is not
clean is at most $T^{-2}$ and the regret of such a run is at most $T$.)
Specifically, we will show that $u^*$ is covered at any time $t\leq T$
during a clean run of $\PhaseAlg(T)$. It suffices to show that at any time $t\leq T$ there is room under the corresponding quota $Q_{\ell}$ in~\refeq{eq:fatness-quotas}.

Let $U$ be the open set from Claim~\ref{lm:fatness-entombment}. Since $U$ is an open neighborhood of $S_{\ell+1}$, by definition of the fat decomposition it follows that
	$\COV(S_{\ell} \setminus U)\leq d^*$.
Define $\rho$ and $C_\rho$ as in~\refeq{eq:fatness-quotas} and fix $d'\in (d^*, d)$. Then for any sufficiently large $T$ it is the case that (i) $S_{\ell} \setminus U$ can be covered with
	$(\tfrac{1}{\rho})^{d'}$
sets of diameter $<\rho$ and moreover (ii) that
	$(\tfrac{1}{\rho})^{d'} \leq \tfrac12\,C_\rho\, \rho^{-d}$.

Fix time $t\leq T$ and let $A_t$ be the set of all strategies $u$ such that $u$ is in the pool $P_{\ell}$ at time $t$ and
	$r_t(u)\geq \rho$.
Note that $A_t \subset S_{\ell}\setminus U$  since $U$ is always covered, and by the specification of $\PhaseAlg$ only active uncovered strategies in $S_{\ell}$ are added to pool $P_{\ell}$. Moreover, $A_t$ is $\rho$-separated. (Indeed, let $u,v\in A_t$ and assume $u$ has been activated before $v$. Then
	$L(u,v) > r_s(u) \geq r_t(u) \geq \rho$,
where $s$ is the time when $v$ was activated.) It follows that
	$|A_t| \leq \tfrac12\,C_\rho\, \rho^{-d}$,
so there is room under the corresponding quota $Q_{\ell}$ in~\refeq{eq:fatness-quotas}.
\end{proofof}

\subsection{The per-metric optimal algorithm}
\label{sec:fatness-PMO}

The algorithm in Theorem~\ref{thm:fatness} requires a fat decomposition of finite depth, which in general might not exist. To extend the ideas of the preceding section to arbitrary metric
spaces, we must generalize Definition~\ref{def:fatness-dim} to \emph{transfinitely infinite} depth.
\begin{definition}\label{def:fatness-transfinite}
Fix a metric space $(L,X)$. Let $\beta$ denote an arbitrary ordinal.
A \emph{transfinite $d$-fat decomposition} of depth $\beta$
is a  transfinite sequence
	$\{S_\lambda\}_{0 \leq \lambda \leq \beta}$
of closed subsets of $X$ such that:
\begin{OneLiners}
\item[(a)] $S_0 = X$, $S_\beta = \emptyset$, and
	$S_\nu \supseteq S_\lambda$ whenever $\nu < \lambda$.
\item[(b)] if $V\subset X$ is closed, then the set
    $\{\text{ordinals } \nu \leq \beta$:\, $V \mbox{ intersects } S_\nu \}$
has a maximum element.
\item[(c)] for any ordinal $\lambda \leq \beta$ and any open set
$U\subset X$ containing $S_{\lambda+1}$ we have
	$\COV(S_\lambda \setminus U) \leq d$.
\end{OneLiners}
\end{definition}
Note that for a finite depth $\beta$ the above definition is
equivalent to Definition~\ref{def:fatness-dim}.  In
Theorem~\ref{thm:PMO} below, we will show how to modify
the ``quota algorithms'' from the previous section to
achieve regret dimension $d$ in any metric with a transfinite
$d^*$-fat decomposition for $d^* < d$.  This gives
an optimal algorithm for every metric space $X$ because
of the following
surprising relation between the max-min-covering dimension
and transfinite fat decompositions.

\begin{proposition} \label{prop:fatness-dim}
For every compact metric space $(L,X)$, the
max-min-covering dimension of $X$ 
is equal to the infimum of all
$d$ such that $X$ has a transfinite
$d$-fat decomposition.
\end{proposition}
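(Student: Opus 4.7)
The plan is to prove both inequalities: (I) any transfinite $d$-fat decomposition of $X$ implies $\MaxMinCOV(X) \leq d$; and (II) whenever $d' > \MaxMinCOV(X)$, a transfinite $d'$-fat decomposition of $X$ exists. Together these give the required equality.

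\emph{Direction (I).} Let $\{S_\lambda\}_{\lambda \leq \beta}$ be a transfinite $d$-fat decomposition. For any non-empty $Y \subseteq X$ I want $\MinCOV(Y) \leq d$. A quick density argument reduces to closed $Y$: if $U = W \cap \bar Y$ is non-empty open in $\bar Y$ with $W$ open in $X$, then $W \cap Y \neq \emptyset$ (because $Y$ is dense in $\bar Y$) and $W \cap Y$ is a non-empty open subset of $Y$ contained in $U$, so $\MinCOV(Y) \leq \MinCOV(\bar Y)$. For closed $Y$, property (b) supplies the maximum ordinal $\lambda_Y$ with $Y \cap S_{\lambda_Y} \neq \emptyset$, and I proceed by transfinite induction on $\lambda_Y$, proving the following \emph{local} strengthening: for every $p \in Y$ and every $r > 0$, some non-empty open subset of $Y \cap B(p,r)$ has covering dimension $\leq d$. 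The base $\lambda_Y = 0$ uses compactness: since $Y \cap S_1 = \emptyset$ and both are closed in the compact $X$, they are separated by an open $U_0 \supset S_1$ disjoint from $Y$, and property (c) gives $\COV(Y) \leq \COV(S_0 \setminus U_0) \leq d$, so $Y \cap B(p,r)$ works. In the step case, either $Y \cap B(p,r) \subseteq S_{\lambda_Y}$ (in which case the same compactness argument using $Y \cap S_{\lambda_Y+1} = \emptyset$ gives $\COV(Y \cap B(p,r)) \leq d$), or there is some $p' \in Y \cap B(p,r) \setminus S_{\lambda_Y}$; in the latter subcase pick $r' > 0$ small enough that $\bar B(p',r') \subseteq B(p,r) \setminus S_{\lambda_Y}$, set $Y'' = Y \cap \bar B(p',r')$ (closed with $\lambda_{Y''} < \lambda_Y$), and apply the IH at $p'$ with radius $r'/2$ to obtain a non-empty open $U = W \cap Y''$ inside $Y'' \cap B(p',r'/2)$ with $\COV(U) \leq d$. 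Because $U \subseteq B(p',r'/2)$, one rewrites $U = W \cap Y \cap B(p',r'/2)$, which is open in $Y$ and contained in $Y \cap B(p,r)$.

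\emph{Direction (II).} Fix $d' > \MaxMinCOV(X)$ and construct $\{S_\lambda\}$ by transfinite recursion: $S_0 = X$; at a limit $\lambda$, $S_\lambda = \bigcap_{\nu < \lambda} S_\nu$; at a successor,
\[ S_{\lambda+1} \;=\; \{\, p \in S_\lambda :\; \COV(S_\lambda \cap B(p,r)) > d' \text{ for every } r > 0 \,\}. \]
The set $S_{\lambda+1}$ is closed in $X$ because if $p \in S_\lambda$ has $\COV(S_\lambda \cap B(p,r)) \leq d'$, then every $q \in B(p,r/2)$ inherits $\COV(S_\lambda \cap B(q,r/2)) \leq d'$ and so also falls outside $S_{\lambda+1}$. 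Whenever $S_\lambda \neq \emptyset$, $\MinCOV(S_\lambda) \leq \MaxMinCOV(X) < d'$ supplies an open $V \subseteq S_\lambda$ with $\COV(V) < d'$, so any $p \in V$ lies outside $S_{\lambda+1}$, forcing strict decrease $S_{\lambda+1} \subsetneq S_\lambda$. Selecting $p_\lambda \in S_\lambda \setminus S_{\lambda+1}$ yields an injection from the ordinals of the construction into $X$, which caps the length and forces $S_\beta = \emptyset$ at some $\beta$; this gives property (a). For property (b), use compactness: for any closed $V \subseteq X$ and any limit $\lambda$, the decreasing family of non-empty compact sets $\{V \cap S_\nu\}_{\nu < \lambda}$ has non-empty intersection $V \cap S_\lambda$ by the finite intersection property, so $\{\nu : V \cap S_\nu \neq \emptyset\}$ is both downward closed and closed under suprema and hence attains a maximum. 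Property (c) is the payoff of the construction: for any open $U \supset S_{\lambda+1}$, every $p$ in the compact set $S_\lambda \setminus U$ lies outside $S_{\lambda+1}$ and so admits some $r_p > 0$ with $\COV(S_\lambda \cap B(p,r_p)) \leq d'$; a finite subcover by such balls together with the finite-union bound $\COV(A_1 \cup \cdots \cup A_n) \leq \max_i \COV(A_i)$ (valid because $X$ has diameter $\leq 1$) gives $\COV(S_\lambda \setminus U) \leq d'$.

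The main obstacle in Direction (I) is choosing the right inductive statement. A purely existential IH (``\emph{some} open subset of $Y$ has covering dimension $\leq d$'') breaks in the step case, because the open subset of $Y'' = Y \cap \bar B(p',r')$ produced by IH might live entirely on the boundary sphere $Y \cap \partial B(p',r')$ and hence fail to be open in $Y$. Localizing the claim to ``an open subset of $Y \cap B(p,r)$'' bypasses this issue exactly. In Direction (II) the subtle step is verifying property (b) at limit ordinals, which is precisely where compactness of $X$ is used.
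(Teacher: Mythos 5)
Your proof is correct, and its existence direction (II) is essentially the paper's own argument: the same transfinite peeling of ``fat points'' (your $S_{\lambda+1}$ is the paper's $\FatPt(S_\lambda,d)$ up to replacing ``contained in an open set of covering dimension $<d$'' by ``some ball around $p$ has covering dimension $\leq d'$''), the same cardinality argument for termination, the same compactness/finite-intersection argument for property (b), and the same finite-subcover argument for property (c). Where you genuinely diverge is direction (I). The paper disposes of it in one line by arguing the contrapositive: if $\MinCOV(Y)>d$ for some non-empty $Y$, then transfinite induction shows $Y\subseteq S_\lambda$ for \emph{every} $\lambda$ (at a successor step, a point $y\in Y\setminus S_{\lambda+1}$ would yield, via the open set $U=X\setminus \overline{B(y,r)}\supseteq S_{\lambda+1}$ and property (c), a non-empty open subset $Y\cap B(y,r)$ of $Y$ with covering dimension $\leq d$), contradicting $S_\beta=\emptyset$. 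You instead prove the positive statement $\MinCOV(Y)\leq d$ directly, by inducting on the maximal ordinal $\lambda_Y$ meeting a closed $Y$ and carrying a localized inductive hypothesis; this forces you to reduce to closed $Y$ (so that property (b) applies) and to handle the boundary-sphere issue you identify. Both are valid, but the paper's contrapositive needs neither closedness of $Y$, nor property (b), nor the localization — the uniform lower bound $\MinCOV(Y)>d$ is exactly what lets the induction push $Y$ down through every level at once. Your route buys nothing extra here; it is simply a heavier way to the same inequality.
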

\begin{proof}
\newcommand{\ThinPt}{{\operatorname{TP}}}
\newcommand{\FatPt}{{\operatorname{FP}}}
If $\emptyset \neq Y \subseteq X$ and $\MinCOV(Y) > d$ then, by
transfinite induction, $Y \subseteq S_{\lambda}$ for
all $\lambda$ in any transfinite $d$-fat decomposition,
contradicting the fact that $S_{\beta} = \emptyset$.  Thus,
the existence of a transfinite $d$-fat decomposition of $X$
implies $d \geq \MaxMinCOV(X)$.  To complete the proof we
will construct, given any $d > \MaxMinCOV(X)$,  a transfinite
$d$-fat decomposition of depth $\beta$, where $\beta$ is
any ordinal whose  cardinality exceeds that of $X$.
For a metric space $Y$, define the set of \emph{$d$-thin points}
$\ThinPt(Y,d)$
to be the union of all open sets $U \subseteq Y$ satisfying
$\COV(U) < d$. Its complement, the set of \emph{$d$-fat points},
is denoted by $\FatPt(Y,d)$. Note that it
is a closed subset of $Y$.

For an ordinal $\lambda \leq \beta$, we define a set $S_{\lambda}$
using transfinite induction as follows:
\begin{OneLiners}
\item[1.] $S_0 = X$ and  $S_{\lambda+1} = \FatPt(S_{\lambda},d)$
for each ordinal $\lambda$.
\item[2.]  If $\lambda$ is a limit ordinal then
	$S_{\lambda} = \bigcap_{\nu < \lambda} S_\nu$.	
\end{OneLiners}
Note that each $S_\lambda$ is closed, by transfinite induction.
It remains to show that
	$\mathcal{D} = \{S_\lambda\}_{\lambda \in \mathcal{O}}$
satisfies the properties (a-c) in
Definition~\ref{def:fatness-transfinite}.
It follows immediately from the construction that
$S_0 = X$ and $S_{\nu} \supseteq S_{\lambda}$ when
$\nu < \lambda.$  To prove that $S_{\beta} = \emptyset$,
observe first that the sets $S_{\lambda} \setminus S_{\lambda+1} \;
(\mbox{for } 0 \leq \lambda < \beta)$ are disjoint subsets of $X$,
and the number of such sets is greater than the cardinality of $X$,
so at least one of them is empty.  This means that
$S_{\lambda} = S_{\lambda+1}$ for some $\lambda < \beta.$
If $S_{\lambda} = \emptyset$ then $S_{\beta} = \emptyset$ as
desired.  Otherwise, the relation $\FatPt(S_{\lambda},d) =
S_{\lambda}$ implies that $\MinCOV(S_{\lambda}) \geq d$
contradicting the assumption that $\MaxMinCOV(X) < d.$
This completes the proof of property (a).  To prove
property (b), suppose $\{\nu_i \,|\, i \in \mathcal{I}\}$
is a set of ordinals such that $S_{\nu_i}$ intersects $V$
for every $i$.  Let $\nu = \sup \{\nu_i\}.$  Then
$S_{\nu} \cap V = \bigcap_{i \in \mathcal{I}} (S_{\nu_i} \cap V)$,
and the latter set is nonempty because $X$ is compact and the
closed sets $\{S_{\nu_i} \cap V \,|\, i \in \mathcal{I}\}$
have the finite intersection property.  Finally, to prove
property (c), note that if $U$ is an open neighborhood of
$S_{\lambda+1}$ then the set $T = S_{\lambda} \setminus U$ is
closed (hence compact) and is contained in $\ThinPt(S_{\lambda},d)$.
Consequently $T$ can be covered by open sets $V$ satisfying
$\COV(V) < d$.  By compactness of $T$, this covering has a
finite subcover $V_1,\ldots,V_m$, and consequently
$\COV(T) = \max_{1 \leq i \leq m} \COV(V_i) < d.$
\end{proof}

\begin{theorem}\label{thm:PMO}
Consider the Lipschitz MAB problem on a compact metric space $(L,X)$.
For any $d>\MaxMinCOV(X)$ there exists an algorithm $\A_d$ such that
	$\RegretDim(\A_d) \leq d$.
\end{theorem}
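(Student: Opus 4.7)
The plan is to combine Proposition~\ref{prop:fatness-dim} with a transfinite extension of the quota-based phase algorithm from Theorem~\ref{thm:fatness}. Given $d > \MaxMinCOV(X)$, pick any $d^* \in (\MaxMinCOV(X), d)$ and invoke Proposition~\ref{prop:fatness-dim} to obtain a transfinite $d^*$-fat decomposition $\mathcal{D} = \{S_\lambda\}_{0\leq \lambda\leq \beta}$ of $X$. Build an algorithm $\A_d$ which, in each phase $i$, runs a fresh copy of a phase algorithm $\PhaseAlg(T,d,\mathcal{D})$ for $T = 2^i$ rounds. The phase algorithm follows the template from Section~\ref{sec:fatness-beyond}: first activate an appropriately coarse net $\mathcal{N}$ (obtained from the covering oracle), then in each round use the index rule~\refeq{eq:index} to play, and in each round's activation step maintain pools $P_\lambda$, one per ordinal $\lambda \leq \beta$, each subject to the same quota~\refeq{eq:fatness-quotas}. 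The novelty in the transfinite setting is the pool-assignment rule: on finding an uncovered strategy $u$, assign it to $P_\lambda$ where $\lambda$ is the \emph{maximum} ordinal with $u \in S_\lambda$, which exists by property~(b) of Definition~\ref{def:fatness-transfinite} applied to the closed singleton $\{u\}$.

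The well-coveredness analysis transports directly from Claim~\ref{lm:fatness-entombment}. Since $X$ is compact and $\mu$ is continuous, an optimal strategy $u^*$ exists. Applying property~(b) of Definition~\ref{def:fatness-transfinite} to the closed singleton $\{u^*\}$ yields a maximum ordinal $\ell$ with $u^* \in S_\ell$, so $u^* \notin S_{\ell+1}$. Because $S_{\ell+1}$ is a closed subset of a compact space and $\mu$ is continuous, there is $\eps>0$ with $\Delta(w) > 8\eps$ for all $w \in S_{\ell+1}$; for $T$ large enough that the initial net $\mathcal{N}$ has mesh $\leq \eps$, the open neighborhood $U = B(S_{\ell+1},\eps/2)$ is covered throughout the phase, and $u^* \notin U$ as in the proof of Claim~\ref{lm:fatness-entombment}. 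Property~(c) then gives $\COV(S_\ell \setminus U) \leq d^* < d$, so the number of $\rho$-separated strategies in $P_\ell \subseteq S_\ell \setminus U$ is at most $\tfrac12 C_\rho \rho^{-d}$ for all sufficiently large $T$, which means the quota $Q_\ell$ always has room to activate whichever active-region strategy would cover $u^*$.

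The regret computation follows Claim~\ref{lm:fatness-regret}: on a clean, well-covered run we bound the sum $R^*(T) = \sum_u \Delta(u) n_T(u)$ by splitting the active set according to confidence-radius scale. I expect the main obstacle here to be adapting the per-scale bound to the transfinite setting: in the finite-depth case, the factor $k$ in the denominator of $C_\rho$ exactly absorbs the $k+1$ pools into which the active strategies are partitioned, whereas now $\beta$ may be enormous. The route I would take is to observe that the well-coveredness argument identifies a \emph{single} relevant ordinal $\ell$ for the instance, and to argue that strategies accumulated in pools $P_\lambda$ with $\lambda \neq \ell$ are, from some point onward, confined to regions of uniformly bounded covering dimension (obtained by iterating the eventual-coverage argument of Claim~\ref{lm:fatness-entombment} on nested strata). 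This localizes the phase-by-phase regret to a finite effective depth and restores the quota analysis of Claim~\ref{lm:fatness-regret}.

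Once Lemma~\ref{lm:fatness-phase} is established in this transfinite form (with a suitable $\TMin$ depending on the instance through $\ell$ and $\eps$), the reduction from per-phase to per-algorithm regret is identical to the proof of Theorem~\ref{thm:fatness}: the first $t_0$ rounds contribute $O(t_0)$, and geometric growth of phase lengths makes the tail sum telescope to $O(T^{(d+1)/(d+2)})$, yielding $\RegretDim_{\mathcal{I}}(\A_d) \leq d$ for every instance $\mathcal{I}$ and hence $\RegretDim(\A_d) \leq d$.
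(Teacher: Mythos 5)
Your reduction to Proposition~\ref{prop:fatness-dim} and the overall phase structure match the paper, but the core of your phase algorithm --- one pool $P_\lambda$ with its own quota for \emph{every} ordinal $\lambda\leq\beta$ --- is not what the paper does, and the gap you yourself flag is fatal to it. Within a single phase of length $T$, up to $T$ distinct pools can become nonempty (the activation rule can surface uncovered points in arbitrarily many distinct strata $S_\lambda\setminus S_{\lambda+1}$, and these sets can be nonempty for as many ordinals as there are points of $X$). Each nonempty pool is entitled to $C_\rho\,\rho^{-d}$ strategies at scale $\rho$, so the total count of active strategies at scale $\rho$ is no longer $O(\rho^{-d}/\log\tfrac1\rho)$ and the summation in Claim~\ref{lm:fatness-regret} breaks. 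Your proposed repair --- ``iterating the eventual-coverage argument on nested strata'' to get a finite effective depth --- is exactly the step that cannot be taken for granted: the entombment argument controls $S_{\ell+1}$ relative to $S_\ell$ for one $\ell$, but the union $\bigcup_{\lambda<\ell}(S_\lambda\setminus S_{\lambda+1}) = X\setminus S_\ell$ can have covering dimension as large as $\COV(X)$, and a transfinite iteration of per-stratum bounds does not yield a uniform bound on the number of pools that matter. Nothing in your sketch converts this into a finite quota budget.

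The paper resolves this with a genuinely different mechanism: each phase carries a \emph{single} target ordinal $\lambda(i)$ and only activates strategies in $S_{\lambda(i)}$ beyond the initial net $\mathcal{N}$, so there is only one quota to account for. The correct ordinal (the maximum $\lambda$ with $S_\lambda$ containing an optimal strategy) is \emph{learned} at the end of each phase by calling a depth oracle $\DpthOracle$ on the collection of balls $B(v,\ve(i))$ around active strategies with $r_T(v)<\ve(i)/2$. Proving that a clean phase outputs exactly this ordinal is the technical heart of the argument and requires ingredients absent from your proposal: the modified index $\mu_t(v)+4r_t(v)$ (giving the $5r_t(v)$ variant of Claim~\ref{lm:fatness-confRad}), the confidence radius floored at $T^{-1/(d+2)}$, and a pigeonhole argument producing a frequently-played strategy whose small index forces $r_t(v^*)<\ve(i)/2$ for a net point $v^*$ near $u^*$, so that $u^*$ lies in the closure of one of the balls handed to $\DpthOracle$. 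Only then does the pair-of-consecutive-clean-phases argument show the quota is never reached and $u^*$ stays covered. If you want to salvage your route, you would need to prove that only an instance-dependent \emph{constant} number of pools ever activate a strategy, which I do not believe is true of your activation rule.
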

Note that Theorem~\ref{thm:intro.pmo} follows immediately by
combining Theorem~\ref{thm:PMO} with Theorem~\ref{thm:mincov-lb}.

\newcommand{\DpthOracle}{{\mathtt{Depth}}}
\newcommand{\DCovOracle}{{\mbox{$\mathcal{D}$-$\mathtt{Cov}$}}}
\newcommand{\ve}{\varepsilon}

We next describe an algorithm $\A_d$ satisfying Theorem~\ref{thm:PMO}.
The algorithm requires two oracles: a depth oracle $\DpthOracle(\cdot)$
and a $\mathcal{D}$-covering oracle $\DCovOracle(\cdot)$.
For any finite set of open balls $B_0, B_1, \ldots, B_n$
(given via the centers and the radii) whose union is denoted
by $B$, $\DpthOracle(B_0,B_1,\ldots,B_n)$ returns the maximum ordinal
$\lambda$ such that $S_{\lambda}$ intersects the closure
$\overline{B}$; such an ordinal exists
by Definition~\ref{def:fatness-transfinite}(b).\footnote{To
avoid the question of how arbitrary ordinals are represented
on the oracle's output tape, we can instead say that the
oracle outputs a point $u \in S_{\lambda}$ instead of outputting
$\lambda.$  In this case, the definition of $\DCovOracle$
should be modified so that its first argument is a point
of $S_{\lambda}$ rather than $\lambda$ itself.}
Given a finite set of open balls $B_0, B_1, \ldots, B_n$
with union $B$ as above, and an ordinal $\lambda$,
$\DCovOracle(\lambda,B_0,B_1,\ldots,B_n)$ either reports that
$B$ covers $S_{\lambda}$, or it returns a strategy
	$x \in S_{\lambda} \setminus B.$

\xhdr{The algorithm.}
Our algorithm proceeds in phases $i=1,2,3,\ldots$
of $2^i$ rounds each.  In any given phase $i$, there
is a ``target ordinal'' $\lambda(i)$ (defined at the end of the
preceding phase), and we run an algorithm during the
phase which: (i) activates some nodes initially;
(ii) plays a version of the zooming algorithm which
only activates strategies in $S_{\lambda(i)}$;
(iii) concludes the phase by computing $\lambda(i+1)$.
The details are as follows.  In a given phase we run a fresh instance
of a phase algorithm $\PhaseAlg(T,d,\lambda)$ where $T=2^i$
and $\lambda = \lambda(i)$
is a \emph{target ordinal} for phase $i$, defined below when
we give the full description of $\PhaseAlg(T,d,\lambda).$
The goal of $\PhaseAlg(T,d,\lambda)$ is to satisfy the
per-phase bound
\begin{equation} \label{eq:phasealg}
R_{\PhaseAlg(T,d,\lambda)}(T) = \widetilde{O}(T^\gamma)
\end{equation}
for all $T > T_0$, where $\gamma = 1-1/(d+2)$ and $T_0$ is a
number which may depend on the instance $\mu$.
Then, to derive the bound $R_{\A_d}(t) = \widetilde{O}(t^\gamma)$
for all $t$ we simply sum per-phase bounds over all phases
ending before time $2t$.

Initially $\PhaseAlg(T,d,\lambda)$ uses the covering oracle to
construct $2^{-j}$-nets $\mathcal{N}_j$, $j = 0,1,2,\ldots$, until
it finds the largest $j$ such that
	$\mathcal{N} = \mathcal{N}_j$
contains at most $\tfrac12\; T^{d/(d+2)} \log(T)$ points.
It  activates
all strategies in $\mathcal{N}$ and
sets $$\ve(i) = \max\{2^{-j}, 32 \, T^{-1/(d+2)} \log(T)\}.$$
After this initialization
step, for every active strategy $v$ we define the confidence
radius
$$r_t(v) := \max \left\{ T^{-1/(d+2)},
\sqrt{ \frac{8 \log T}{2 + n_t(v)} } \right\},$$ where
$n_t(v)$ is the number of times $v$ has been played by
the phase algorithm $\PhaseAlg(T,d,\lambda)$ before time $t$.
Let $B_0, B_1, \ldots, B_n$ be an enumeration of
the open balls belonging to the collection
$$\{B(v,r_t(v)) \;|\; v \mbox{ active at time $t$}\}.$$
If $n < \tfrac12\; T^{d/(d+2)} \log(T)$
then we perform the oracle call
$\DCovOracle(\lambda,B_0,\ldots,B_n),$
and if it reports that a point $x \in S_{\lambda}$
is uncovered, we activate $x$ and set $n_t(x)=0.$
The index of an active strategy $v$ is
defined as $\mu_t(v) + 4 r_t(v)$ --- note the
slight difference from the index defined
in Algorithm~\ref{alg:nice-1d} --- and we always play the
active strategy with maximum index.
To complete the description of the algorithm, it remains
to explain how the ordinals $\lambda(i)$ are defined.  The
definition is recursive, beginning with $\lambda(1) = 0$.
At the end of phase $i \; (i \geq 1)$, we let
$B_0, B_1, \ldots, B_m$ be an enumeration of the
open balls in the set $\{B(v,\ve(i)) \,|\,
v \mbox{ active}, \, r_T(v) < \ve(i)/2 \}.$
Finally, we set $\lambda(i+1) = \DpthOracle(B_0,B_1,\ldots,B_m).$

\begin{proofof}{Theorem~\ref{thm:PMO}}
Since we have modified the definition of index, we must
prove a variant of Claim~\ref{lm:fatness-confRad} which
asserts the following:
\begin{align}\label{eq:fatness-confRad}
\text{In a clean run of $\PhaseAlg$, if
$u,v$ are active at time $t$ then $\Delta(v) - \Delta(u) \leq
5 r_t(v).$ }
\end{align}
To prove it, let $s$ be the latest round in
$\{1,2,\ldots,t\}$ when $v$ was played.  We have
$r_t(v) = r_s(v)$, and $\Delta(v) - \Delta(u) =
\mu(u) - \mu(v)$, so it remains to prove that
\begin{equation} \label{eqn:idx5.1}
\mu(u) - \mu(v) \leq 5 r_s(v).
\end{equation}
From the fact that $v$ was played instead of $u$
at time $s$, together with the fact that both
strategies are clean,
\begin{align}
\label{eqn:idx5.2}
\mu_s(u) + 4 r_s(u) & \leq \mu_s(v) + 4 r_s(v) \\
\label{eqn:idx5.3}
\mu(u) - \mu_s(u) & \leq  r_s(u) \\
\label{eqn:idx5.4}
\mu_s(v) - \mu(v) & \leq  r_s(v).
\end{align}
We obtain (\ref{eqn:idx5.1}) by adding
(\ref{eqn:idx5.2})-(\ref{eqn:idx5.4}),
noting that $r_s(u) > 0$. This completes the proof of~\refeq{eq:fatness-confRad}.

Let $\lambda$ be the maximum ordinal such that
$S_{\lambda}$ contains an optimal strategy $u^*$; such an
ordinal exists by Definition~\ref{def:fatness-transfinite}(b).
We will prove that for sufficiently
large $i$, if the $i$-th phase is clean, then
$\lambda(i) = \lambda$.
The set $S_{\lambda+1}$ is compact, and the function $\mu$ is
continuous, so it assumes a maximum value on $S_{\lambda+1}$
which is, by construction, strictly less than $\mu^*$.
Choose $\ve>0$ such that $\Delta(w) > 5 \ve$ for all
$w \in S_{\lambda+1}$, and choose $T_0 = 2^{i_0}$ such
that $\ve(i_0) \leq \ve.$  We shall prove that for all
$T = 2^i \geq T_0$ and all ordinals $\nu$, a clean run
of $\PhaseAlg(T,d,\nu)$ results in setting $\lambda(i+1)=\lambda$.
First, let $v^* \in \mathcal{N}$ be such that
$L(u^*,v^*) \leq \ve(i).$  If $v$ is active and
$r_T(v) < \ve(i)/2$ then~\refeq{eq:fatness-confRad}
implies that $\Delta(v) - \Delta(v^*) \leq \tfrac52\, \ve(i)$
hence $\Delta(v) \leq \tfrac72\, \ve(i)$.  As $\Delta(w) > 5 \ve
\geq 5 \ve(i)$ for all $w \in S_{\lambda+1}$, it follows
that the closure of $B(v,\ve(i))$ does not intersect
$S_{\lambda+1}.$  This guarantees that
$\DpthOracle(B_0,B_1,\ldots,B_m)$ returns
an ordinal less than or equal to $\lambda.$
Next we must prove that this ordinal is greater
than or equal to $\lambda.$
Note that the
total number of strategies activated by $\PhaseAlg(T,d,\nu)$
is bounded above by $T^{d/(d+2)} \log(T)$.  Let
$A_T$ denote the set of strategies active at time $T$ and let
\begin{align*}
v^0 &= \arg \max_{v \in A_T} n_T(v).
\end{align*}
By the pigeonhole principle,
$n_T(v^0) \geq T^{2/(d+2)} / \log(T)$
and hence $r_T(v^0) < 3 T^{-1/(d+2)} \log(T).$
If $t$ denotes the last time at which
$v^0$ was played, then we have
\begin{align*}
I_t(v^0) &= \mu_t(v^0) + 4 r_t(v^0)
 \leq \mu^* + 5 r_t(v^0) \\
& \leq \mu^* + 15 T^{-1/(d+2)} \log(T)
 < \mu^* + \ve(i)/2,
\end{align*}
provided that the phase is clean and that
$T \geq T_0.$  Since $v^0$ had maximum index
at time $t$, we deduce that $I_t(v^*) < \mu^* + \ve(i)/2$
as well.  As $L(u^*,v^*) \leq \ve(i)$ we have
$\mu_t(v^*) \geq \mu^* - \ve(i) - r_t(v^*)$ provided
the phase is clean.  To finish the proof we observe
that
\[
\mu^* + \ve(i)/2 > I_t(v^*) \geq \mu^* - \ve(i) + 3 r_t(v^*)
\]
which implies $r_t(v^*) < \ve(i)/2$.  Since the confidence
radius does not increase over time, we have $r_T(v^*) < \ve(i)/2$
so $B(v^*,\ve(i))$ is one of the balls $B_0,B_1,\ldots,B_m.$
Since $u^*$ is contained in the closure of this ball, we
may conclude that $\DpthOracle(B_0,B_1,\ldots,B_m)$ returns
the ordinal $\lambda$ as desired.

Let $U = B(S_{\lambda+1}, \ve(i)/2)$.  As in
Claim~\ref{lm:fatness-entombment} it holds
that in any clean phase, $U$ is covered throughout
the phase by balls centered at points of $\mathcal{N}$.
Hence for any pair of consecutive clean phases, in the
second phase of the pair our algorithm only calls
the covering oracle $\DCovOracle$
with the proper ordinal $\lambda$ (i.e. the
maximum $\lambda$ such that $S_{\lambda}$ contains
an optimal strategy) and with a
set of balls $B_0,B_1,\ldots,B_n$
that covers $U$.
Also, note that an active strategy $v$
during a run of $\PhaseAlg(T,d,\lambda)$
never has a confidence radius $r_t(v)$ less
than $\delta = T^{-1/(d+2)}$, so the strategies
activated by the covering oracle form a
$\delta$-net in the space $S_{\lambda} \setminus U$.
By Definition~\ref{def:fatness-transfinite}(c),
a $\delta$-net in
$S_{\lambda} \setminus U$ contains fewer
than $O(\delta^{-d})$ points.  Hence for
sufficiently large $T$ the
``quota'' of $\tfrac12\; T^{d/(d+2)}$ active
strategies is never reached, which implies
that every point of $S_{\lambda}$ --- including
$u^*$ --- is covered throughout the phase.
The upper bound on the regret of
$\PhaseAlg(T,d,\lambda)$ concludes as
in the proof of Theorem~\ref{thm:zooming-dim}.
\end{proofof} 

\OMIT{
\begin{proof}[ Sketch of Theorem~\ref{thm:PMO}]
The full proof is in Appendix~\ref{apdx:pmo}.  Here we only explain
how the analysis of $\A_d$ differs from the analysis
of the algorithm for finite
$d$-fat decompositions sketched in previous sections.
We look at the deepest $S_{\lambda}$ containing
an optimal strategy $x^*$.  The analysis hinges on showing that
in any two consecutive clean phases, the first one ends by
setting $\lambda(i) = \lambda$ and the second one keeps
$x^* \in S_{\lambda}$ covered throughout.\footnote{In fact,
the peculiar way in which we define $\eps(i)$ and the confidence radius in our algorithm,
and also the collection of balls in the end of each phase,
is chosen precisely in order to produce $\lambda(i)=\lambda$.}
A key ingredient
in both steps is the ``entombment'' of $S_{\lambda+1}$:
there is an open neighborhood $U \supseteq S_{\lambda+1}$
such that $U$ remains covered forever after some finite
time $T_0$.  The fact that the first clean phase ends
by setting the correct $\lambda(i)$ is established using
the entombment of $S_{\lambda+1}$, along with
a pigeonhole principle argument which proves that there exist
frequently-played strategies (hence small balls exist)
and then argues using the modified definition of index
that one of these balls must contain $x^*$.
\end{proof}
}
\OMIT{
Since we have modified the definition of index, we must
prove a variant of Claim~\ref{lm:fatness-confRad} which
asserts that in a clean run of $\PhaseAlg$, if
$u,v$ are active at time $t$ then $\Delta(v) - \Delta(u) \leq
5 r_t(v).$  To prove it, let $s$ be the latest round in
$\{1,2,\ldots,t\}$ when $v$ was played.  We have
$r_t(v) = r_s(v)$, and $\Delta(v) - \Delta(u) =
\mu(u) - \mu(v)$, so it remains to prove that
\begin{equation} \label{eqn:idx5.1}
\mu(u) - \mu(v) \leq 5 r_s(v).
\end{equation}
From the fact that $v$ was played instead of $u$
at time $s$, together with the fact that both
strategies are clean, we have
\begin{align}
\label{eqn:idx5.2}
\mu_s(u) + 4 r_s(u) & \leq \mu_s(v) + 4 r_s(v) \\
\label{eqn:idx5.3}
\mu(u) - \mu_s(u) & \leq  r_s(u) \\
\label{eqn:idx5.4}
\mu_s(v) - \mu(v) & \leq  r_s(v)
\end{align}
and (\ref{eqn:idx5.1}) follows by adding
(\ref{eqn:idx5.2})-(\ref{eqn:idx5.4}) and
using the fact that $r_s(u) > 0.$

Let $\lambda$ be the maximum ordinal such that
$S_{\lambda}$ contains an optimal strategy $u^*$; such an
ordinal exists by Definition~\ref{def:fatness-transfinite}(b).
We will prove that for sufficiently
large $i$, if the $i$-th phase is clean, then
$\lambda(i) = \lambda$.
The set $S_{\lambda+1}$ is compact, and the function $\mu$ is
continuous, so it assumes a maximum value on $S_{\lambda+1}$
which is, by construction, strictly less than $\mu^*$.
Choose $\ve>0$ such that $\Delta(w) > 5 \ve$ for all
$w \in S_{\lambda+1}$, and choose $T_0 = 2^{i_0}$ such
that $\ve(i_0) \leq \ve.$  We shall prove that for all
$T = 2^i \geq T_0$ and all ordinals $\nu$, a clean run
of $\PhaseAlg(T,d,\nu)$ results in setting $\lambda(i+1)=\lambda$.
First, let $v^* \in \mathcal{N}$ be such that
$L(u^*,v^*) \leq \ve(i).$  If $v$ is active and
$r_T(v) < \ve(i)/2$ then Claim~\ref{lm:fatness-confRad}
implies that $\Delta(v) - \Delta(v^*) \leq \tfrac52\, \ve(i)$
hence $\Delta(v) \leq \tfrac72\, \ve(i)$.  As $\Delta(w) > 5 \ve
\geq 5 \ve(i)$ for all $w \in S_{\lambda+1}$, it follows
that the closure of $B(v,\ve(i))$ does not intersect
$S_{\lambda+1}.$  This guarantees that
$\DpthOracle(B_0,B_1,\ldots,B_m)$ returns
an ordinal less than or equal to $\lambda.$
Next we must prove that this ordinal is greater
than or equal to $\lambda.$
Note that the
total number of strategies activated by $\PhaseAlg(T,d,\nu)$
is bounded above by $T^{d/(d+2)} \log(T)$.  Let
$A_T$ denote the set of strategies active at time $T$ and let
\begin{align*}
v^0 &= \arg \max_{v \in A_T} n_T(v).
\end{align*}
By the pigeonhole principle, we have
$n_T(v^0) \geq T^{2/(d+2)} / \log(T)$
hence $r_T(v^0) \leq T^{-1/(d+2)} \log(T).$
If $t$ denotes the last time at which
$v^0$ was played, then we have
\[
I_t(v^0) = \mu_t(v^0) + 4 r_t(v^0)
\leq \mu^* + 5 r_t(v^0) \leq \mu^* + 5 T^{-1/(d+2)} \log(T)
< \mu^* + \ve(i)/2,
\]
provided that the phase is clean and that
$T \geq T_0.$  Since $v^0$ had maximum index
at time $t$, we deduce that $I_t(v^*) < \mu^* + \ve(i)/2$
as well.  As $L(u^*,v^*) \leq \ve(i)$ we have
$\mu_t(v^*) \geq \mu^* - \ve(i) - r_t(v^*)$ provided
the phase is clean.  To finish the proof we observe
that
\[
\mu^* + \ve(i)/2 > I_t(v^*) \geq \mu^* - \ve(i) + 3 r_t(v^*)
\]
which implies $r_t(v^*) < \ve(i)/2$.  Since the confidence
radius does not increase over time, we have $r_T(v^*) < \ve(i)/2$
so $B(v^*,ve(i))$ is one of the balls $B_0,B_1,\ldots,B_m.$
Since $u^*$ is contained in the closure of this ball, we
may conclude that $\DpthOracle(B_0,B_1,\ldots,B_m)$ returns
the ordinal $\lambda$ as desired.

Let $U = B(S_{\lambda+1}, \ve(i)/2)$.  As in
Claim~\ref{lm:fatness-entombment} it holds
that in any clean phase, $U$ is covered throughout
the phase by balls centered at points of $\mathcal{N}$.
Hence for any pair of consecutive clean phases, in the
second phase of the pair our algorithm only calls
the covering oracle $\DCovOracle$
with the proper ordinal $\lambda$ (i.e. the
maximum $\lambda$ such that $S_{\lambda}$ contains
an optimal strategy) and with a
set of balls $B_0,B_1,\ldots,B_n$
that covers $U.$
Also, note that an active strategy $v$
during a run of $\PhaseAlg(T,d,\lambda)$
never has a confidence radius $r_t(v)$ less
than $\delta = T^{-1/(d+2)}$, so the strategies
activated by the covering oracle form a
$\delta$-net in $S_{\lambda} \setminus U$.
By Definition~\ref{def:fatness-transfinite}(c),
a $\delta$-net in
$S_{\lambda} \setminus U$ contains fewer
than $O(\delta^{-d})$ points.  Hence the
``quota'' of $\tfrac12\; T^{d/(d+2)}$ active
strategies is never reached, which implies
that every point of $S_{\lambda}$ --- including
$u^*$ --- is covered throughout the phase.
The upper bound on the regret of
$\PhaseAlg(T,d,\lambda)$ concludes as
in the proof of Theorem~\ref{thm:zooming-dim}.
}


\section{Zooming algorithm: extensions and examples}
\label{sec:gen-confRad}

We extend the analysis in Section~\ref{sec:adaptive-exploration} in several directions, and follow up with examples.

\begin{itemize}
\item In Section~\ref{sec:conf-rad} we note that our analysis works under a more abstract notion of the confidence radius: essentially, it can be any function of the history of playing a given strategy such that Claim~\ref{cl:conf-rad} holds. This observation leads to sharper results if the reward from playing each strategy $u$ is $\mu(u)$ plus an independent \emph{noise} of a known and ``benign" shape; we provide several concrete examples.

\item In Section~\ref{sec:maxReward1} we provide an improved version of the confidence radius such that the zooming algorithm satisfies the guarantee in Theorem~\ref{thm:zooming-dim} {\bf and} achieves a better regret exponent $\tfrac{d}{d+1}$ if the maximal reward is exactly 1. The analysis builds on a novel Chernoff-style bound which, to the best of our knowledge, has not appeared in the literature.

\item In Section~\ref{sec:targetMAB} we consider the an example which show-cases both the notion of the zooming dimension and the improved algorithm from Section~\ref{sec:maxReward1}. It is the \emph{target MAB} problem, a version of the Lipschitz MAB problem in which the expected reward of a given strategy is a equal to its distance to some (unknown) \emph{target set} $S$. We show that the zooming algorithm performs much better in this setting; in particular, if the metric is doubling and $S$ is finite, it achieves \emph{poly-logarithmic} regret.

\item In Section~\ref{sec:noisy-NN-app} we relax some of the assumptions in the Lipschitz MAB problem: we do not require the similarity function $L$ to satisfy the triangle inequality, and we need the Lipschitz condition~\refeq{eq:Lipschitz-condition} to hold only if one of the two strategies is optimal. We use this extension to analyze a generalization of the target MAB problem in which $\mu(u) = f(L(u,S))$ for some known function $f$.

\item Finally, in Section~\ref{sec:BerryEsseen} we extend the analysis in Section~\ref{sec:adaptive-exploration} from reward distributions with bounded support\footnote{In Section~\ref{sec:conf-rad} we also consider \emph{stochastically bounded} distributions such as Gaussians.} to arbitrary reward distributions with a finite absolute third moment. Our analysis relies on the extension of Azuma inequality known as the \emph{non-uniform Berry-Esseen theorem}~\cite{BerryEsseen-survey05}.
\end{itemize}

Let us recap some conventions we'll be using throughout this section. The zooming algorithm proceeds in phases $i=1,2,3,\ldots$ of $2^i$ rounds each. Within a given phase, for each strategy $v\in X$ and time $t$, $n_t(v)$ is the number of times $v$ has been played before time $t$, and  $\mu_t(v)$ is the corresponding average reward. Also, we denote $\Delta(v) = \mu^* - \mu(v)$, where
	$\mu^* = \sup_{v\in X} \mu(v)$
is the maximal reward.

\subsection{Abstract confidence radius and noisy rewards}
\label{sec:conf-rad}

In Section~\ref{sec:conf-rad} the confidence radius of a given strategy was defined by~\refeq{eq:confidence-radius}. Here we generalize this definition to any function of the history of playing this strategy that satisfies certain properties.

\begin{definition}\label{def:conf-rad}
Consider a single phase $i_\Phase$ of the algorithm. For each strategy $v$ and any time $t$ within this phase, let
	$\hat{r}_t(v)$ and $\hat{\mu}_t(v)$
be non-negative functions of $i_\Phase$, $t$, and the history of playing $v$ up to round $t$. Call $\hat{r}_t(v)$ a \emph{confidence radius} with respect to $\hat{\mu}_t(v)$ if
\begin{OneLiners}
\item[(i)] $|\hat{\mu}_t(v) - \mu(t)| \leq \hat{r}_t(v)$ with probability at least $1-8^{-i_\Phase}$.

\item[(ii)] $ \tfrac34\, \hat{r}_t(v) \leq \hat{r}_{t+1}(v)
	\leq \hat{r}_t(v) $.
\end{OneLiners}
The confidence radius is \emph{$(\beta,C)$-good} if
	$n_t(v) \leq (C\, i_\Phase)\, \Delta^{-\beta}(v)$
whenever $\Delta(v) \leq 4 \hat{r}_t(v)$.
\end{definition}

\begin{note}{Remark.}
Property (i) says that Claim~\ref{cl:conf-rad} holds for the appropriately redefined \emph{clean phase}. Property (ii) is a ``smoothness'' condition: $\hat{r}_t(v)$ does not increase with time, and does not decrease too fast. It is needed for the last line of the proof of Lemma~\ref{lm:bound-active}.
\end{note}

\noindent Given such confidence radius, we can carry out the proof of Theorem~\ref{thm:zooming-dim} with very minor modifications.

\begin{theorem}\label{thm:zooming-dim-generalized}
Consider an instance of the \standardMAB\ for which there exists a $(\beta, c_0)$-good confidence radius, $\beta\geq 0$. Let \A\ be an instance of Algorithm~\ref{alg:nice-1d} defined with respect to this confidence radius. Suppose the problem instance has $c$-zooming dimension $d$. Then:
\begin{itemize}
\item[(a)] If $d+\beta>1$ then
	$R_\A(t) \leq   a(t)\; t^{1-1/(d+\beta)} $
for all $t$, where
	$a(t) =  O(c\,c_0\,\log^2 t)^{1/(d+\beta)}$.
\item[(b)] If $d+\beta \leq 1$ then
	$R_\A(t) \leq   O(c\,c_0\,\log^2 t) $.
\end{itemize}
\end{theorem}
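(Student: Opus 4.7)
The plan is to mimic the proof of Theorem~\ref{thm:zooming-dim}, replacing the concrete $r_t(v)$ everywhere by the abstract $\hat r_t(v)$ and replacing the exponent $2$ (which came from Chernoff in the original Lemma~\ref{lm:bound-active}) by the abstract exponent $\beta$ supplied by $(\beta,c_0)$-goodness. I would redefine a \emph{clean phase} in the obvious way using property (i) of Definition~\ref{def:conf-rad}: a union bound over the at most $T$ activations and $T$ rounds of phase $i_\Phase$ (where $T=2^{i_\Phase}$) gives failure probability at most $T^2\cdot 8^{-i_\Phase} = 2^{-i_\Phase}$, so unclean phases contribute only $O(1)$ expected regret apiece.

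The next step is to re-prove the structural facts about active strategies. For $v$ played at time $t$, the index argument underlying Lemma~\ref{lm:bound-active} uses only cleanness and the Lipschitz condition, and still yields $\Delta(v)\leq 3\hat r_t(v)$; goodness then gives $n_t(v)\leq c_0\,i_\Phase\,\Delta^{-\beta}(v)$. For an active $v$ at a time $t$ when $v$ is not played, I anchor the argument at the most recent play time $s$: the played-case bound at $s$ combined with $n_t(v) = n_s(v)+1$ transfers the goodness estimate to all times. For Corollary~\ref{cor:sparsity}, the activation rule gives $L(u,v)>\hat r_{\tau_v}(u)$ at the moment $v$ is activated, and property (ii) of Definition~\ref{def:conf-rad} (applied to the most recent play time of $u$) yields $\hat r_{\tau_v}(u)\geq \Omega(\Delta(u))$. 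This gives the packing bound $|A(i,t)|\leq c\cdot 2^{id}$ exactly as before.

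The final ingredient is the per-phase regret calculation, parallel to Claim~\ref{cl:clean-phase} but with $\beta$ in place of $2$. For $\rho\in(0,1)$ and $B=\{v\in S(t):\Delta(v)\leq\rho\}$, the contribution of $B$ is at most $\rho t$, and each ring contributes
\begin{align*}
R_{A(i,t)}\;\leq\;\sum_{v\in A(i,t)}\Delta(v)\cdot c_0\,i_\Phase\,\Delta^{-\beta}(v)\;\leq\; O(c\,c_0\,i_\Phase)\cdot 2^{i(d+\beta-1)}.
\end{align*}
In case (a), $d+\beta>1$, the geometric sum $\sum_{i<\log(1/\rho)}2^{i(d+\beta-1)}$ is $\Theta(\rho^{-(d+\beta-1)})$; balancing the two terms yields the optimal $\rho=(c\,c_0\,i_\Phase/t)^{1/(d+\beta)}$ and per-phase regret $(c\,c_0\,i_\Phase)^{1/(d+\beta)}\,t^{1-1/(d+\beta)}$, which, summed over phases, gives the stated bound. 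In case (b), $d+\beta\leq 1$, the geometric sum is $O(\log(1/\rho))$, so taking $\rho=1/t$ produces per-phase regret $O(c\,c_0\,i_\Phase^2)$, and summing over $i_\Phase\leq \log t$ phases gives $O(c\,c_0\,\log^2 t)$.

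The main subtlety is that, unlike the concrete confidence radius, $\hat r_t(v)$ is allowed by property (ii) to contract by up to a factor of $\tfrac34$ per round even while $v$ is inactive, so over many rounds the radius can fall well below $\Delta(v)/4$. Hence neither the analog of Lemma~\ref{lm:bound-active} nor the sparsity argument can be applied at an arbitrary time $t$; both must instead be anchored at the most recent play time $s$ of the strategy in question, where $\hat r_s(v)\geq \Delta(v)/3$ is directly ensured by the played-case argument. Goodness then controls $n_s(v)$ directly, and property (ii) provides just enough slack to transfer the relevant bounds forward to the activation time $\tau_v$ in the sparsity step.
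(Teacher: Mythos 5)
Your proposal follows essentially the same route as the paper, which proves this theorem by rerunning the proof of Theorem~\ref{thm:zooming-dim} with $\hat r_t$ in place of $r_t$ and the goodness exponent $\beta$ in place of the Chernoff exponent $2$: your clean-phase union bound via property (i), the $\beta$-version of Lemma~\ref{lm:bound-active}, the packing bound via Corollary~\ref{cor:sparsity}, and the two-case optimization over $\rho$ in the analogue of Claim~\ref{cl:clean-phase} are all exactly what is intended. Two points deserve comment. First, the ``main subtlety'' you raise --- that property (ii) of Definition~\ref{def:conf-rad} literally permits $\hat r_t(v)$ to shrink by a factor of $\tfrac34$ in every round, including rounds in which $v$ is not played --- is a genuine ambiguity in the definition, but your resolution does not survive your own reading at the sparsity step: the only inequality the activation rule provides is $L(u,v) > \hat r_{\tau}(u)$ at the activation time $\tau$ of $v$, and if $u$ was last played at some $s \ll \tau$, anchoring at $s$ yields only $\hat r_{\tau}(u) \geq (3/4)^{\tau-s}\,\hat r_{s+1}(u)$, which is useless. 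The consistent fix is to read (ii) as constraining the change of $\hat r$ per \emph{play} --- equivalently, to observe that every confidence radius instantiated in the paper depends on $t$ only through the play history of $v$ and is therefore constant between plays --- after which $\hat r_{\tau}(u) = \hat r_{s+1}(u) \geq \tfrac34\,\hat r_s(u) \geq \tfrac14\,\Delta(u)$ and the original argument goes through verbatim; the same remark also disposes of the anchoring in your version of Lemma~\ref{lm:bound-active}. Second, a small arithmetic slip in part (b): a per-phase regret of $O(c\,c_0\,i_\Phase^2)$ summed over the $\log t$ phases gives $O(c\,c_0\log^3 t)$, not $O(c\,c_0\log^2 t)$, since $\sum_{i\leq m} i^2 = \Theta(m^3)$. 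When $d+\beta<1$ strictly, the sum $\sum_i 2^{i(d+\beta-1)}$ converges, the per-phase regret is $O(c\,c_0\,i_\Phase)$, and the claimed $\log^2$ bound follows; only the boundary case $d+\beta=1$ produces the extra logarithm, an imprecision arguably inherited from the theorem statement itself.
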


\begin{note}{Remark.}
A new feature of this theorem (as compared to Theorem~\ref{thm:zooming-dim}) is the \emph{poly-logarithmic} bound on regret in part (b). For better intuition on this, note that the exponent in part (a) becomes negative if $d+\beta<1$. Since the regret bound should not be \emph{decreasing} in $t$, one would expect this term to vanish from the ``correct'' bound. Indeed, it is easy to check that the computation in the proof of Claim~\ref{cl:clean-phase} results in part (b).
\end{note}

A natural application of Theorem~\ref{thm:zooming-dim-generalized} if a setting in which the reward from playing each strategy $u$ is $\mu(u)$ plus an independent \emph{noise} of known shape.

\newcommand{\PP}{\ensuremath{\mathcal{P}}}

\begin{definition}
The \emph{Noisy Lipschitz MAB problem} is a standard Lipschitz MAB problem such that every time any strategy $u$ is played, the reward is $\mu(u)$ plus an independent random sample from some fixed distribution $\PP$ (called the \emph{noise distribution}) which is revealed to the algorithm.
\end{definition}

We present several examples in which we take advantage of a ``benign" shape of \PP. Interestingly, in these examples the payoff distributions are not restricted to have bounded support.\footnote{Recall that throughout the paper the payoff distribution of each strategy $x$ has support
	$\mathcal{S}(x)\subset [0, 1]$.
In this subsection, by a slight abuse of notation, we do not make this assumption.} Technically the results are simple corollaries of Theorem~\ref{thm:zooming-dim-generalized}.

We start with perhaps the most natural example when the noise distribution is normal.

\begin{corollary}\label{cor:noisy-normal}
Consider the Noisy Lipschitz MAB problem with normal noise distribution
	$\PP = \mathcal{N}(0,\sigma^2)$.
Then there exists an algorithm \A\ which enjous guarantee~\refeq{eq:thm-zooming-dim} with the right-hand side multiplied by $\sigma$.
\end{corollary}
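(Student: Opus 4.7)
The plan is to instantiate Theorem~\ref{thm:zooming-dim-generalized} with a confidence radius tailored to Gaussian noise. For each active strategy $v$ in phase $i_\Phase$, I would take $\hat\mu_t(v)$ to be the empirical mean of the $n_t(v)$ rewards observed from $v$ so far in the phase, and define
$$\hat r_t(v) \;:=\; \sigma\,\sqrt{\,C\, i_\Phase\, /\, (2+n_t(v))\,}$$
for a sufficiently large absolute constant $C$. The algorithm \A\ is Algorithm~\ref{alg:nice-1d} with its confidence radius replaced by $\hat r_t(v)$ and $\mu_t(v)$ replaced by $\hat\mu_t(v)$.

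The first step is verifying the three requirements of Definition~\ref{def:conf-rad}. Property (ii) is immediate, since $n_t(v)$ is non-decreasing and grows by at most one per round, so $\hat r_{t+1}(v)/\hat r_t(v) \in [\sqrt{(2+n)/(3+n)},1] \subset [\tfrac34,1]$. For property (i), note that conditional on $v$ being played $n$ times, the deviation $\hat\mu_t(v) - \mu(v)$ is distributed as $\mathcal N(0,\sigma^2/n)$, so the standard Gaussian tail bound gives
$$\Pr\bigl[\,|\hat\mu_t(v)-\mu(v)| > \sigma\sqrt{C i_\Phase/(2+n)}\,\bigr] \;\leq\; 2\exp(-\tfrac{1}{4}C i_\Phase)$$
for all $n\ge 2$. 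A union bound over the at most $2^{i_\Phase}$ possible values of $n_t(v)$ and the at most $2^{i_\Phase}$ strategies that can be activated in the phase yields a total failure probability at most $4^{i_\Phase}\cdot 2\exp(-\tfrac14 C i_\Phase)$, which is below $8^{-i_\Phase}$ once $C$ is chosen large enough.

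The second step is the ``goodness'' verification. If $\Delta(v) \le 4\hat r_t(v) = 4\sigma\sqrt{C i_\Phase/(2+n_t(v))}$, then squaring and rearranging gives $n_t(v) \le 16C\sigma^2\, i_\Phase\, \Delta^{-2}(v)$, so $\hat r_t(v)$ is $(\beta,c_0)$-good with $\beta=2$ and $c_0 = O(\sigma^2)$. Plugging $\beta=2$ and $c_0=O(\sigma^2)$ into Theorem~\ref{thm:zooming-dim-generalized}(a) produces
$$R_\A(t) \;\leq\; O\!\bigl(c\,\sigma^2\log^2 t\bigr)^{1/(d+2)}\; t^{\,1-1/(d+2)},$$
which is~\refeq{eq:thm-zooming-dim} with the extra $\sigma$-dependent prefactor claimed.

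The main obstacle, and the only place where Gaussianity is really used, is the probability bookkeeping in property (i): we need the failure probability to shrink geometrically in $i_\Phase$ (namely, at rate $8^{-i_\Phase}$) even after union-bounding over the $\Theta(4^{i_\Phase})$ strategy/time pairs in the phase, while only inflating $\hat r_t(v)$ by a $\sqrt{i_\Phase}$ factor. The sub-Gaussian tail of $\mathcal N(0,\sigma^2)$ supplies exactly the necessary exponential headroom, and the rest of the argument is a routine reuse of the proof of Theorem~\ref{thm:zooming-dim}; in particular, neither boundedness of the rewards nor any other property of $\mathcal N(0,\sigma^2)$ is needed elsewhere in the analysis.
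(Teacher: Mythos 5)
Your proposal is correct and follows the same route as the paper: the paper's proof simply scales the confidence radius \refeq{eq:confidence-radius} by $\sigma$, observes that it is a $(2,O(\sigma))$-good confidence radius, and invokes Theorem~\ref{thm:zooming-dim-generalized}(a), while you fill in the Gaussian tail bound, the union bound, and the goodness computation that the paper leaves implicit. Your constant $c_0=O(\sigma^2)$ is the careful version of the paper's stated $O(\sigma)$, and either yields the claimed $\sigma$-dependent prefactor at the (loose) level of precision the corollary asserts.
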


\begin{proof}
Define the confidence radius as~\refeq{eq:confidence-radius} with the right-hand side multiplied by $\sigma$. It is easy to see that this is a $(2, O(\sigma))$-good confidence radius. The result follows from Theorem~\ref{thm:zooming-dim-generalized}(a).
\end{proof}

\begin{note}{Remark.}
In fact, Corollary~\ref{cor:noisy-normal} can be extended to noise distributions of a somewhat more general form: let us say that a random variable $X$ is \emph{stochastically $(\rho, \sigma)$-bounded} if its moment-generating function satisfies
\begin{align}\label{eq:stoc-bdd}
 E[ e^{r (X-E[x])} ] \leq e^{r^2 \sigma^2/2}
    \text{~~for all $r\in [-\rho, \rho]$}.
\end{align}
\noindent Note that a normal distribution $\mathcal{N}(0,\sigma^2)$ is $(\infty, \sigma)$-bounded, and any distribution with support $[-\sigma, \sigma]$ is $(1,\sigma)$-bounded. The meaning of~\refeq{eq:stoc-bdd} is that it is precisely the condition needed to establish an Azuma-type inequality: if $S$ is the sum of $n$ independent stochastically $(\rho, \sigma)$-bounded random variables with zero mean, then with high probability $S\leq \Tilde{O}(\sigma_i \sqrt{n})$:
\begin{equation}\label{eq:intro-Azuma}
 \Pr \left[ S > \lambda \sigma \sqrt{n}  \right]
    \leq \exp(-\lambda^2/2)\quad
    \text{for any $\lambda \leq  \tfrac12\, \rho\,\sigma\sqrt{n} $.}
\end{equation}
The derivation and the theorem statement needs to be modified slightly to account for the parameter $\rho$; we omit the details from this version.
\end{note}

Second, we consider the \emph{noiseless} case when all probability mass in \PP\ is concentrated at 0. Our result holds more generally, when \PP\ has at least one \emph{point mass}: a point $x\in\R$ such that $\PP(x)>0$.

\begin{corollary}\label{cor:point-mass}
Consider the Noisy Lipschitz MAB problem such that the noise distribution \PP\ has at least one point mass. Then the problem admits a confidence radius which is $(\beta,c)$-good for any given $\beta>0$ and a constant $c = c(\beta,\PP)$. The corresponding low-regret guarantees follow via Theorem~\ref{thm:zooming-dim-generalized}.
\end{corollary}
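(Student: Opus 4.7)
The strategy is to exploit the point mass in $\mathcal{P}$ to pin down $\mu(v)$ \emph{exactly} after a number of samples depending only on $\mathcal{P}$ and $i_\Phase$; once $\mu(v)$ is identified, the confidence radius can shrink geometrically to zero, yielding $(\beta,c)$-goodness for arbitrarily small $\beta>0$.

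\textbf{Setup.} Because $\sum_y \mathcal{P}(\{y\}) \leq 1$, at most $1/\epsilon$ values $y$ can satisfy $\mathcal{P}(\{y\}) \geq \epsilon$ for any $\epsilon>0$. Thus $p^* := \max_y \mathcal{P}(\{y\})$ is positive and achieved by a finite set $Y^* = \{y : \mathcal{P}(\{y\}) = p^*\}$, and the next-largest weight $q^* := \max\bigl(0,\,\sup\{\mathcal{P}(\{y\}) : \mathcal{P}(\{y\}) < p^*\}\bigr)$ satisfies $q^* < p^*$ (otherwise a sequence of weights $<p^*$ accumulating at $p^*$ from below would contain infinitely many weights above $p^*/2$, violating summability). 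Set $\alpha := p^* - q^* > 0$ and $x^* := \min Y^*$. For $n = n_t(v)$ observations $r_1,\ldots,r_n$ of $v$, define
\[
R_t(v) := \{r \in \mathbb{R} : \#\{i \leq n : r_i = r\} \geq (p^* - \alpha/4)\,n\},
\]
let $r^*_t := \min R_t(v)$ (with $r^*_t := 0$ if $R_t(v) = \emptyset$), and set $\hat{\mu}_t(v) := \mathrm{clip}_{[0,1]}(r^*_t - x^*)$. Choose an absolute constant $K$ large enough for the DKW bound below, set $n_0 := \lceil K\,i_\Phase/\alpha^2\rceil$, and define
\[
\hat{r}_t(v) := \begin{cases} 1 & \text{if } n_t(v) \leq n_0, \\ (3/4)^{n_t(v) - n_0} & \text{if } n_t(v) > n_0. \end{cases}
\]
Property~(ii) of Definition~\ref{def:conf-rad} holds by construction: between consecutive rounds, $\hat{r}$ is either unchanged, stays equal to $1$, or multiplies by exactly $3/4$.

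\textbf{Property~(i) via DKW.} For $n_t(v) \leq n_0$, property~(i) is trivial since $\hat{\mu}_t(v),\mu(v) \in [0,1]$ and $\hat{r}_t(v)=1$. For $n_t(v) > n_0$, write $r_i = \mu(v) + \xi_i$ with $\xi_i$ i.i.d.\ $\sim \mathcal{P}$, so that $\mathrm{freq}_n(r) := \#\{i:r_i=r\}/n$ equals $\hat{\mathcal{P}}_n(\{r-\mu(v)\})$. The Dvoretzky--Kiefer--Wolfowitz inequality yields $\sup_x |\hat{F}_n(x) - F(x)| \leq \alpha/16$ with probability $\geq 1 - 2e^{-n\alpha^2/128}$, which exceeds $1 - 8^{-i_\Phase}$ for $n \geq n_0$ once $K$ is large enough. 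Since $\hat{\mathcal{P}}_n(\{y\}) = \hat{F}_n(y) - \hat{F}_n(y^-)$, this yields the pointwise bound $|\hat{\mathcal{P}}_n(\{y\}) - \mathcal{P}(\{y\})| \leq \alpha/8$ \emph{uniformly} over all $y \in \mathbb{R}$. On this event, every $y \in Y^*$ has $\mathrm{freq}_n(\mu(v)+y) \geq p^* - \alpha/8 > p^* - \alpha/4$, so $\mu(v)+y \in R_t(v)$; and every $y \notin Y^*$ with $\mathcal{P}(\{y\})>0$ has $\mathcal{P}(\{y\}) \leq q^*$, whence $\mathrm{freq}_n(\mu(v)+y) \leq q^* + \alpha/8 = p^* - 7\alpha/8 < p^* - \alpha/4$, so $\mu(v)+y \notin R_t(v)$; continuous-part values appear with frequency at most $2/n$ (each $\xi_i$ from the continuous part is a.s.\ distinct from the others and from the point masses). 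Hence $R_t(v) = \mu(v) + Y^*$, giving $r^*_t = \mu(v) + \min Y^* = \mu(v) + x^*$ and $\hat{\mu}_t(v) = \mu(v)$. The main obstacle here is the need for uniform frequency control over possibly infinitely many small point masses of $\mathcal{P}$, which DKW delivers in one shot and thereby bypasses a hopeless term-by-term union bound that a naive Chernoff approach would require.

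\textbf{$(\beta,c)$-goodness.} Assume $\Delta(v) \leq 4\hat{r}_t(v)$. If $n_t(v) \leq n_0$, then $n_t(v) \leq K\,i_\Phase/\alpha^2 \leq (K/\alpha^2)\,i_\Phase\,\Delta(v)^{-\beta}$ since $\Delta(v) \leq 1$. Otherwise $(3/4)^{n_t(v)-n_0} \geq \Delta(v)/4$ gives $n_t(v) \leq n_0 + \log_{4/3}(4/\Delta(v))$, and the elementary bound $\log(1/x) \leq (\beta e)^{-1} x^{-\beta}$ on $(0,1]$ (tight at $x = e^{-1/\beta}$) yields $n_t(v) \leq c(\beta,\mathcal{P})\,i_\Phase\,\Delta(v)^{-\beta}$, with $c$ depending on $\mathcal{P}$ only through $p^*$ and $\alpha$. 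Combining with Theorem~\ref{thm:zooming-dim-generalized} gives the advertised low-regret guarantees.
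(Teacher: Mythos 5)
Your proof is correct and follows essentially the same route as the paper's sketch: locate the set of maximal point masses by thresholding empirical frequencies strictly between the largest atom mass $p^*$ and the second-largest $q^*$, which pins down $\mu(v)$ exactly after $\Theta(i_\Phase)$ samples, and then let the confidence radius decay geometrically so that $n_t(v) = O(i_\Phase + \log\tfrac{1}{\Delta(v)}) = O_{\beta,\PP}(i_\Phase\, \Delta(v)^{-\beta})$ for every $\beta>0$. Your use of the DKW inequality to obtain uniform control over a possibly infinite collection of small atoms is a clean way to make rigorous a step that the paper's one-line sketch leaves implicit.
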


\sketch{
Let $S = \argmax \PP(x)$ be the set of all points with the largest point mass $p = \max_x \PP(x)$, and let $q = \max_{x:\, \PP(x)<p} \PP(x)$ be the second largest point mass. Then $n = \Theta(\log t)$ samples suffices to ensure that with high probability each node in $S$ will get at least $n(p+q)/2$ hits whereas any other node will get less, which exactly locates all points in $S$. We use confidence radius
	$r_t(v) = \Theta(i_\Phase)(\tfrac34)^{n_t(v)}$.
} 

Third, we consider noise distributions with a "special region" which can be located using a few samples. This may be a more efficient way to estimate $\mu(v)$ than using the standard Chernoff-style tail bounds. Moreover, in our examples $\PP$ may be heavy-tailed, so that Chernoff-style bounds do not hold.

\begin{corollary}\label{thm:zooming-dim-uniform}
Consider the Noisy Lipschitz MAB problem with noise distribution \PP. Suppose $\PP$ has a density $f(x)$ which is symmetric around $0$ and non-increasing for $x>0$. Assume one of the following:
\begin{OneLiners}
\item[$(a)$] $f(x)$ has a sharp peak:
	$ f(x) = \Theta(|x|^{-\alpha})$
for all small enough $|x|$, where $\alpha\in (0,1)$.

\item[$(b)$] $f(x)$ piecewise continuous on $(0, \infty)$ with at least one jump.
\end{OneLiners}
\noindent Then for some constant $c_\PP$ that depends only on \PP\  the problem admits a
	$(\beta, c_\PP)$-good
confidence radius, where $(a)$~~$\beta = 1-\alpha$,~~$(b)$~~$\beta=1$. The corresponding low-regret guarantees follow via Theorem~\ref{thm:zooming-dim-generalized}.
\end{corollary}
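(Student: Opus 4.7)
The plan is to construct, separately for each case, an estimator $\hat{\mu}_t(v)$ and a confidence radius $\hat{r}_t(v)$ of the form
$$\hat{r}_t(v) = c_\PP \left(\frac{i_\Phase}{C_0 + n_t(v)}\right)^{1/\beta},$$
with $C_0$ a large constant (depending on $\beta$), and then verify properties (i)--(ii) of Definition~\ref{def:conf-rad} and the $(\beta,c_\PP)$-good condition. Once this is done, the low-regret bounds follow immediately from Theorem~\ref{thm:zooming-dim-generalized}. The algebraic $(\beta,c_\PP)$-good check is routine: if $\Delta(v) \leq 4\hat{r}_t(v)$ then solving for $n_t(v)$ gives $n_t(v) \leq (4c_\PP)^{\beta}\,i_\Phase\,\Delta(v)^{-\beta}$. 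Property (ii) follows from the chosen functional form once $C_0$ is taken large enough that $(C_0/(C_0+1))^{1/\beta}\geq 3/4$, i.e.\ $C_0 \geq (3/4)^\beta/(1-(3/4)^\beta)$. So the real content is constructing the estimator, and proving property (i) for it.

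In case (a), use a ``sliding-window peak finder.'' Given the $n=n_t(v)$ samples $X_1,\ldots,X_n$ of strategy $v$, for each candidate location $z$ and scale $\delta>0$ let $N_z(\delta)$ denote the number of samples in $[z-\delta,z+\delta]$, and set $\hat{\mu}_t(v)$ to be any point (on a $\delta$-grid) that approximately maximizes $N_z(\delta)$. By the power-law singularity, $\E[N_{\mu(v)}(\delta)] = n\int_{-\delta}^{\delta} f(x)\,dx = \Theta(n\,\delta^{1-\alpha})$, while for $|z-\mu(v)|\geq C\delta$ we have $\E[N_z(\delta)] = O(n\delta)$ because $f$ is bounded away from the peak. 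Since $\delta^{1-\alpha}\gg \delta$ as $\delta\to 0$, a multiplicative Chernoff bound combined with a union bound over the (at most $O(1/\delta)$) grid points shows that with probability at least $1-8^{-i_\Phase}$ the empirical maximum lies within $\delta$ of $\mu(v)$ as soon as $n\,\delta^{1-\alpha}\geq c_1\,i_\Phase$. Setting $\delta = \hat{r}_t(v)$ with $\beta=1-\alpha$ gives exactly property (i).

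In case (b), suppose $f$ has a jump at some $x_0>0$ of size $\Delta f>0$; by the symmetry of $f$ it jumps also at $-x_0$. Use an ``edge detector'' that searches for a $z$ maximizing $N_{[z,z+\delta]} - N_{[z-\delta,z]}$, and define $\hat{\mu}_t(v) = \hat z - x_0$ (using the prior knowledge of $x_0$). At the true jump location $\mu(v)+x_0$ the expected difference is $\Theta(n\,\delta\,\Delta f)$, whereas for $z$ at distance $\geq\delta$ from any jump the difference is $O(n\delta\cdot\omega(\delta))$, where $\omega(\delta)\to 0$ by piecewise continuity of $f$. Since the random fluctuation of each count over a $\delta$-grid is $O(\sqrt{n\delta\,i_\Phase})$ by Bernstein's inequality, the edge is resolved to accuracy $\delta$ with probability $1-8^{-i_\Phase}$ once $n\geq c_2\,i_\Phase/\delta$, giving property (i) with $\beta=1$.

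The main obstacle is the concentration analysis for the peak finder in case (a): because the window count $N_z(\delta)$ varies with $z$ over a continuum, one needs a careful union bound, and because the relevant expected counts $\Theta(n\delta^{1-\alpha})$ may be small compared to $n\delta$ in competing windows, the multiplicative Chernoff bounds must be tight enough that the signal $n\delta^{1-\alpha}$ genuinely dominates the $O(\sqrt{n\delta\,i_\Phase})$ fluctuations of the background. A clean way to handle both issues is to discretize at scale $\delta/2$, use a monotonicity argument to pass from the grid to arbitrary $z$, and absorb the $\log(1/\delta)$ union-bound factor into the constant $c_\PP$ after inflating $i_\Phase$ by a constant inside the definition of $\hat{r}_t(v)$.
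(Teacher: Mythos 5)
Your construction matches the paper's own (sketched) argument: in both cases one locates a distinguished feature of the known noise density --- the $|x|^{-\alpha}$ singularity at $0$ in case $(a)$, the density jump at $\pm x_0$ in case $(b)$ --- from the empirical sample, shows that $\Theta(\delta^{\alpha-1}\, i_\Phase)$ resp.\ $\Theta(\delta^{-1}\, i_\Phase)$ samples suffice to resolve it to accuracy $O(\delta)$, and inverts to obtain $\hat r_t(v)=\Theta(i_\Phase/n_t(v))^{1/\beta}$. One minor quantitative point: in case $(a)$ a competing window centered at distance $C\delta$ from the peak has expected count $\Theta(n\,\delta^{1-\alpha}C^{-\alpha})$ rather than $O(n\delta)$, but choosing the separation constant $C$ large enough still makes this at most half the peak-window count, so your conclusion (and the constant $c_\PP$) is unaffected.
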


\sketch{
For part ($a$), note that for any $x>0$ in a neighborhood of $0$ we have
	$\PP[ (-x, x) ] = \Theta(x^{1-\alpha})$.
Therefore
	$n = \Theta( x^{\alpha-1}\,\log t)$
samples suffices to separate with high probability any length-$x$ sub-interval of $(-x, x)$ from any length-$x$ sub-interval of $(2x, \infty)$. It follows that using $n$ samples we can approximate the mean reward up to $\pm O(x)$. Accordingly,  we set
	$r_t(v) = \Theta(i_\Phase/n_t(v))^{1/(1-\alpha)}$.

For part ($b$), let $x_0$ be the smallest positive point where density $f$ has a jump. Then by continuity there exists some $\eps>0$ such that
	$ \inf_{x\in (x_0-\eps,\, x_0)} f(x) > \sup_{x\in (x_0,\, x_0+\eps)} f(x)$.
Therefore for any $x<\eps$ using
	$n = \Theta(\tfrac{1}{x}\log n)$
samples suffices to separate with high probability any length-$x$ sub-interval of $(0, x_0)$ from any length-$x$ sub-interval of
	$(x_0, \infty)$.
It follows that using $n$ samples we can approximate the mean reward up to $\pm O(x)$. Accordingly,  we set
	$r_t(v) = \Theta(i_\Phase/n_t(v)) $.
} 

\OMIT{ 
For part ($b_3$), suppose
	$ f(x) = \Theta(x^{-\alpha})$
for all $x\geq x_0$. Then for all $x\geq x_0$ we have
	$\PP[ (x_0, x_0+x ] = \Theta(x^{1-\alpha})$.
Therefore for any $x>0$ using
	$n = \Theta_{\alpha}(\tfrac{1}{x}\,\log t) $
samples suffices to separate with high probability any length-$x$ sub-interval of $(0, x_0)$ from any length-$x$ sub-interval of
	$(x_0, \infty)$.
It follows that using $n$ samples we can approximate the mean reward up to $\pm O(x)$. So  we set
	$r_t(v) = \Theta(\log t)\, n_t(v)^{-1} $.
} 

\subsection{What if the maximal expected reward is 1?}
\label{sec:maxReward1}

\newcommand{\Rrel}{r}

We elaborate the algorithm from Section~\ref{sec:adaptive-exploration} so that it satisfies the guarantee~\refeq{eq:thm-zooming-dim} {\bf and} performs much better if the maximal expected reward is $1$.

\begin{definition}
Consider the Lipschitz MAB problem. Call an algorithm \emph{$\beta$-good} if there exists an absolute constant $c_0$ such that for any problem instance of $c$-zooming dimension $d$ it has the properties (ab) in Theorem~\ref{thm:zooming-dim-generalized}. Call a confidence radius  \emph{$\beta$-good} if it is \emph{$(\beta,c_0)$-good} for some absolute constant $c_0$.
\end{definition}

\begin{theorem}\label{thm:zooming-dim-maxReward}
Consider the \standardMAB. There is an algorithm \A\ which is $2$-good in general, and $1$-good when the maximal expected reward is $1$.
\end{theorem}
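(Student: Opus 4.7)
The plan is to exhibit a refined confidence radius $r_t(v)$ and then to invoke Theorem~\ref{thm:zooming-dim-generalized} twice: with $(\beta,c_0)=(2,O(1))$ in general, and with $(\beta,c_0)=(1,O(1))$ under the additional assumption $\mu^*=1$. The new radius will combine the Hoeffding-type radius of Section~\ref{sec:adaptive-exploration} with a Bernstein-type correction that automatically shrinks when $\hat\mu_t(v)$ is close to $1$. Concretely, I would take
\[
r_t(v) \;=\; \min\bigl\{ r^{\mathrm H}_t(v),\; r^{\mathrm B}_t(v)\bigr\},\quad
r^{\mathrm H}_t(v) = \Theta\!\left(\sqrt{\tfrac{i_\Phase}{2+n_t(v)}}\right),\quad
r^{\mathrm B}_t(v) = \Theta\!\left(\sqrt{\tfrac{(1-\hat\mu_t(v))\,i_\Phase}{2+n_t(v)}} \,+\, \tfrac{i_\Phase}{2+n_t(v)}\right).
\]

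The first step is the ``novel Chernoff-style bound'' advertised in the outline: for independent $[0,1]$-valued draws with common mean $\mu$, the empirical mean $\hat\mu_n$ after $n$ samples satisfies
\[
|\hat\mu_n - \mu| \;\leq\; O\!\left(\sqrt{\tfrac{(1-\hat\mu_n)\log(1/\delta)}{n}} \,+\, \tfrac{\log(1/\delta)}{n}\right)
\]
with probability at least $1-\delta$. I would derive this from classical Bernstein (using $\mathrm{Var}(X)\leq \mu(1-\mu)\leq 1-\mu$ for $X\in[0,1]$) combined with a self-bounding step that replaces the unknown $1-\mu$ by the observable $1-\hat\mu_n$ up to lower-order corrections. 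Together with Hoeffding for the $r^{\mathrm H}$-branch and a union bound over $n$ and over phases, this yields property~(i) of Definition~\ref{def:conf-rad} for $r_t$.

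The second step is the smoothness property~(ii) of Definition~\ref{def:conf-rad}. The Hoeffding branch is obviously smooth. For the Bernstein branch, $\hat\mu_t(v)$ changes by at most $1/(1+n_t(v))$ per round in which $v$ is played, so a short calculation bounds the per-round multiplicative change in $r^{\mathrm B}_t(v)$; replacing $r_t(v)$ by its running minimum over past rounds of the phase (which does not spoil property~(i)) then enforces $\tfrac34\,r_t(v)\leq r_{t+1}(v)\leq r_t(v)$.

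The third and crucial step is the two $\beta$-good verifications, where the $\mu^*=1$ hypothesis finally plays a role. The inequality $r_t\leq r^{\mathrm H}_t$ immediately gives $(2,O(1))$-goodness, by the same argument as in Lemma~\ref{lm:bound-active}: $\Delta(v)\leq 4 r_t(v)$ forces $n_t(v)= O(i_\Phase)\Delta(v)^{-2}$. When $\mu^*=1$, one has $1-\mu(v)=\Delta(v)$, and on the clean event $1-\hat\mu_t(v)\leq \Delta(v)+r_t(v)\leq 5\,r_t(v)$ whenever $\Delta(v)\leq 4r_t(v)$. Substituting into $r^{\mathrm B}_t(v)$ gives $r_t(v)\leq O(\sqrt{r_t(v)\,i_\Phase/n_t(v)}+i_\Phase/n_t(v))$, which solves to $r_t(v)\leq O(i_\Phase/n_t(v))$ and hence to $n_t(v)\leq O(i_\Phase)\,\Delta(v)^{-1}$: this is precisely $(1,O(1))$-goodness. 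Theorem~\ref{thm:zooming-dim-generalized} then delivers both halves of the statement.

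The main obstacle is the self-bounding Bernstein inequality in step~one: the standard empirical-Bernstein bounds use an independently estimated variance, whereas here one must express the deviation directly in terms of $1-\hat\mu_n$, with constants tight enough that the self-bounding computation in step~three actually closes up. The smoothness check of step~two is also slightly delicate, because $r^{\mathrm B}_t$ depends on a fluctuating empirical quantity; I expect that, at worst, one must pay a constant factor by passing to a running minimum.
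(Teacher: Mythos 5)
Your proposal is correct and follows essentially the same route as the paper: the paper's refined radius is exactly your Bernstein-type branch, $r_t(v)=\frac{\alpha}{1+n_t(v)}+\sqrt{\alpha\,\tfrac{1-\mu_t(v)}{1+n_t(v)}}$ with $\alpha=\Theta(i_\Phase)$, its Lemma on i.i.d.\ $[0,1]$ variables is precisely your self-bounding empirical Bernstein/Chernoff bound (proved there from two classical Chernoff bounds with a case split on $\mu$ versus $\alpha/(6n)$, and supplying the conversion $r(\alpha,X)<3\,r(\alpha,\mu)$ that you instead obtain by bounding $1-\hat\mu_t\leq\Delta+r_t\leq 5r_t$ and solving the self-referential inequality), and the $1$-good verification is the same computation $\Delta\leq\tfrac{\alpha}{n}+\sqrt{\alpha\Delta/n}\Rightarrow n\leq O(\alpha/\Delta)$. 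The explicit minimum with the Hoeffding radius is unnecessary (the Bernstein radius alone is $2$-good since $1-\mu_t\leq 1$), but harmless.
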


The key ingredient here is a refined version of the confidence radius which is much sharper than~\refeq{eq:confidence-radius} when the sample average  is close to $1$. For phase $i_\Phase$, we define
\begin{equation}\label{eq:conf-rad-relative}
 \Rrel_t(v) := \frac{\alpha}{1+n_t(v)} +
	\sqrt{ \alpha\; \frac{1-\mu_t(v)}{1+n_t(v)}}\;\;
	\text{for some $\alpha= \Theta(i_\Phase)$}.
\end{equation}

In order to analyze~\refeq{eq:conf-rad-relative} we need to establish the following Chernoff-style bound which, to the best of our knowledge, has not appeared in the literature:

\begin{lemma}\label{lm:my-chernoff}
Consider $n$ i.i.d. random variables $X_1 \ldots X_n$ on $[0,1]$. Let $\mu$ be their mean, and let $X$ be their average. Then for any $\alpha>0$ the following holds:
$$ \Pr\left[\, |X-\mu| < r(\alpha,X) < 3\,r(\alpha,\mu) \, \right] > 1-e^{-\Omega(\alpha)},\;
	\text{where $r(\alpha,x)
		= \tfrac{\alpha}{n} + \sqrt{\tfrac{\alpha x}{n}}$}.
$$
\end{lemma}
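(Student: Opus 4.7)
The plan is to prove the two inequalities $|X-\mu| < r(\alpha,X)$ and $r(\alpha,X) < 3\,r(\alpha,\mu)$ via separate Bernstein-type tail bounds and then take a union bound. The starting point is the sharp form of Bernstein's inequality,
\[
 \Pr[|X - \mu| > t] \leq 2\exp\!\left(-\frac{nt^{2}}{2\mu + 2t/3}\right),
\]
which applies since $X_i \in [0,1]$ gives $\mathrm{Var}(X_i) \leq \mu(1-\mu) \leq \mu$.

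First I would establish $\Pr[|X - \mu| > r(\alpha,\mu)] \leq 2 e^{-\alpha/2}$ by setting $t = r := r(\alpha,\mu)$ in Bernstein. The decisive identity is $n r^{2} = \alpha\mu + 2\alpha\sqrt{\alpha\mu/n} + \alpha^{2}/n \geq \alpha(\mu + r)$, while $2(\mu + r/3) \leq 2(\mu + r)$, so the exponent is at least $\alpha/2$. On this event $X \leq \mu + r$; two applications of $\sqrt{a+b} \leq \sqrt a + \sqrt b$ combined with the AM-GM bound $\sqrt{\alpha/n}\cdot(\alpha\mu/n)^{1/4} \leq r/2$ then yield $r(\alpha,X) \leq \tfrac{5}{2}\,r < 3\,r(\alpha,\mu)$, giving the right half of the sandwich.

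The heart of the proof is the left half. Since $r(\alpha,X) < r(\alpha,\mu)$ when $X < \mu$, one cannot simply substitute $X$ for $\mu$ in the Bernstein step above. Instead I would analyze the bad event $\{|X-\mu| > r(\alpha,X)\}$ directly: if $|X-\mu| \leq \alpha/n$ the event is void, and otherwise squaring the inequality $|X-\mu| - \alpha/n > \sqrt{\alpha X/n}$ with $X = \mu \pm (X-\mu)$ yields a quadratic in $Y := |X-\mu|$. For the upper tail ($X > \mu$), the solution set is contained in $\{Y > t^{+}\}$ with
\[
 t^{+} = \tfrac{1}{2}\bigl(3\alpha/n + \sqrt{5\alpha^{2}/n^{2} + 4\alpha\mu/n}\bigr) = \Theta\!\bigl(\sqrt{\alpha\mu/n} + \alpha/n\bigr);
\]
the lower tail analogously gives $Y > t^{-} = \Theta(\sqrt{\alpha\mu/n} + \alpha/n)$, provided $\mu \geq 3\alpha/(4n)$ (otherwise $\mu - X \leq \mu < \alpha/n \leq r(\alpha,X)$ and nothing is to prove). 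Running Bernstein one more time with $t = t^{\pm}$ gives $\Pr[Y > t^{\pm}] \leq e^{-\Omega(\alpha)}$ by exactly the same $\mu + r$-style algebra as in the first step.

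The main obstacle will be bookkeeping the lower-tail thresholds cleanly: the discriminant of the relevant quadratic can become negative near $\mu \approx \alpha/n$, and the ``trivial region'' $\mu < 3\alpha/(4n)$ must be carved out by hand. A union bound over the Bernstein event at radius $r(\alpha,\mu)$ (for the right half) and the two quadratic-threshold Bernstein events (for the left half) then completes the proof.
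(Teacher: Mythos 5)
Your proof is correct, but it takes a genuinely different route from the paper's. The paper avoids Bernstein's inequality entirely and instead splits on the size of $\mu$: when $\mu \geq \tfrac{\alpha}{6n}$ it applies the multiplicative Chernoff bound $\Pr[|X-\mu|>\delta\mu] < 2e^{-\mu n\delta^2/3}$ with $\delta = \tfrac12\sqrt{\alpha/(6\mu n)}$, which forces $X = \Theta(\mu)$ and makes both halves of the sandwich immediate by direct comparison of $r(\alpha,X)$ with $r(\alpha,\mu)$; when $\mu < \tfrac{\alpha}{6n}$ it applies the crude upper-tail bound $\Pr[X>a]<2^{-an}$ with $a=\alpha/n$, after which both $X$ and $\mu$ are $O(\alpha/n)$ and the additive term $\alpha/n$ in $r$ dominates everything, so both inequalities are trivial. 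Your approach replaces this two-case split with a uniform Bernstein bound, handles the two halves of the sandwich by separate events, and — for the harder left half — inverts the quadratic in $|X-\mu|$ to extract explicit deviation thresholds $t^{\pm}=\Theta(\sqrt{\alpha\mu/n}+\alpha/n)$. The bookkeeping you flag does work out: the spurious small root of each quadratic lies below $\alpha/n$ and is therefore excluded by the precondition $|X-\mu|>\alpha/n$, and in the degenerate lower-tail region one in fact has $\mu - X \leq \mu < \alpha/n \leq r(\alpha,X)$ so nothing needs proving. The paper's argument is shorter and each case is essentially a one-line computation; yours is more systematic, yields the sharper constant $\tfrac52$ in place of $3$, and avoids invoking the somewhat ad hoc absolute upper-tail bound, at the cost of the quadratic-root analysis.
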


\begin{proof}
We will use two well-known Chernoff Bounds which we state below (e.g. see p. 64 of~\cite{MitzUpfal-book05}):
\newcommand{\CB}{{\sc cb}}
\begin{itemize}
\item[(\CB1)] $\Pr[ |X-\mu| > \delta \mu ] < 2\, e^{-\mu n \delta^2/3} $
	for any $\delta\in (0,1)$.
\item[(\CB2)] $\Pr[ X > a ] < 2^{-an} $
	for any $a>6\mu$.
\end{itemize}
First, suppose
	$\mu\geq \tfrac{\alpha}{6n} $. Apply (\CB1) with
	$\delta = \tfrac12 \sqrt{\tfrac{\alpha}{6\mu n}}$.
Thus with probability at least $1- e^{-\Omega(\alpha)}$ we have
	$|X-\mu| < \delta\mu \leq \mu/2$.
Moreover, plugging in the value for $\delta$,
$$ |X-\mu|
	< \tfrac12  \sqrt{\alpha \mu/n}
	\leq \sqrt{\alpha X/n}
	\leq r(\alpha, X) < 1.5\, r(\alpha, \mu).
$$

Now suppose $\mu< \tfrac{\alpha}{6n}$. Then using (\CB2) with $a = \tfrac{\alpha}{n}$, we obtain that with probability at least $1- 2^{-\Omega(\alpha)}$ we have
	$X < \tfrac{\alpha}{n}$,
and therefore
$$ |X-\mu| < \tfrac{\alpha}{n} < r(\alpha, X) <  (1+\sqrt{2})\, \tfrac{\alpha}{n}
	< 3\, r(\alpha, \mu). \qedhere
$$
\end{proof}

\begin{proofof}{Theorem~\ref{thm:zooming-dim-maxReward}}
Let us fix a strategy $v$ and time $t$. Let us use Lemma~\ref{lm:my-chernoff} with $n=n_t(v)$ and $\alpha = \Theta(i_\Phase)$ as in~\refeq{eq:conf-rad-relative}, setting each random variable $X_i$ equal to 1 minus the reward from the $i$-th time strategy $v$ is played in the current phase. Then $\mu = \mu(v)$ and $X = \mu_t(v)$, so the Lemma says that
\begin{align}\label{eq:max1-pf}
 \Pr\left[\,
	|\mu_t(v)-\mu(v)| < r_t(v)
	< 3\left(
			\frac{\alpha}{n_t(v)} + \sqrt{\frac{\alpha\, (1-\mu(v))}{n_t(v)}}\,
 	   \right)
 	\right]
  > 1-2^{\Omega(\alpha)}.
\end{align}

Note that~\refeq{eq:conf-rad-relative} is indeed a confidence radius with respect to $\mu_t(v)$: property (i) in Definition~\ref{def:conf-rad} holds by~\refeq{eq:max1-pf}, and it is easy to check that property (ii) holds, too. It is easy to see that~\refeq{eq:conf-rad-relative} is a $2$-good confidence radius. It remains to show that it is $1$-good when the maximal reward is $1$; this is where we use the upper bound on $r_t(v)$ in~\refeq{eq:max1-pf}. It suffices to prove the following claim:
$$\text{
If the maximal reward is $1$ and
	$\Delta(v) \leq 4\, \Rrel_t(v) $
then
	$n_t(v) \leq O(\log t)\, \Delta(v)^{-1}$.
}$$
Indeed, let $n = n_t(v)$ and $\Delta = \Delta(v)$, and suppose that the maximal reward is $1$ and
	$\Delta(v) \leq 4\, \Rrel_t(v) $.
Then by~\refeq{eq:max1-pf} we have
$\Delta \leq 4\, \Rrel_t(v)
	\leq \tfrac{\alpha}{n} + \sqrt{\alpha\Delta/n} $
for some
	$\alpha = O(\log t)$.
Now there are two cases.
If $ \tfrac{\alpha}{n} < \Delta/2$ then
	$ \sqrt{\alpha \Delta/n} \geq \Delta - \tfrac{\alpha}{n} > \Delta(v)/2$,
which implies the desired inequality. Else we simply have
	$n \leq O(\alpha/\Delta)$.
Claim proved.
\end{proofof}

\subsection{Example: expected reward $=$ distance to the target}
\label{sec:targetMAB}

We consider a version of the Lipschitz MAB problem where the expected reward of a given strategy is equal to its distance to some \emph{target set} which is not revealed to the algorithm.

\OMIT{ 
\begin{definition}\label{def:metric-MAB}
A \emph{\quasimetric} on a set $X$ is a symmetric function
	$L: X\times X \rightarrow [0, \infty)$
such that $L(x,x)=0$ for all $x$.
A \emph{metric MAB problem} on a \quasimetric\ $(L,X)$ is an i.i.d. MAB problem on a strategy set $X$ such that the algorithm is given a \emph{covering oracle} for $(L,X)$.
\end{definition}
 and  $f: [0,1]\rightarrow [0,1]$ is a non-decreasing \emph{shape function} known to the algorithm.
} 

\begin{definition}\label{def:noisyNN}
The Target MAB problem on a metric space $(L,X)$ with a \emph{target set} $S\subset X$ is the standard Lipschitz MAB problem on $(L,X)$  with payoff function
	$\mu(u) = 1-L(u,S)$.
\end{definition}

\begin{note}{Remark.}
It is a well-known fact that
	$ L(u,v) \geq L(u,S) - L(v,S)$
for any $u,v\in X$ and any set $S\subset X$. Therefore the payoff function $\mu$ in Definition~\ref{def:noisyNN} is Lipschitz on $(L,X)$.
\end{note}

Note that in the Target MAB problem the maximal reward is 1, so we can take advantage of the zooming algorithm \A\ from Theorem~\ref{thm:zooming-dim-maxReward}. Recall that
	$R_\A(t) \leq \tilde{O}(c\,t^{1-1/(1+d)})$
where $d$ is the $c$-zooming dimension. In this example zooming dimension is about covering $B(S,r)$ with sets of diameter $\Theta(r)$: it is the smallest $d$ such that for each $r>0$ the ball $B(S,r)$ can be covered with $c\,r^{-d}$ sets of diameter $\leq r/8$.

Let us refine this bound for metric spaces of finite doubling dimension. In particular, we show that for a finite target set the zooming algorithm from Theorem~\ref{thm:zooming-dim-maxReward} achieves \emph{poly-logarithmic} regret.
\begin{theorem}\label{thm:targetMAB}
Consider the Target MAB problem on a metric space of finite doubling dimension $d^*$. Let \A\ be the zooming algorithm from Theorem~\ref{thm:zooming-dim-maxReward}.  Then
\begin{align}\label{eq:targetMAB}
 R_{\A}(t) \leq (c\, 2^{O(d^*)}\log^2 t)\; \; t^{1-1/(1+d)} \;\;\;
	\text{for all $t$},
\end{align}
where $d$ is the $c$-covering dimension of the target set $S$.
\end{theorem}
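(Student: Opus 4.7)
The plan is to deduce the theorem from Theorem~\ref{thm:zooming-dim-maxReward} (with $\beta=1$) by verifying two facts about any Target MAB instance: (a) the maximal expected reward equals $1$, and (b) the $C$-zooming dimension is at most $d$ for some $C = c\cdot 2^{O(d^*)}$. Statement (a) is immediate: since $S\subseteq X$, any $u\in S$ satisfies $\mu(u) = 1 - L(u,S) = 1$, so $\mu^* = 1$. Consequently the zooming algorithm of Theorem~\ref{thm:zooming-dim-maxReward} is $1$-good on this instance, and once (b) is established the bound (\ref{eq:targetMAB}) follows by plugging $\beta=1$ and $C = c\cdot 2^{O(d^*)}$ into Theorem~\ref{thm:zooming-dim-generalized} (using part (a) in general, and part (b) when $d=0$ gives the poly-logarithmic regret advertised for finite target sets).

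To prove (b), note that $\Delta(v) = L(v,S)$ in this instance, so
$$X_r = \{v\in X : r/2 < L(v,S) \leq r\} \;\subseteq\; \{v\in X : L(v,S) \leq r\}.$$
The $c$-covering-dimension assumption yields a cover of $S$ by at most $c\,r^{-d}$ sets of diameter $\leq r$. Picking one anchor point $p_i$ from each such set, the triangle inequality implies that each set lies in the ball $B(p_i, r)$, hence the $r$-neighborhood of $S$ (which contains $X_r$) lies in $\bigcup_i B(p_i, 2r)$. By the doubling-dimension property iterated a constant number of times, each $B(p_i, 2r)$ is covered by $2^{O(d^*)}$ balls of radius $r/16$, i.e.\ sets of diameter $\leq r/8$. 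Composing the two covers, $X_r$ is covered by at most $c\cdot 2^{O(d^*)}\,r^{-d}$ sets of diameter $\leq r/8$, establishing (b) with $C = c\cdot 2^{O(d^*)}$.

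With (a) and (b) in hand, Theorem~\ref{thm:zooming-dim-maxReward} applied with $\beta=1$ yields a regret bound of the form $O(C\log^2 t)^{1/(d+1)}\,t^{1-1/(d+1)}$; this is weaker than (in fact dominated by) $(C\log^2 t)\,t^{1-1/(1+d)}$ whenever $C\log^2 t \geq 1$, which is the bound asserted in the theorem. The case of a finite $S$ corresponds to $c$-covering dimension $0$ with $c = |S|$, so $d+\beta = 1$, and Theorem~\ref{thm:zooming-dim-generalized}(b) then gives regret $O(|S|\cdot 2^{O(d^*)}\log^2 t)$, which is poly-logarithmic.

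The only real subtleties are bookkeeping: correctly translating between ``diameter $\leq r$'' sets and balls of radius $r$ (forcing the blow-up to radius $2r$ in the first cover), and counting the $O(\log 32) = O(1)$ iterations of the doubling bound needed to refine balls of radius $2r$ down to balls of radius $r/16$, which contributes the $2^{O(d^*)}$ factor and nothing more.
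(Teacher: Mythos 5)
Your proposal is correct and follows essentially the same route as the paper: observe that $\mu^*=1$ (so the $1$-good guarantee of Theorem~\ref{thm:zooming-dim-maxReward} applies), then bound the zooming dimension by covering the $r$-neighborhood of $S$ using a $c\,r^{-d}$-set cover of $S$ enlarged to balls and refined via $O(1)$ applications of the doubling property, contributing the $2^{O(d^*)}$ factor. Your treatment of the finite-$S$ case via Theorem~\ref{thm:zooming-dim-generalized}(b) also matches the paper's remark on poly-logarithmic regret.
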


\begin{proof}
By Theorem~\ref{thm:zooming-dim-maxReward} it suffices to prove that the $K$-zooming dimension of the pair $(L,\mu)$ is at most $d$, for some $K = c\, 2^{O(d^*)}$. In other words, it suffices to cover the set
	$S_\delta = \{ u\in Y: \Delta(u) \leq \delta \}$
with $K\, \delta^{-d}$ sets of diameter $\leq \delta/16$, for any given $\delta>0$.

Fix $\delta>0$ and note that
	$ \Delta(u) = L(u,S)$.
Note that set $S$ can be covered with $c\,\delta^{-d}$ sets
	$\{\, C_i \,\}_i$
of diameter $\leq \delta$. It follows that the set $S_\delta$ can be covered with $r^{-d}$ sets
	$\{\, B(C_i, r) \,\}_i$
of diameter $\leq 3r $. Moreover, each set $B(C_i, r)$ can be covered with $2^{O(d^*)}$ of sets of diameter $\leq \delta/16$.
\end{proof}

\begin{note}{Remarks.}
This theorem is useful when $d<d^*$, i.e. when the target set is a low-dimensional subset of the metric space. Recall that the zooming algorithm is self-tuning: it does not need to know $d^*$ and $d$, and in fact it does not even need to know that it is presented with an instance of the Target MAB problem!
\end{note}

We note in passing that it is very easy to extend Theorem~\ref{thm:targetMAB} to a setting in which the strategy set $Y$ is a proper subset of the metric space $(L,X)$ and does not contain the target set $S$. If $L(Y,S)=0$ then the guarantee~\refeq{eq:targetMAB} holds as is. If $L(Y,S)>0$ then the following guarantee holds:
$$ R_{\A}(t) \leq (c\, 2^{O(d^*)}\log^2 t)\; \; t^{1-1/(2+d)} \;\;\;
	\text{for all $t$},
$$
where $d$ is the $c$-covering dimension of the set $B(S,r)$, $r = L(Y,S)$.


\subsection{The Lipschitz MAB problem under relaxed assumptions}
\label{sec:noisy-NN-app}

The analysis in Section~\ref{sec:adaptive-exploration} does not require all the assumptions in the Lipschitz MAB problem. In fact, it never uses the triangle inequality, and applies the Lipschitz condition~\refeq{eq:Lipschitz-condition} only if (essentially)  one of the two strategies in~\refeq{eq:Lipschitz-condition} is optimal. Let us formulate our results under the properly relaxed assumptions. In what follows, the \emph{zooming algorithm} will refer to the algorithm in Theorem~\ref{thm:zooming-dim-maxReward}.

\begin{theorem}\label{thm:metricMAB}
Consider a version of the Lipschitz MAB problem on $(L,X)$ in which the similarity metric is not required to satisfy triangle inequality,\footnote{Formally, we require $L$ to be a symmetric  function $X\times X\rightarrow [0, \infty]$ such that $L(x,x)=0$ for all $x\in X$. We call such function a \emph{\quasimetric} on $X$.} and the Lipschitz condition~\refeq{eq:Lipschitz-condition} is replaced by
\begin{align}\label{eq:relaxedLipschitz-A}
 (\forall u\in X)\quad
	\Delta(u) \leq L(u,v^*)\quad
	\text{for some $v^* = \argmax_{v\in X} \mu(v)$}
\end{align}
More generally, if such node $v^*$ does not exist, assume
\begin{align}\label{eq:relaxedLipschitz}
	(\forall \eps>0)\quad
	(\exists v^*\in X) \quad
	(\forall u\in X)\quad
	\Delta(u) \leq L(u,v^*) + \eps.
\end{align}
Then the guarantees for the zooming algorithm in Theorem~\ref{thm:zooming-dim-maxReward} still hold.
\end{theorem}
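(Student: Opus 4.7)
The plan is to audit the proofs of Theorems~\ref{thm:zooming-dim},~\ref{thm:zooming-dim-generalized}, and~\ref{thm:zooming-dim-maxReward} and verify that neither the triangle inequality nor the global Lipschitz hypothesis~\refeq{eq:Lipschitz-condition} is actually used; only the much weaker~\refeq{eq:relaxedLipschitz-A} or~\refeq{eq:relaxedLipschitz} is needed. Two facets of $L$ must be tracked separately: where (if anywhere) the triangle inequality is invoked, and which pairs of strategies the Lipschitz-type condition is applied to.

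First, the triangle inequality is never used. The function $L$ enters the algorithm only through the covering predicate ``$L(u,v) \leq r_t(v)$'' and through the covering oracle, both of which make sense for any symmetric nonnegative $L$ with $L(x,x)=0$. On the analysis side, Corollary~\ref{cor:sparsity} relies only on the activation rule --- $v$ is activated at time $s$ only if $L(u,v) > r_s(u)$ for every previously active $u$ --- together with Lemma~\ref{lm:bound-active}; Claim~\ref{cl:clean-phase} is a packing argument driven by the $C$-zooming cover; and the refined confidence radius of Theorem~\ref{thm:zooming-dim-maxReward} rests on the purely probabilistic Lemma~\ref{lm:my-chernoff}. None of these steps see the triangle inequality.

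Second, the Lipschitz condition is invoked in exactly one place: the step of Lemma~\ref{lm:bound-active} that writes $\mu(v_t) \geq \mu(v^*) - L(v_t,v^*)$, where $v^*$ is an $\eps$-optimal strategy and $v_t$ is the active strategy that covers $v^*$. Under~\refeq{eq:relaxedLipschitz-A}, take $\eps=0$ and let $v^*$ be the guaranteed maximizer of $\mu$; then the defining inequality $\Delta(v_t) \leq L(v_t,v^*)$ is exactly what the proof needs. Under the weaker~\refeq{eq:relaxedLipschitz}, a true maximizer need not exist. For each $\eps>0$ let $v^*$ be the witness supplied by~\refeq{eq:relaxedLipschitz}; applying that inequality to $u=v^*$ with $L(v^*,v^*)=0$ gives $\Delta(v^*)\leq\eps$, so $v^*$ is $\eps$-optimal and is covered by some active $v_t$ with $L(v_t,v^*)\leq r_t(v_t)$. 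Applying~\refeq{eq:relaxedLipschitz} again to $u=v_t$ yields $\mu(v_t) \geq \mu^* - L(v_t,v^*) - \eps$, so for the strategy $v$ played at time $t$,
\[
I_t(v) \;\geq\; I_t(v_t) \;\geq\; \mu(v_t) + r_t(v_t) \;\geq\; \mu^* - L(v_t,v^*) - \eps + r_t(v_t) \;\geq\; \mu^* - \eps.
\]
Letting $\eps \to 0$ recovers $I_t(v) \geq \mu^*$, and the remainder of Lemma~\ref{lm:bound-active} (deducing $\Delta(v) \leq 3\,r_t(v)$ from $I_t(v)\geq \mu^*$, the clean-phase inequality, and the definition of the index) is purely syntactic.

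Once Lemma~\ref{lm:bound-active} is re-established under the relaxed hypotheses, every downstream step --- Corollary~\ref{cor:sparsity}, Claim~\ref{cl:clean-phase}, the summation over phases in the proof of Theorem~\ref{thm:zooming-dim}, and the Chernoff-based sharpening in Theorem~\ref{thm:zooming-dim-maxReward} --- applies verbatim, so the full guarantee carries over. The only bookkeeping point worth flagging is that the witness $v^*$ in~\refeq{eq:relaxedLipschitz} depends on $\eps$ and need not lie in any net maintained by the algorithm; this is harmless because step~1 of the algorithm ensures that \emph{every} point of $X$ is covered, a property that does not rely on $L$ being a genuine metric. I therefore do not expect any serious obstacle: the theorem simply formalizes the observation that the zooming-algorithm analysis was all along using strictly less than the hypotheses under which it was stated.
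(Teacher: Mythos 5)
Your proposal is correct and matches the paper's own (largely implicit) justification: the paper states this theorem without a separate proof, asserting exactly the audit you perform --- that the analysis in Section~\ref{sec:adaptive-exploration} never invokes the triangle inequality and applies the Lipschitz condition only in the single step of Lemma~\ref{lm:bound-active} where one of the two strategies is ($\eps$-)optimal. Your explicit handling of the $\eps$-witness in~\refeq{eq:relaxedLipschitz}, including the observation that $\Delta(v^*)\leq\eps$ follows from $L(v^*,v^*)=0$, is a faithful and slightly more detailed rendering of the intended argument.
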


We apply this theorem to a generalization of the Target MAB problem in which
	$\mu(u) = f(L(u,S))$
for some known non-decreasing \emph{shape function}
	$f:[0,1] \rightarrow [0,1]$.
Let us define a \quasimetric\ $L_f$ by
	$L_f(u,v) = f(L(u,v)) - f(0)$.
It is easy to see that $L_f$ satisfies~\refeq{eq:relaxedLipschitz-A}. Indeed, fix any $u^*\in S$. Then
$$ \Delta(u) = f(L(u,S))-f(0)
	\leq f(L(u,u^*)) - f(0)
	= L_f(u,u^*)
	\quad\quad (\forall u\in X).
$$
Thus we can use the zooming algorithm on the \quasimetric\ $L_f$ and enjoy the guarantees in Theorem~\ref{thm:zooming-dim-maxReward}. Below we refine these guarantees for several examples.

Our goal here is to provide clean illustrative statements rather than cover the most general setting to which our refined guarantees apply. Therefore we start with the most concrete example which we formulate as a theorem, and follow up with some extensions which we list without a proof.

\begin{theorem}\label{thm:targetMAB-shape}
Consider the Target MAB problem on a metric space $(L,X)$ of finite doubling dimension $d^*$, with shape function
	$f(x) = x^{1/\alpha}$, $\alpha>0$.
Let \A\ be the zooming algorithm on $(L_f,X)$. Then
\begin{align}\label{eq:targetMAB-shape}
 R_{\A}(t) \leq (c\, 2^{O(d^*)}\log^2 t)\; \; t^{1-1/(1+\alpha d)} \;\;\;
	\text{for all $t$},
\end{align}
where $d$ is the $c$-covering dimension of the target set $S$.
\end{theorem}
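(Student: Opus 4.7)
The plan is to apply Theorem~\ref{thm:zooming-dim-maxReward} to the quasi-distance $L_f$---which is permitted by Theorem~\ref{thm:metricMAB}---and then bound the induced $K$-zooming dimension by translating back to the original metric $L$.

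First I would unpack the shape function. Since $f(x) = x^{1/\alpha}$ with $f(0)=0$, we have $L_f(u,v) = L(u,v)^{1/\alpha}$, so $\Delta(u) = L(u,S)^{1/\alpha}$ and in particular $\mu^* = 1$. Two conversions are then immediate: a set has $L_f$-diameter $\leq r/8$ iff it has $L$-diameter $\leq (r/8)^\alpha$, and the sublevel set
$X_r := \{u : \Delta(u) \leq r\}$
is contained in the $L$-neighborhood $B_L(S, r^\alpha)$.

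Next I would bound the $K$-zooming dimension with respect to $L_f$ by a two-stage covering, paralleling the argument in the proof of Theorem~\ref{thm:targetMAB}. Stage~1: use the $c$-covering dimension of $S$ to cover $S$ by at most $c\, r^{-\alpha d}$ sets $\{C_i\}$ of $L$-diameter $\leq r^\alpha$, then enlarge each $C_i$ by $r^\alpha$ to obtain a cover of $B_L(S, r^\alpha) \supseteq X_r$ by sets of $L$-diameter $\leq 3 r^\alpha$. Stage~2: invoke the doubling dimension $d^*$ on each enlarged set to subdivide it into $2^{O(d^*)}$ pieces of $L$-diameter $\leq (r/8)^\alpha$, i.e.\ of $L_f$-diameter $\leq r/8$ (the hidden constant depending on $\alpha$). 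Multiplying the two counts gives $K = c \cdot 2^{O(d^*)}$ and bounds the $K$-zooming dimension by $\alpha d$.

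Finally, since $\mu^*=1$, Theorem~\ref{thm:zooming-dim-maxReward} in its $1$-good form yields $R_\A(t) \leq O(K \log^2 t)^{1/(1+\alpha d)}\, t^{1-1/(1+\alpha d)}$; absorbing the fractional root into the leading constant produces~\refeq{eq:targetMAB-shape}. The only mildly delicate point is the exponent arithmetic: the shape function $x^{1/\alpha}$ forces the cover of $S$ to live at $L$-scale $r^\alpha$ rather than $r$, turning the covering-dimension factor $r^{-d}$ into $r^{-\alpha d}$ (hence the exponent $1-1/(1+\alpha d)$), while the doubling refinement costs only a multiplicative constant.
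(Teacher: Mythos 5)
Your proposal is correct and follows essentially the same route as the paper: reduce to bounding the zooming dimension of $(L_f,\mu)$ by $\alpha d$ via the inclusion $X_r \subseteq B_L(S, r^\alpha)$, cover at $L$-scale $r^\alpha$ using the $c$-covering dimension of $S$ (yielding the $r^{-\alpha d}$ count), refine by the doubling dimension to reach $L_f$-diameter $r/8$, and invoke the $1$-good guarantee of Theorem~\ref{thm:zooming-dim-maxReward} since $\mu^*=1$. Your explicit appeal to Theorem~\ref{thm:metricMAB} to license running the zooming algorithm on the quasi-distance $L_f$, and your remark that the doubling constant picks up a dependence on $\alpha$, are both consistent with (and slightly more careful than) the paper's own writeup.
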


\begin{proof}
Consider the pair $(L_f, \mu)$. Since the maximal reward is 1, by Theorem~\ref{thm:zooming-dim-maxReward} it suffices to prove that for some $c^* =c\, 2^{O(d^*)}$ the $c^*$-zooming dimension of this pair is at most $\alpha d$. Specifically, for each $\delta>0$ we need to cover the set
	$S_\delta = \{u\in X:\, \Delta(u)\leq \delta\}$
with
	$c^*\,\delta^{-\alpha d} $
sets of $L_f$-diameter at most $\delta/16$.

Indeed, since $\Delta(u) = L_f(u,S)$, for each
	$u\in S_\delta$ we have $ L(u,S) \leq \delta^\alpha$.
Thus $S_\delta \subset B(S, \delta^\alpha)$. Let $\eps = 16^{-\alpha}$. As in the proof of Theorem~\ref{thm:targetMAB}, we can show that $S_\delta$ can be covered by
	$c\, \eps^{-O(d^*)}\,\delta^{-\alpha d} $
sets of diameter $\eps\, \delta^\alpha$. Each of these sets has $L_f$-diameter at most $f(\eps\, \delta^\alpha)$, which is at most $\delta/16$.
\end{proof}

\begin{note}{Remarks.}
This theorem includes Theorem~\ref{sec:targetMAB} as a special case $f(x) = x$. Like the latter, this theorem is useful when the target set is a low-dimensional subset of the metric space. \end{note}

We consider extensions to more general shape functions and to strategy sets which do not contain $S$:

\begin{itemize}

\item Suppose the shape function $f$ satisfies the following constraints for some constants $\alpha \geq \alpha^* > 0$:
$$  \forall x\in(0,1] \quad
	g(x) \geq x^{1/\alpha}
	\text{ and }
	g(x) \geq 2^{1/\alpha^*} g(\tfrac{x}{2}),
$$
where $g(x) = f(x)-f(0)$. Then for
	$\beta = 1+ 1_{\{ f(0)>0 \}}$
we have
$$
 R_{\A}(t) \leq (c\, 2^{O(\alpha^* d^*/\alpha)}
 		\log^2 t)\; \; t^{1-1/(\beta+ \alpha\,d)}\;\;\;
	\text{for all $t$}.
$$

\item Consider the setting in which the strategy set $Y$ is a proper subset of the metric space $(L,X)$ and does not contain the target set $S$. If $L(u^*,S) = 0$ for some $u^*\in Y$ then the guarantee~\refeq{eq:targetMAB-shape} holds as is. In general, if we restrict the shape function to $f(x) = c+x^{1/\alpha}$, $\alpha\in (0,1]$ then
$$ R_{\A}(t) \leq (c\, 2^{O(d^*)}\log^2 t)\; \; t^{1-1/(2+d)} \;\;\;
	\text{for all $t$},
$$
where $d$ is the $c$-covering dimension of the set $S^* = B(S,r)$, $r = L(Y,S)$. Moreover, one can prove similar guarantees with $d^*$ being the doubling dimension of an open neighborhood of $S^*$, rather than that of the entire metric space
\end{itemize}

\subsection{Heavy-tailed reward distributions}
\label{sec:BerryEsseen}

Consider the Lipschitz MAB problem and let $X_n(v)$ be the reward from the $n$-th trial of strategy $v$. The current problem formlulation restricts $X_n(v)$ to support $[0,1]$. In this section we remove this restriction. In fact, it suffices to assume that $X_n(v)$ is an independent random variable with mean $\mu(v)\in [0,1]$ and a uniformly bounded bounded absolute third moment. Note that different trials of the same strategy $\{X_n(v): n\in \N\}$ do not need  to be identically distributed.

\begin{theorem}\label{thm:zooming-dim-BerryEsseen}
Consider the \standardMAB. Let $X_n(v)$ be the reward from the $n$-th trial of strategy $v$. Assume that each $X_n(v)$ is an independent random variable with mean $\mu(v)\in [0,1]$ and furthermore that
	$ E\left[\,|X_n(v)|^3\,\right] < \rho$
for some constant $\rho$. Then there is an algorithm \A\ such that if for some $c$ the problem instance has $c$-zooming dimension $d$ then
\begin{equation}\label{eq:thm-zooming-dim-2}
	R_\A(t) \leq a(t)\; t^{1-1/(3d+6)}\;\;
	\text{for all $t$, where $a(t) = O(c \rho \log t)^{1/(3d+6)}$.}
\end{equation}
\end{theorem}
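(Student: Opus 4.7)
The plan is to fit the heavy-tailed setting into the abstract framework of Section~\ref{sec:conf-rad} by constructing a confidence radius $\hat r_t(v)$ that satisfies the three requirements of Definition~\ref{def:conf-rad} (clean-phase concentration, smoothness, and $(\beta,c_0)$-goodness) and then invoke Theorem~\ref{thm:zooming-dim-generalized}. The replacement ingredient, in place of the Chernoff bound used to prove Claim~\ref{cl:conf-rad} in the bounded setting, is the non-uniform Berry--Esseen theorem~\cite{BerryEsseen-survey05}: for i.i.d.\ random variables with mean $\mu$, variance $\sigma^2$, and centered third absolute moment at most $\rho$,
\[
\bigl|\Pr[(S_n - n\mu)/(\sigma\sqrt n)\leq x] - \Phi(x)\bigr|
\;\leq\; \frac{K\,\rho}{\sigma^3\sqrt n\,(1+|x|)^3}
\]
for an absolute constant $K$. (The hypothesis $E|X_n(v)|^3 \leq \rho$ combined with $\mu(v)\in[0,1]$ controls the centered third moment up to a universal constant.) Specialising to the tail of the sample mean with $x = r\sqrt n/\sigma$ yields a polynomial large-deviation bound of the form $\Pr[|\bar X_n - \mu|>r] \lesssim \rho/(r^3 n^2)$, which is the crucial inequality driving the rest of the analysis.

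My first step would be to invert this tail bound to define $\hat r_t(v)$ as (essentially) the smallest $r$ for which the Berry--Esseen right-hand side is at most $8^{-i_\Phase}$; concretely something of the form
\[
\hat r_t(v) \;=\; \Theta\!\Bigl( \bigl(\rho\, i_\Phase\,/\,n_t(v)\bigr)^{1/3}\Bigr),
\]
with the exponent $1/3$ reflecting the cube appearing in the Berry--Esseen error. Property~(ii) of Definition~\ref{def:conf-rad} is immediate from this algebraic form. Property~(i) follows by the Berry--Esseen bound itself once the constants are matched, although one must handle the regime where $n_t(v)$ is so small that the non-uniform bound is vacuous: here I would either declare $\hat r_t(v)=1$ and bound the regret contribution of ``immature'' strategies separately, or combine the Berry--Esseen error with the Gaussian-tail term $2(1-\Phi(r\sqrt n/\sigma))$ to cover the whole range of $n_t(v)$.

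The second step is to extract the $(\beta,c_0)$-good property. Substituting $\Delta(v) \leq 4\hat r_t(v)$ and solving for $n_t(v)$ converts the $n_t^{-1/3}$ scaling of the confidence radius into a bound of the shape $n_t(v) \leq O(c_0\,i_\Phase)\,\Delta(v)^{-\beta}$ with $c_0 = O(\rho)$; inserting this into Theorem~\ref{thm:zooming-dim-generalized}(a) yields a regret bound with the claimed shape $R_\A(t) \leq a(t)\cdot t^{1-1/(3d+6)}$ and multiplier $a(t) = O(c\rho\log t)^{1/(3d+6)}$. Finally, as in the proof of Theorem~\ref{thm:zooming-dim}, I would sum the per-phase bounds using Claim~\ref{cl:conf-rad} (now restated with the adjusted failure probability $8^{-i_\Phase}$) to pass from a per-phase regret bound to the overall bound~\refeq{eq:thm-zooming-dim-2}.

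The main obstacle is precisely the weakness of the concentration: unlike the Chernoff bound, Berry--Esseen gives only a polynomial tail, so the confidence radius decays as $n_t(v)^{-1/3}$ rather than $n_t(v)^{-1/2}$, yielding a strictly worse exponent, and for small $n_t(v)$ the inequality is actually useless, which forces the small-sample patch described above. The delicate accounting of the interaction between the Berry--Esseen and Gaussian-tail regimes, together with the insertion of the resulting $(\beta, O(\rho))$-good confidence radius into the zooming-dimension covering argument, is what produces the exponent $3d+6 = 3(d+2)$ and the additional factor of $\rho$ (and the single $\log t$) in the final bound.
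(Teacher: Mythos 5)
Your overall instinct---replace the Chernoff bound by the non-uniform Berry--Esseen theorem and feed the result into the machinery of Section~\ref{sec:adaptive-exploration}---matches the paper's starting point, but the specific route you take breaks down at its central step, and the paper's argument explicitly routes around this breakdown. The problem is your claim that the radius $\hat r_t(v) = \Theta\bigl((\rho\, i_\Phase/n_t(v))^{1/3}\bigr)$ satisfies property (i) of Definition~\ref{def:conf-rad}, i.e.\ fails with probability at most $8^{-i_\Phase}$. The Berry--Esseen tail term is $O(\rho/(r^3 n^2))$; plugging in your radius gives failure probability $O(1/(i_\Phase\, n_t(v)))$, which is only polynomially small, nowhere near $8^{-i_\Phase}$. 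Worse, this is not fixable by adjusting constants: under only a third-moment hypothesis the sample mean genuinely can have polynomial tails, so forcing the failure probability down to $8^{-i_\Phase} = T^{-3}$ (with $T=2^{i_\Phase}$) requires $r = \Omega(\rho^{1/3}\, T / n^{2/3})$, which is at least $\Omega(\rho^{1/3} T^{1/3})$ for every $n\leq T$, i.e.\ a vacuous radius throughout the phase. This is exactly the obstruction the paper names (``this inequality gives much higher failure probability than Claim~\ref{cl:conf-rad}; in particular, we cannot take a union bound over all active strategies''), and it is why Theorem~\ref{thm:zooming-dim-generalized} cannot be invoked as a black box. The paper instead keeps the $n^{-1/2}$ scaling, takes $r_t(v) = \Theta(t^\alpha)/\sqrt{n_t(v)}$ with failure probability only $O(t^{-3\alpha})$ as in \refeq{eq:conf-radius-BerryEsseen}, replaces the clean-phase event by ``$\rho$-failure events'' restricted to strategies whose gap exceeds a threshold (so that the sparsity of active strategies bounds the number of events one must union over, while small-gap strategies are charged to regret directly), and then optimizes jointly over $\alpha$ and the threshold.

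A second, independent sign that your plan cannot work as stated: even granting property (i), your radius would be $(3,O(\rho))$-good, and Theorem~\ref{thm:zooming-dim-generalized}(a) with $\beta=3$ yields exponent $1-1/(d+3)$, not the claimed $1-1/(3d+6)$; no fixed $\beta$ satisfies $d+\beta = 3(d+2)$ for all $d$. The factor of $3$ multiplying $(d+2)$ in the theorem does not come from a cube-root confidence radius---it comes from trading off the clean-run regret against the polynomial failure probability when choosing $\alpha$. Your final paragraph asserts that ``the claimed shape'' falls out of the substitution, but this is precisely the step where the accounting differs from, and cannot be reduced to, the $(\beta,c_0)$-framework.
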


The proof relies on the non-uniform Berry-Esseen theorem (e.g. see~\cite{BerryEsseen-survey05} for a nice survey) which we use to obtain a tail inequality similar to Claim~\ref{cl:conf-rad}: for any $\alpha>0$
\begin{equation}\label{eq:conf-radius-BerryEsseen}
 \Pr[ |\mu_t(v)-\mu(v)| > r_t(v) ] < O(t^{-3\alpha}),
	\text{where $r_t(v) = \Theta(t^\alpha)/ \sqrt{n_t(v)}$}.
\end{equation}
However, this inequality gives much higher failure probability than Claim~\ref{cl:conf-rad}; in particular, we cannot take a union bound over all active strategies. Accordingly, we need a more refined version of Theorem~\ref{thm:zooming-dim-generalized} which is parameterized by the failure probability in~\refeq{eq:conf-radius-BerryEsseen}. In the analysis, instead of the failure events when the phase is not clean (see Definition~\ref{def:clean-phase}) we need to consider the \emph{$\rho$-failure events} when the tail bound from~\refeq{eq:conf-radius-BerryEsseen} is violated by some strategy $v$ such that $\Delta(v)>\rho$. Then using the technique from Section~\ref{sec:adaptive-exploration} we can upper-bound $R_\A(T)$ in terms of $T$, $d$, $\rho$ and $\alpha$ and choose the optimal values for $\rho$ and $\alpha$.


\newpage
\begin{small}
\bibliographystyle{plain}
\bibliography{bib-abbrv,bib-bandits,bib-rwalks}
\end{small}

\end{document}